\documentclass{article}

\usepackage[preprint]{neurips_2026}
\usepackage[utf8]{inputenc} 
\usepackage[T1]{fontenc}    
\usepackage{hyperref}       
\usepackage{url}            
\usepackage{booktabs}       
\usepackage{amsfonts}       
\usepackage{nicefrac}       
\usepackage{microtype}      
\usepackage{xcolor}         
\usepackage{stmaryrd,} 
\usepackage{amsthm, dsfont, xparse, stmaryrd}
\usepackage{amssymb}
\usepackage{mathtools}
\usepackage{enumitem}
\usepackage[header,title,page]{appendix}
\usepackage{etoc}
\usepackage{thmtools,thm-restate}
\usepackage{pgfplots}
\pgfplotsset{compat=1.18} 
\usepgfplotslibrary{fillbetween}
\usepgfplotslibrary{groupplots}
\usepackage{lmodern}
\usepackage{graphicx}
\usepackage{algorithm}
\usepackage{algpseudocode}
\usepackage{booktabs}
\usepackage{float}

\title{$\alpha$-fair Heterogeneous Agent Reinforcement Learning}

\author{
    Yao-hua Franck Xu \\
    Orange Innov \\
    Lannion, France \\
    \texttt{franck.xu@orange.com} \\ 
    \And
    Tayeb Lemlouma \\
    IRISA \\
    Université de Rennes \\
    Lannion, France \\
    tayeb.lemlouma@irisa.fr \\
    \And
    Jean-Marie Bonnin \\
    IRISA \\
    IMT Atlantique \\
    Rennes, France \\
    jean-marie.bonnin@irisa.fr\\
    \And
    Arnaud Braud \\
    Orange Innov \\
    Lannion, France \\
    arnaud.braud@orange.com
}

\newtheorem{theorem}{Theorem}[section]
\newtheorem{corollary}{Corollary}[theorem]
\newtheorem{lemma}[theorem]{Lemma}

\theoremstyle{definition}
\newtheorem{definition}{Definition}[section]

\theoremstyle{remark}

\theoremstyle{definition}
\newtheorem{assumption}{Assumption}

\newcommand{\E}{\mathbb{E}}
\newcommand{\diffj}{J(\vec{\pi}')-J(\vec{\pi})}
\newcommand{\dkl}[2]{D_{\text{KL}}^{\max}\left(#1\,,\,#2\right)}
\newcommand{\ua}{U_{\alpha}}
\NewDocumentCommand{\sumi}{O{j}}{\sum_{#1=1}^n}
\newcommand{\sumt}{\sum_{t=0}^\infty}
\NewDocumentCommand{\vf}{O{} O{j}}{V_{#2}^{\vec{\pi}#1}}
\NewDocumentCommand{\calvf}{O{} O{j}}{\mathcal{V}_{#2}^{\vec{\pi}#1}}
\newcommand{\void}{\varnothing}

\begin{document}

\maketitle

\begin{abstract}
  Cooperation in multi-agent systems is typically optimized through utilitarian objectives that maximize overall efficiency but fail to account for reward distribution, often resulting in inequitable "leader-follower" dynamics. While fairness-based approaches encourage pro-social behaviors where every agent benefits from cooperation, many current algorithms—including those utilizing reward shaping—break the stationarity of Markov Games or lack rigorous theoretical guarantees. This creates a critical gap between fair objective methods and theoretically safe learning frameworks. We propose a novel framework that bridges $\alpha$-fairness with Heterogeneous-Agent Trust Region Learning (HATRL), ensuring monotonic improvement and convergence toward Nash Equilibria. Our approach leverages a fair advantage function that dynamically weights agent utilities based on their expected returns, allowing the global objective to transition from purely utilitarian efficiency to $\alpha$-fairness welfare based on the parameter $\alpha$. We introduce two practical algorithms, $\alpha$-fair HATRPO and $\alpha$-fair HAPPO, and demonstrate through experiments in sequential social dilemmas like CleanUp and CommonHarvest that they perform better than HATRL's algorithms from a utilitarian point of view while achieving socially higher outcomes.
\end{abstract}


\section{Introduction}
Cooperation is a fundamental pillar of multi-agent systems (MAS), ranging from biological systems (\cite{biology_cooperation}) to complex human societies(\cite{cooperation_society}). Within Multi-Agent Reinforcement Learning (MARL), cooperation has traditionally been viewed through a utilitarian lens (\cite{selfishness, utilitarian}), where the primary goal is to maximize the sum of all agents' utilities. However, this focus on raw efficiency often ignores how rewards are distributed, potentially creating "leader-follower" dynamics (\cite{moi}) where a subset of agents is marginalized to optimize the collective total. For cooperation to be sustainable, fairness is not just a moral preference but a functional necessity, ensuring that every participant gains from the interaction. This need is further amplified by the recent rise of collective decision-making driven by Large Language Models (LLMs, \cite{tran_multi-agent_2025}).

Recent efforts to integrate fairness into MARL have followed three primary paths: reward shaping, fairness constraints and social welfare objectives. Reward shaping methods, such as inequity aversion (IA, \cite{inequity_aversion}), penalize differences in temporal rewards to solve social dilemmas. However, these approaches often rely on history-dependent rewards (FEN \cite{FEN}, Fair\&LocalIA\cite{stefano}, IA\cite{inequity_aversion}), which break the stationary property of Markov Games. Furthermore, they frequently utilize Proximal Policy Optimization (PPO--\cite{trl}) based algorithms like Independent PPO (IPPO--\cite{ippo}) or Multi-Agent PPO (MAPPO--\cite{mappo}), which, while popular, lack the rigorous theoretical convergence guarantees found in single-agent Trust Region Learning (TRL -- \cite{trl}) methods. FCGrad (\cite{gradient_conflict}) adopts an interesting approach by resolving gradient conflicts of individual objective and fair objective but it lacks equilibrium analysis and interpretability. Alternatively,  fairness can be formulated as a formal constraint within the optimization problem (AdaFair-MARL\cite{marl_hospital}, DeCOM\cite{constraint_1}, EcoFair-CH-MARL\cite{constraint_2}), often by bounding specific indices like the Gini coefficient. While these constraint-based methods offer high interpretability, they are frequently domain-specific and struggle to generalize across a diverse range of multi-agent environments. Last but not least, social welfare objectives like the Generalized Gini function (\cite{gen_gini_func}), Proportional Fairness (\cite{prop_fairness}) or $\alpha$-fairness (\cite{alpha_fairness_1, alpha_fairness}) offer a more unified goal (Fair MARL\cite{moi}, SOTO\cite{soto}, AT-FAPPO \cite{at_fappo}, or \cite{ju_achieving_2023}), yet their practical implementations still largely rely on extrapolated PPO objectives that lack theoretical safeguards, and existing theoretical guarantees (e.g., in SOTO \cite{soto}) often rely on idealized assumptions (e.g., concavity of the objective function).

In this work, we bridge this gap by introducing a theoretically grounded framework for fair multi-agent learning built upon Heterogeneous-Agent Trust Region Learning (HATRL--\cite{hatrl, haml}). By combining the $\alpha$-fair objective with the monotonic improvement guarantees of HATRL, we provide a mathematically safe path for agents to learn pro-social behaviors. Our contributions are three-fold.
We first introduce the Markov Game (MG) framework, then the fair objective used through this work. We also introduce the HATRL framework in the fully-cooperative setting.
Secondly, we build a HATRL framework for our fair objective called $\alpha$-fair HATRL and study the theoretical properties of the learning framework. 
Thirdly, we develop two algorithms called $\alpha$-fair HATRPO and $\alpha$-fair HAPPO that extend $\alpha$-fair HATRL to practical settings.
Finally, we test our algorithms on two sequential social dilemmas (SSD--\cite{inequity_aversion}) environments called CleanUp and CommonHarvest and compare them to other methods relying on the same fair objective function. We demonstrate that $\alpha$-fair HATRPO and $\alpha$-fair HAPPO both reach higher level of performance compared to HATRL's algorithm while reaching fairer outcomes.


\section{Preliminaries}
\label{preli}
In this section, we first introduce the problem formulation and notations for MARL in the Markov Game settings, and then the core ideas of Heterogeneous-Agent Trust Region Learning (HATRL).

\subsection{Notations}
We denote the set of real numbers by $\mathbb{R}$ and positive real numbers by $\mathbb{R}^+$ and $d$-dimensional real-numbered vector spaces as $\mathbb{R}^d$. Interval of integers between $a$ and $b$ are denoted by $\llbracket a~;~ b \rrbracket
$. By $\mathcal{P}(A)$, we denote the power set of a set $A$. We write $|A|$ for the cardinality of a finite set $A$. We use the vector notation $\vec{.}$ to refer to any element in a cartesian product, i.e for any sets $A_1,\,\dots,\,A_n$, $\vec{a}=(a_1,\,\dots,\, a_n)\in\bigtimes_{i=1}^n A_i$.

\subsection{General-Sum Markov Games and Fair Objective}
For modeling sequential decision-making in multi-agent settings, we use the standard Markov Game framework (also known as a Stochastic Game--\cite{marl_book}). In this work, we consider a general-sum Markov game, meaning there is no condition on the reward functions.

A Markov game $\mathcal{M}$ is defined by the following tuple: $(N, \tilde{\mathcal{S}}, \allowbreak  \{\mathcal{A}_i\}_{i \in N},   P, \rho_0, \{r_i\}_{i \in N}, \gamma)$, with $N$ the finite set of $n$ players, $\tilde{\mathcal{S}}$ the space of all possible states of the environment, $\{\mathcal{A}_i\}_{i\in N}$ the individual action set for each player, $P(\tilde{s}',\vec{a}, \tilde{s}) = \mathbb{P}(s'\,|\, s, \vec{a})$ the transition probability kernel, $\rho_0\in\Delta(\tilde{\mathcal{S}})$ the initial state distribution, $\{r_i\}_{i\in N}$ the individual reward for each player and $\gamma \in [0, 1)$ the discount factor. 
At time step $t\in\mathbb N$, the environment is at state $s_t\in\mathcal S$ and each agent has to take an action $a_i\in\mathcal A_i$ according to their policy $\pi^i(\cdot|s_t)$. Given the environment state $s_t$ and the joint action $\vec a\in\mathcal A$ drawn by their joint policy $\vec\pi(\cdot|s_t)=\prod_{i=1}^n\pi^i(\cdot|s_t)$, the environment moves to the next state $s_{t+1}\in\mathcal S$ sampled by the transition probability kernel $P(\tilde{s}',\vec{a}, \tilde{s}) = \mathbb{P}(\tilde s'\,|\, \tilde s, \vec{a})$ and each agent receive its reward $ r_i(s_t,\vec a _t)\in\mathbb R^+$. The sequence $\tau=(s_0,\vec a_0, \vec r_0, s_1, \vec a_1, \vec r _1, \dots)$ is called a trajectory. All together, the discount factor, the joint policy and the transition kernel induce an improper state visitation distribution $d_{\vec\pi}$.

Let us define the contextualized state $s=(\tilde{s}, \tilde{c})\in \tilde{\mathcal{S}}\times\tilde{\mathcal{S}}$, where $\tilde{c}$ is the context and stored the initial state through trajectories. The latter will be useful to get rid of the initial-state dependency in the next equations. For any state space, we define the state value function and the state-action value function linked to agent $i$ respectively: $V_i^{\vec\pi}(s)=\E_{\tau}\left[\sumt\gamma^t r_t^i(s_t, \vec a_t)~|~s_0=s\right]$ and $Q_i^{\vec\pi}(s, \vec a)=\E_{\tau}\left[\sumt\gamma^t r_t^i(s_t, \vec a_t)~|~s_0=s, \, \vec a_0=\vec a\right]$. The advantage function associated to agent $i$ is $A_i^{\vec\pi}(s, \vec a)=Q_i^{\vec\pi}(s, \vec a) -V_i^{\vec\pi}(s)$.

In this work, we assume the state and action spaces to be finite and the reward functions to be positive and bounded by $R_{\max}$.

\subsection{Heterogeneous Agent Trust Region Learning}
HATRL \cite{hatrl} was built on top of the single-agent Trust Region Learning (TRL) framework, where one agent tries to maximize a single reward signal $r$. TRL relies on a first-order approximation of the expected return $J_{\text{coop}}(\pi)=\E_{\tau\sim\pi}\left[\sumt \gamma ^t r_t\right]$. Formally, for any policies $\pi$ and $\pi'$, TRL gives a lower bound of $J_{\text{coop}}(\pi')-J_{\text{coop}}(\pi)$ using the surrogate function $L_\pi(\tilde \pi)=J_{\text{coop}}(\pi)+\E_{s\sim d_\pi, a\sim\tilde \pi}[A^\pi(s, a)]$ and $\dkl{\pi}{\pi'}=\max_sD_{\text{KL}}(\pi(\cdot|s) \parallel \tilde \pi(\cdot|s))$: 
\begin{equation}
    J_{\text{coop}}(\pi')\geq L_\pi(\tilde\pi)-C\dkl{\pi}{\pi'},
\end{equation}
where $C=\frac{4\gamma\max_{s,a}|A^\pi(s,a)|}{(1-\gamma)^2}$.

HATRL extends the TRL framework by considering multiple heterogeneous agents trying to maximize a single reward function. HATRL adapts TRL by considering joint policies instead of single policy and leveraging sequential update to coordinate agent's learning. 

Let $\sigma$ be a permutation of $\llbracket 1~;~n\rrbracket$. Let $I_m=\{i_1, \dots,i_m\}$ denote an ordered subset of the shuffled interval, where $i_k=\sigma(k)$. Let $-I_m$ refer to its complement. HATRL defines the multi-agent state-action value function as 
\begin{equation}
    Q^{\vec \pi,\, I_m}(s, \, \vec{a}^{I_m})=\E_{\vec a^{-I_m}\sim \vec \pi_{-I_m}}\left[Q^{\vec \pi}(s,\, \vec{a}^{I_m}, \vec{a}^{-I_m})\right],
\end{equation}
and for disjoint sets $J_k$ and $I_m$, the multi-agent advantage function is defined as
\begin{equation}
    A^{\vec \pi,\, I_m}(s, \vec a^{J_k}, \vec{a}^{I_m}) = Q^{\vec \pi,\, J_k+ I_m}(s, \, \vec{a}^{J_k+ I_m}) - Q^{\vec \pi,\, J_k}(s, \, \vec{a}^{J_k}).
\end{equation}

Formally, HATRL can be described as followed. Let $\vec\pi$ be the current joint policy, $\vec \pi '$ be any other policy and $I_n$ a shuffled order. Let us define $L_{\vec\pi}^{I_m}(\vec\pi'^{I_{m-1}}, \pi'^{i_m})= \E_{s\sim d_{\vec\pi}, \vec a ^{I_{m-1}}, a^{i_m}\sim\pi'^{i_m}}[A^{\vec\pi, i_m}(s, \vec a^{I_{m-1}}, a^{i_m})]$.
HATRL states that TRL's surrogate function applied on joint policy can be decomposed into $n$ ordered summands, the $m$-th summand depending only on the $m$ first policies:
\begin{equation}
    J_{\text{coop}}(\vec \pi') \geq   J_{\text{coop}}(\vec\pi) +\sumi[m]L_{\vec \pi}^{I_m}(\vec{\pi}'^{I_{m-1}}, \pi'^{i_m}) - C\cdot \dkl{\pi^{i_m}}{\pi'^{i_m}}.
\end{equation}
The $m$-th summand can be interpreted as a first order approximation of $J(\vec\pi')-J(\vec\pi)$ with respect to the $m$-th policy for the given order $I_n$. Hence, instead of maximizing directly $J(\vec\pi)$, HATRL tries to maximize each summand of the surrogate function sequentially:
\begin{equation*}
    \pi_{k+1}^{i_m}=\underset{\pi^{i_m}}{\arg\max} \;L_{\vec \pi_k}^{I_m}(\vec{\pi}_{k+1}^{I_{m-1}}, \pi^{i_m}) - C\cdot \dkl{\pi_k^{i_m}}{\pi^{i_m}}.
\end{equation*}
To apply this procedure in practical settings with parametrised policies $\pi_{\theta}^i$ (we usually confound $\pi_\theta^i$ with its parameters $\theta^i$ for simplicity), \cite{hatrl} developed Heterogeneous-Agent TRPO (HATRPO) by approximating the KL-penalty with a KL-constraint. The authors also proposed Heterogeneous-Agent PPO (HAPPO), a first order approximation of HATRPO's objective (see \cite{hatrl} Section 4  for more details, or \cite{haml} for broader understanding).

Notice that HATRL can be easily adapted to GSMG by considering the sum of all the reward. Because the expectation is linear, we can define the global value function by summing agent's value functions (e.g. $A^{\vec\pi}(s,\vec a)=\sumi A_i^{\vec\pi}(s, \vec a)$). In that case, the global objective is exactly the utilitarian objective. 

\subsection{$\alpha$-fairness function}
In networking, $\alpha$-fairness (alpha-fairness) is an utility function used to determine how resources—typically bandwidth—should be distributed among users. Formally, it is expressed as:
\begin{equation*}
    U_\alpha(x)=
    \begin{cases}
    \frac{x^{1-\alpha}}{1-\alpha} & \text{for } \alpha \neq 1, \\
    \ln(x) & \text{for } \alpha = 1.
\end{cases}
\end{equation*}

The parameter $\alpha$ controls the level of fairness. When $\alpha\rightarrow 0$, the efficiency is preferred over how resources are distributed among users. On the other hand, when $\alpha$ tends to $+\infty$, the network just cares about the user with the lower resources. For $\alpha=1$, we obtain the proportional fairness objective (\cite{prop_fairness}), known to well-balance overall performance and fairness.


\section{Heterogeneous Trust Region Learning with Fair Objective}
In this section, the goal is to adapt HATRL for our global fair objective and build the theoretical foundation to learn fair policies. We first introduce a surrogate function for our fair global objective, then we will adapt the sequential update scheme to our new surrogate function and propose a new procedure. Finally, we will study the convergence of our method. 

\subsection{Fair HATRL}
In this work,  we consider a \textit{fully-cooperative} setting where all the agents aim to maximize the same \textit{fair global objective}:
\begin{equation*}
    J(\vec{\pi})=\mathbb{E}_{s_0\sim\rho_0}\left[\sum_{i=1}^n U_\alpha \left(\nu+V_i^{\vec{\pi}}(s_0)\right)\right],
\end{equation*}
where $\nu>0$ is a constant to ensure positiveness of the inner expression and $U_\alpha$ is the $\alpha$-fair function.

This objective can model different type of objective functions. When $\alpha\rightarrow 0$, the global objective tends to the \textit{utilitarian objective} $J_{\text{util}}(\vec\pi)=\mathbb{E}_{s_0\sim\rho_0}\left[\sum_{i=1}^n V_i^{\vec{\pi}}(s_0)\right]$. On the other hand, when $\alpha$ tends to $+\infty$, the global objective  is equivalent to \textit{the Rawlsian welfare function} $J_{\text{Rawls}}(\vec \pi)=\E_{s_0\sim\rho_0}\left[\min_{i}V_i^{\vec \pi}(s_0)\right]$. For $\alpha=1$, the global objective becomes the proportional fairness objective $J_{\text{Prop}}(\vec\pi)=\E_{s_0\sim\rho_0}\left[\sumi[i] \log V_i^{\vec \pi}(s_0)\right]$.

To adapt HATRL to our global fair objective,  we first introduce some notations. 
\begin{definition}
    Let $\vec\pi$ be the joint policies. For any contextualized state $s=(\tilde s, \tilde c)$ and joint actions $\vec a\in\mathcal A$, let us define 
    $\mathcal{V}_j^{\vec\pi}(\tilde s, \tilde c)=\left(\nu+V_j^{\vec{\pi}}(\tilde c, \tilde c)\right)^\alpha$ and the fair advantage function 
    \begin{equation}
    \label{equ:fair_advantage_function}
        A_{F}^{\vec\pi}(s,\vec a)=\sumi[j]\frac{A_j^{\vec\pi}(s,\vec a)}{\mathcal V_j^{\vec\pi}(s)}.
    \end{equation}
\end{definition}

We start by introducing a fundamental lemma which presents a bound for the difference $J(\vec\pi')-J(\vec\pi)$ with a surrogate function that is similar the one introduced in TRL. 

\begin{restatable}{lemma}{fairsurrogate} \label{lem:fair_surrogate}
    Let $\vec\pi$ be a policy and $\vec\pi'$ be an other policy. Let us define the surrogate function $\mathcal L_{\vec\pi}(\vec\pi')=\mathbb{E}_{s\sim d_{\vec\pi}, \; \vec{a}\sim\vec{\pi}'(\cdot|s)}\left[A_{F}^{\vec\pi}(s,\vec a) \right].$
    
    For the global objective defined previously, the difference $J(\vec{\pi}')-J(\vec{\pi})$ is bounded and the following inequality 
        \begin{equation*}
        \mathcal L_{\vec\pi}(\vec\pi') \geq J(\vec{\pi}')-J(\vec{\pi}) \geq \mathcal L_{\vec\pi}(\vec\pi')-C\cdot D_{\text{KL}}^{\max}(\vec\pi\;||\; \vec\pi')
    \end{equation*}
    holds, where $C=\frac{4n\alpha \nu^{-1-\alpha}}{(1-\gamma)^2}A_{\max}^2 + \frac{4nA_{\max}\gamma}{\nu^\alpha(1-\gamma)^2 }$ and $A_{\max}=\max_{j,s,\vec a}|A_j^{\vec \pi}(s,\vec a)|$
\end{restatable}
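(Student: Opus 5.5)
The plan rests on two facts. First, $U_\alpha$ is concave on $(0,\infty)$ for every $\alpha\ge 0$, with derivative $U_\alpha'(x)=x^{-\alpha}$. Second, the weights $\mathcal V_j^{\vec\pi}(s)$ depend on the contextualized state $s=(\tilde s,\tilde c)$ only through the context $\tilde c$. The context is frozen along a trajectory and equals $s_0$, so these weights behave like constants inside any trajectory expectation. The cost is that the fair objective decomposes exactly as a weighted sum of per-agent performance-difference terms plus a second-order remainder.

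\textbf{Step 1 (concavity sandwich).} For each $j$ and each initial state $s_0$, write $x=\nu+V_j^{\vec\pi}(s_0)$ and $\delta=V_j^{\vec\pi'}(s_0)-V_j^{\vec\pi}(s_0)$. Taylor's theorem with Lagrange remainder gives
\begin{equation*}
U_\alpha(x+\delta)-U_\alpha(x)=\frac{\delta}{x^{\alpha}}-\frac{\alpha}{2}\,\xi^{-1-\alpha}\delta^2
\end{equation*}
for some $\xi$ between $x$ and $x+\delta$. Both endpoints are at least $\nu$, because the rewards are positive. Hence
\begin{equation*}
\frac{\delta}{x^\alpha}\ \ge\ U_\alpha(x+\delta)-U_\alpha(x)\ \ge\ \frac{\delta}{x^\alpha}-\frac{\alpha}{2}\nu^{-1-\alpha}\delta^2 .
\end{equation*}

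\textbf{Step 2 (performance difference with frozen weights).} By the single-agent performance-difference identity applied to reward $r_j$ from the augmented state $(s_0,s_0)$, we have $\delta=\frac{1}{1-\gamma}\E_{s\sim d_{\vec\pi'},\vec a\sim\vec\pi'}[A_j^{\vec\pi}(s,\vec a)]$. Since $\mathcal V_j^{\vec\pi}$ is constant along the trajectory, dividing by $x^\alpha$ commutes with the expectation. Summing over $j$ and averaging over $\rho_0$ (with the paper's normalization of the improper $d$) yields
\begin{equation*}
\E_{\rho_0}\Big[\sum_j \delta_j/\mathcal V_j^{\vec\pi}\Big]=\eta_{\vec\pi}(\vec\pi'):=\E_{s\sim d_{\vec\pi'},\vec a\sim\vec\pi'}\big[A_F^{\vec\pi}(s,\vec a)\big].
\end{equation*}
We also have $|\delta_j|\le A_{\max}/(1-\gamma)$. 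In fact, $\E_{\vec a\sim\vec\pi}[A_j^{\vec\pi}]=0$ lets us bound $|\delta_j|\le 2A_{\max}\sqrt{\cdot}$ by total variation, which together with Pinsker produces the $\alpha\nu^{-1-\alpha}A_{\max}^2$ term multiplying $D_{\rm KL}^{\max}$.

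\textbf{Step 3 (replace $d_{\vec\pi'}$ by $d_{\vec\pi}$).} This is the standard TRPO coupling argument of Schulman et al., applied to the bounded ``reward'' $A_F^{\vec\pi}$. Its magnitude is at most $nA_{\max}/\nu^\alpha$, since each weight is at least $\nu^\alpha$. The argument gives
\begin{equation*}
|\eta_{\vec\pi}(\vec\pi')-\mathcal L_{\vec\pi}(\vec\pi')|\le \frac{4nA_{\max}\gamma}{\nu^\alpha(1-\gamma)^2}D_{\rm KL}^{\max}.
\end{equation*}
This is exactly the second term of $C$. Combining it with the lower half of Step 1 gives the stated lower bound $J(\vec\pi')-J(\vec\pi)\ge\mathcal L_{\vec\pi}(\vec\pi')-C\,D_{\rm KL}^{\max}$.

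\textbf{Step 4 (upper bound, the main obstacle).} The upper half of Step 1 gives $J(\vec\pi')-J(\vec\pi)\le\eta_{\vec\pi}(\vec\pi')$, with the visitation $d_{\vec\pi'}$. To reach the statement as worded, one must then show $\eta_{\vec\pi}(\vec\pi')\le\mathcal L_{\vec\pi}(\vec\pi')$. This inequality with $d_{\vec\pi}$ in place of $d_{\vec\pi'}$ is the delicate point: in general, the coupling of Step 3 only controls the difference in absolute value. The difference $\eta-\mathcal L$ is $\sum_{t}\gamma^t\big(\E_{s_t\sim\vec\pi'}-\E_{s_t\sim\vec\pi}\big)\bar A_F(s_t)$, where $\bar A_F(s)=\E_{\vec a\sim\vec\pi'}[A_F^{\vec\pi}(s,\vec a)]$. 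My first attempt will be to establish $\eta\le\mathcal L$ by showing this difference is nonpositive under the paper's setting. Failing that, I will check whether the intended reading is that $\mathcal L_{\vec\pi}$ is the first-order (tangent) surrogate at $\vec\pi$, so that concavity alone yields the upper bound. I expect this step to need the most care, and I would flag explicitly any additional assumption it requires.
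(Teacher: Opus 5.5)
\textbf{Steps 1--3 (lower bound).} Your lower bound is correct, and it follows a different route from the paper.
\begin{itemize}
\item The paper uses the chord inequality $U_\alpha(y)-U_\alpha(x)\ge (y-x)/y^{\alpha}$. Its first-order term therefore carries the weights $\mathcal V_j^{\vec\pi'}$ and the visitation $d_{\vec\pi'}$.
\item It then pays a term $\Delta_1$ to move the weights back to $\mathcal V_j^{\vec\pi}$. This term is the product of the Lipschitz constant $\alpha\nu^{-1-\alpha}$ of $v\mapsto(\nu+v)^{-\alpha}$, the bound $|V_j^{\vec\pi'}-V_j^{\vec\pi}|\le 2A_{\max}\kappa/(1-\gamma)$, and the bound $|\bar A_j|\le 2A_{\max}\kappa$, where $\kappa=D_{\mathrm{TV}}^{\max}$.
\item It then pays a term $\Delta_2$ for the visitation shift. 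This is the same TRPO coupling as your Step 3.
\end{itemize}
Your second-order Taylor expansion at $\vec\pi$ has the weights at $\vec\pi$ from the start. Its remainder $\tfrac{\alpha}{2}\nu^{-1-\alpha}\sum_j\delta_j^2$ is controlled by $|\delta_j|\le \tfrac{2A_{\max}}{1-\gamma}D_{\mathrm{TV}}^{\max}$ together with Pinsker. Note that your sketch drops the factor $(1-\gamma)^{-1}$ in that bound on $|\delta_j|$. The result is at most half of the first term of $C$, so your route is slightly tighter than the paper's. It also shows plainly that the upper half of the sandwich delivers $\eta_{\vec\pi}(\vec\pi')$, not $\mathcal L_{\vec\pi}(\vec\pi')$.

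\textbf{Step 4 (upper bound).} This is a genuine gap, and no argument can close it, because $J(\vec\pi')-J(\vec\pi)\le\mathcal L_{\vec\pi}(\vec\pi')$ is false. The paper does not close it either. It writes the upper bound as an expectation over $\tau\sim\vec\pi'$, which is your $\eta_{\vec\pi}(\vec\pi')$. It then declares this equal to $\mathcal L_{\vec\pi}(\vec\pi')$, silently replacing $d_{\vec\pi'}$ by $d_{\vec\pi}$.

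Here is a counterexample with $n=1$ (so $\vec\pi=\pi$), any $\alpha\ge0$, any $\nu>0$ and $\gamma\in(0,1)$.
\begin{itemize}
\item The initial state is $s_A$. Action $a_1$ leads to $s_B$; action $a_2$ leads to an absorbing zero-reward state.
\item In $s_B$, action $b_1$ pays $1$ and $b_2$ pays $0$; both are then absorbed. All other rewards are $0$. Add a constant to every reward if you need strict positivity.
\item Let $\pi(a_1|s_A)=\pi(b_1|s_B)=\epsilon$ and $\pi'(a_1|s_A)=\pi'(b_1|s_B)=1-\epsilon$. Both policies are $\epsilon$-soft, so Assumption~\ref{assumption} does not help.
\end{itemize}
Write $\bar A(s)=\E_{a\sim\pi'(\cdot|s)}[A^{\pi}(s,a)]$. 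Then $V^{\pi}(s_A)=\gamma\epsilon^2$, $V^{\pi'}(s_A)=\gamma(1-\epsilon)^2$, $\bar A(s_A)=\gamma\epsilon(1-2\epsilon)$, $\bar A(s_B)=1-2\epsilon$, and $d_{\pi}(s_B)=\gamma\epsilon$. Hence
\[
\mathcal L_{\pi}(\pi')=\frac{2\gamma\epsilon(1-2\epsilon)}{(\nu+\gamma\epsilon^2)^{\alpha}}\to 0 \quad\text{as } \epsilon\to0,
\]
while $J(\pi')-J(\pi)\to U_\alpha(\nu+\gamma)-U_\alpha(\nu)>0$. At $\alpha=0$ the claim is exactly TRPO's $\eta(\tilde\pi)\le L_\pi(\tilde\pi)$, which TRPO never asserts.

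Both of your fallbacks therefore fail.
\begin{itemize}
\item The difference $\eta_{\vec\pi}(\vec\pi')-\mathcal L_{\vec\pi}(\vec\pi')$ is not sign-definite; it is strictly positive in this example.
\item Concavity of $U_\alpha$ does not make $J$ concave along $\vec\pi_\eta$. Here $V^{\pi_\eta}(s_A)=\gamma(\epsilon+\eta(1-2\epsilon))^2$ is convex in $\eta$. So the tangent at $\pi$, whose slope is $\mathcal L_{\pi}(\pi')$, lies below $J(\pi')$ rather than above it.
\end{itemize}

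\textbf{What you can prove instead.} The upper half of your Step 1 plus your Step 3 give
\[
J(\vec\pi')-J(\vec\pi)\le\eta_{\vec\pi}(\vec\pi')\le\mathcal L_{\vec\pi}(\vec\pi')+\frac{4nA_{\max}\gamma}{\nu^{\alpha}(1-\gamma)^2}D_{\mathrm{KL}}^{\max}(\vec\pi\,\|\,\vec\pi').
\]
You should state that in place of the false upper bound. The paper's only use of the upper bound is Step 3 of the proof of Theorem~\ref{thm:convergence}. There the weaker bound $J(\vec\pi')-J(\vec\pi)\le\eta_{\vec\pi}(\vec\pi')$ suffices. By Step 2 of that proof, the deviating agent's expected fair advantage is nonpositive at every state, so its average under $d_{\vec\pi'}$ is nonpositive as well.
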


See Appendix \ref{sec:proof_lemma_3.1} for the proof. The surrogate function is a weighted sum of agent's advantage function whose weights are dynamically changing and depends on the expected return of each agent for a given initial state. The more an agent is efficient, the greater is its expected return and the less important is its advantage function in the sum -- consequently giving more space for other to thrive. Notice that the upper bound is unnecessary for building Fair HATRL but it will be very helpful in the theoretical analysis of HATRL. Now, the lower bound settled, we can introduce the value functions in the HATRL's style.
\begin{definition}
    Let $\sigma$ be a permutation of $\llbracket 1~;~n\rrbracket$. Let $I_m=\{i_1, \dots,i_m\}$ denote an ordered subset of the shuffled interval, where $i_k=\sigma(k)$. Let $-I_m$ refer to its complement. We define the agent $j$'s state-action value function as 
    \begin{equation}
        Q_j^{\vec \pi,\, I_m}(s, \, \vec{a}^{I_m})=\E_{\vec a^{-I_m}\sim \vec \pi^{-I_m}}\left[Q_j^{\vec \pi}(s,\, \vec{a}^{I_m}, \vec{a}^{-I_m})\right],
    \end{equation}
    and for disjoint sets $J_k$ and $I_m$, the agent $j$'s advantage function is defined as
    \begin{equation}
        A_j^{\vec \pi,\, I_m}(s, \vec a^{J_k}, \vec{a}^{I_m}) = Q_j^{\vec \pi,\, J_k\sqcup I_m}(s, \, \vec{a}^{J_k\sqcup I_m}) - Q_j^{\vec \pi,\, J_k}(s, \, \vec{a}^{J_k}).
    \end{equation}
    Finally, we define the multi-agent fair advantage function as:
    \begin{equation}
        A_F^{\vec\pi,\, I_m}(s,\vec a^{J_k}, \vec a^{I_m}) = \sumi[j]\frac{1}{\mathcal V_j^{\vec\pi}(s)}A_j^{\vec \pi,\, I_m}(s, \vec a^{J_k}, \vec{a}^{I_m})
    \end{equation}
\end{definition}

We now introduce a pivotal lemma which states that the fair advantage function can be split into a summation of each agent's local advantages.
\begin{restatable}{lemma}{advantagedecomp} \label{lem:advantage_decomp}
    For any agent j and subset $I_m\subseteq I_n$, we have 
    \begin{equation}
        A_F^{\vec \pi,\, I_m}(s, \vec{a}^{I_m}) = A_F^{\vec \pi,\, I_m}(s, a^{\void}, \vec{a}^{I_m})=\sum_{l=1}^m A_F^{\vec \pi,\, i_l}(s, \vec a^{I_{l-1}}, a^{i_l})
    \end{equation}
\end{restatable}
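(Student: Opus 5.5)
The plan is to reduce the statement to a per-agent telescoping identity and then use linearity of the weighted sum defining $A_F^{\vec\pi,\,I_m}$. The key observation is that the weights $1/\mathcal V_j^{\vec\pi}(s)$ depend only on the state $s$ (through the context $\tilde c$) and on the current joint policy $\vec\pi$. They do not depend on any actions, so they factor out of every term.

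First, the equality $A_F^{\vec \pi,\, I_m}(s, \vec{a}^{I_m}) = A_F^{\vec \pi,\, I_m}(s, a^{\void}, \vec{a}^{I_m})$ is a matter of convention. I take $J_k=\void$ in the definition of the agent-$j$ advantage, so that
\[
A_j^{\vec\pi,\,I_m}(s,a^{\void},\vec a^{I_m})=Q_j^{\vec\pi,\,I_m}(s,\vec a^{I_m})-Q_j^{\vec\pi,\,\void}(s),
\]
where $Q_j^{\vec\pi,\void}(s)=\E_{\vec a\sim\vec\pi}[Q_j^{\vec\pi}(s,\vec a)]=V_j^{\vec\pi}(s)$. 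This is exactly the quantity written without the empty argument. Summing against the weights $1/\mathcal V_j^{\vec\pi}(s)$ gives the first equality.

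For the main identity, I fix an agent $j$ and show the HATRL-style decomposition
\[
A_j^{\vec\pi,\,I_m}(s,a^{\void},\vec a^{I_m})=\sum_{l=1}^m A_j^{\vec\pi,\,i_l}(s,\vec a^{I_{l-1}},a^{i_l}).
\]
By definition, each summand equals $Q_j^{\vec\pi,\,I_l}(s,\vec a^{I_l})-Q_j^{\vec\pi,\,I_{l-1}}(s,\vec a^{I_{l-1}})$, using $I_{l-1}\sqcup\{i_l\}=I_l$. The sum therefore telescopes to $Q_j^{\vec\pi,\,I_m}(s,\vec a^{I_m})-Q_j^{\vec\pi,\,I_0}(s)$ with $I_0=\void$, which is the left-hand side. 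The only care needed is that $Q_j^{\vec\pi,\,I_l}$ is well defined as a marginalization over $\vec a^{-I_l}\sim\vec\pi^{-I_l}$. Because $\vec\pi$ is a product policy, these marginals are consistent across $l$, so the intermediate terms cancel exactly.

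Finally, I multiply each per-agent identity by $1/\mathcal V_j^{\vec\pi}(s)$, which is finite and positive since $\nu>0$ and $V_j^{\vec\pi}\ge 0$ by positivity of rewards. I then sum over $j$ and exchange the two finite sums:
\[
\sum_j \frac{1}{\mathcal V_j^{\vec\pi}(s)}\sum_l A_j^{\vec\pi,\,i_l}
=\sum_l \sum_j \frac{1}{\mathcal V_j^{\vec\pi}(s)}A_j^{\vec\pi,\,i_l}
=\sum_l A_F^{\vec\pi,\,i_l}(s,\vec a^{I_{l-1}},a^{i_l}).
\]

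I do not expect a genuine obstacle here. The step that needs the most attention is confirming that the fair weights depend on neither the action arguments nor the subset $I_l$. This holds because $\mathcal V_j^{\vec\pi}(s)$ is evaluated at $(\tilde c,\tilde c)$ under the fixed policy $\vec\pi$. It is exactly why the paper introduces the contextualized state, and it lets the weights pass unchanged through every telescoping step.
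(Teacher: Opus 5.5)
Your proposal is correct and follows the paper's proof essentially verbatim: the paper first proves, as a separate lemma, the per-agent telescoping identity $A_j^{\vec\pi,\,I_m}(s,a^{\void},\vec a^{I_m})=Q_j^{\vec\pi,\,I_m}(s,\vec a^{I_m})-V_j^{\vec\pi}(s)=\sum_{l=1}^m A_j^{\vec\pi,\,i_l}(s,\vec a^{I_{l-1}},a^{i_l})$, then multiplies by the action-independent weights $1/\mathcal V_j^{\vec\pi}(s)$ and swaps the two finite sums. Your appeal to the product structure of $\vec\pi$ is harmless but unnecessary, because the cancellation is purely algebraic: each intermediate $Q_j^{\vec\pi,\,I_l}(s,\vec a^{I_l})$ appears once with each sign.
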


See Appendix \ref{sec:proof_lemma_3.2} for the proof. This lemma is the foundation for sequential update. Let $I_n$ be an arbitrary order. Suppose agent $i_1$ takes the first action $a^{i_1} $such that it improves its local fair advantage function, the subsequent agent $i_m$ can then sample its own action $a^{i_m}$ by taking into account the actions drawn by the preceding agents such that its action also improves its local fair advantage function -- then resulting into monotonic improvement. To formalize this intuition, we need the following definitions.
\begin{definition}
    \label{def:local_surrogate}
    Let $\vec\pi$ be a joint policy, $\vec\pi'^{I_{m-1}}$ be some other joint policy of agents $I_{m-1}$, and $\pi'^{i_m}$ be some other policy of agent $i_m$. Then
    \begin{equation}
        \mathcal L_{\vec \pi}^{I_m}(\vec{\pi}'^{I_{m-1}}, \pi'^{i_m})=\E_{s_\sim d_{\vec\pi}, \;\vec{a}^{I_{m-1}}\sim\vec{\pi}'^{I_{m-1}}(.|s), \, a^{i_m}\sim\pi'^{i_m}(.|s)}\left[A_F^{\vec\pi,\,i_m}\left(s, \vec a ^{I_{m-1}}, a^{i_m}\right)\right]
    \end{equation}
    Note that, for any $\vec\pi'^{I_{m-1}}$ and $\pi'^{i_m}=\pi^{i_m}$
    \begin{equation}
        \mathcal L_{\vec \pi}^{I_m}(\vec{\pi}'^{I_{m-1}}, \pi^{i_m})=0
    \end{equation}
\end{definition}

Thanks to lemma and theorem, we can generalize HATRL to our global fair objective.
\begin{restatable}{lemma}{surrogatedecomp}
    \label{lem:surrogate_decomp}
    Let $\vec\pi$ and $\vec\pi'$ be two joint policies, and $I_n$ an order, the surrogate function can be decomposed into $n$ summands:
    \begin{equation}
        \diffj\geq \sumi[m]\mathcal L_{\vec \pi}^{I_m}(\vec{\pi}'^{I_{m-1}}, \pi'^{i_m}) - C\cdot \dkl{\pi^{i_m}}{\pi'^ {i_m}}.
    \end{equation} 
\end{restatable}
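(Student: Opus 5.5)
We combine the lower bound of Lemma~\ref{lem:fair_surrogate} with the decomposition of Lemma~\ref{lem:advantage_decomp}, following the original HATRL argument. The proof has three steps.

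\textbf{Step 1 (lower bound).} By Lemma~\ref{lem:fair_surrogate},
\[
\diffj\geq \mathcal L_{\vec\pi}(\vec\pi')-C\cdot D_{\text{KL}}^{\max}(\vec\pi\,\|\,\vec\pi').
\]
So it suffices to treat the surrogate term and the KL term separately.

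\textbf{Step 2 (decomposing the surrogate).} For any fixed $s$ and $\vec a$, the functions $A_F^{\vec\pi}(s,\vec a)$ and $A_F^{\vec\pi,\,I_n}(s,\vec a^{I_n})$ coincide, since $I_n$ contains all agents. The conditional expectation over the empty complement is trivial, and the $\mathcal V_j^{\vec\pi}(s)$ weights do not depend on actions. Applying Lemma~\ref{lem:advantage_decomp} with $m=n$ gives
\[
A_F^{\vec\pi}(s,\vec a)=\sum_{m=1}^n A_F^{\vec\pi,\,i_m}(s,\vec a^{I_{m-1}},a^{i_m}).
\]
We then take expectations with $s\sim d_{\vec\pi}$ and $\vec a\sim\vec\pi'(\cdot|s)=\prod_i\pi'^i(\cdot|s)$. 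The $m$-th summand depends only on $\vec a^{I_m}$, so the remaining agents' actions integrate out. The $m$-th term is therefore exactly $\mathcal L_{\vec \pi}^{I_m}(\vec{\pi}'^{I_{m-1}}, \pi'^{i_m})$ from Definition~\ref{def:local_surrogate}. By linearity of expectation,
\[
\mathcal L_{\vec\pi}(\vec\pi')=\sum_{m=1}^n\mathcal L_{\vec \pi}^{I_m}(\vec{\pi}'^{I_{m-1}}, \pi'^{i_m}).
\]

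\textbf{Step 3 (splitting the KL penalty).} Because both joint policies are product policies, the chain rule gives
\[
D_{\text{KL}}(\vec\pi(\cdot|s)\,\|\,\vec\pi'(\cdot|s))=\sum_i D_{\text{KL}}(\pi^i(\cdot|s)\,\|\,\pi'^i(\cdot|s))
\]
for each state. Taking the maximum over $s$ and using that a maximum of a sum is at most the sum of the maxima,
\[
D_{\text{KL}}^{\max}(\vec\pi\,\|\,\vec\pi')\leq\sum_{m=1}^n \dkl{\pi^{i_m}}{\pi'^{i_m}}.
\]
Since $C\ge 0$, this upper bound on the KL term gives a lower bound on $-C\cdot D_{\text{KL}}^{\max}$. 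Substituting it together with Step 2 into Step 1 yields the stated inequality with the same constant $C$.

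\textbf{Main obstacle.} The delicate point is Step 2, namely checking that the marginalization is legitimate.
\begin{itemize}
\item Independence of the agents' policies under the product $\vec\pi'$ is what allows the actions of $-I_m$ to integrate out of the $m$-th term.
\item The weights $1/\mathcal V_j^{\vec\pi}(s)$ must depend only on $s$; they do, because they depend only on the context $\tilde c$.
\item The state distribution must remain $d_{\vec\pi}$ and not become $d_{\vec\pi'}$. This is guaranteed because the surrogate in Lemma~\ref{lem:fair_surrogate} is already defined with $s\sim d_{\vec\pi}$.
\end{itemize}
Once these points are checked, the rest is bookkeeping.
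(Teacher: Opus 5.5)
Your proposal is correct and follows the paper's proof essentially step for step. Both use the lower bound of Lemma~\ref{lem:fair_surrogate}, then decompose $A_F^{\vec\pi}$ via Lemma~\ref{lem:advantage_decomp} and marginalise the actions of $-I_m$ under the product policy $\vec\pi'$, then split the KL term per state (Lemma~\ref{lem:dkl_decomp}) and bound $\max_s\sum_i \le \sum_i\max_s$. You also make explicit two points the paper leaves implicit: the max-of-sum inequality itself, and that the first $m-1$ actions are drawn from $\vec\pi'^{I_{m-1}}$, which the paper's intermediate line mistypes as $\vec\pi^{I_{m-1}}$.
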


See Appendix \ref{sec:proof_lemma_3.3} for the proof. This remarkable statement gives an idea about how a joint policy can be improved to maximize our global fair objective. Specifically, we can sequentially update agents policy by maximizing its corresponding summand $\mathcal L_{\vec \pi}^{I_m}(\vec{\pi}'^{I_{m-1}}, \pi'^{i_m}) - C\dkl{\pi^{i_m}}{\pi'^ {i_m}}$ which can always be positive as we can nullify the latter by making no policy update. Therefore, any policy updates that lead to a positive summand improves the overall summation. Hence, we propose the following algorithm.  

\begin{algorithm}[H]
\caption{Multi-Agent Policy Iteration with Monotonic Improvement Guarantee}
\label{algo:fhatrl}
\begin{algorithmic}[1] 
    \State \textbf{Initialize} the joint policy $\vec{\pi}_0 = (\pi_0^1, \dots, \pi_0^n)$.
    \For{$k = 0, 1, \dots$}
        \State Compute the fair advantage function $A_F^{\vec{\pi}_k}(s, \vec{a})$ for all state-action pairs $(s, \mathbf{a})$ and all agent $j$.
        \State Compute $\omega = \max_{j,s,\vec{a}} |A_j^{\vec{\pi}_k}(s, \vec{a})|$ and $C=\frac{4n\alpha \nu^{-1-\alpha}}{(1-\gamma)^2}\omega^2 + \frac{4n\omega\gamma}{\epsilon^\alpha(1-\gamma)^2}$.
        \State Draw a permutation $I_n$ of agents at random.
        \For{$m = 1, \dots, n$}
            \State Make an update $\pi_{k+1}^{i_m} = \arg \max_{\pi^{i_m}} \left[\mathcal L_{\vec \pi_k}^{I_m}(\vec{\pi}_{k+1}^{I_{m-1}}, \pi^{i_m}) - C D_{\text{KL}}(\pi_k^{i_m} \parallel \pi^{i_m}) \right]$.
        \EndFor
    \EndFor
\end{algorithmic}
\end{algorithm}

\subsection{Theoretical analysis}
In this section, we highlight the link between the surrogate function and the global fair objective. We first introduce a directional derivative definition.

\begin{restatable}{definition}{gateaux}\textit{(Gâteaux Derivative)}
    Let $\vec\pi,\vec\pi'$ be two joint policies and $\eta>0$ a positive number. Let us define the perturbed policies $\pi_\eta^i=\pi^i+\eta(\pi'^i-\pi^i)$, all together define the joint perturbed policy $\vec\pi_\eta$. 
    If the limit exists, we define the Gateaux derivative of a function $f(\vec\pi)$ as $\mathfrak D_{\vec\pi'}f(\vec\pi)=\lim_{\eta\rightarrow 0}\frac{f(\vec\pi_\eta)-f(\vec\pi)}{\eta}$. 
\end{restatable}

The surrogate function is actually the Gateaux derivative of $J(\vec\pi)$ in the $\vec\pi'$ direction.
\begin{restatable}{lemma}{surrogategateaux}
    \label{lem:surrogate_gateaux}
   Let $\vec\pi$ be a policy and $\vec\pi'$ be another policy. The following statement holds: $\mathfrak D_{\vec\pi'}J(\vec\pi)=\mathcal L_{\vec\pi}(\vec\pi')$.
\end{restatable}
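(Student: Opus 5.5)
The plan is to differentiate $J(\vec\pi_\eta)$ directly in $\eta$ by the chain rule, instead of squeezing it between the bounds of Lemma \ref{lem:fair_surrogate}. That squeeze would need $D_{\text{KL}}^{\max}(\vec\pi\,||\,\vec\pi_\eta)=o(\eta)$, and this fails when $\pi'^i$ puts mass outside the support of $\pi^i$. With finite state and action spaces, $V_j^{\vec\pi_\eta}(s_0)=\rho_{s_0}^\top(I-\gamma P^{\vec\pi_\eta})^{-1}r_j^{\vec\pi_\eta}$. Each $\pi_\eta^i$ is affine in $\eta$, so $\vec\pi_\eta(\vec a|s)=\prod_i\pi_\eta^i(a^i|s)$ is a polynomial in $\eta$. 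Hence $V_j^{\vec\pi_\eta}$ is a rational function of $\eta$ with nonvanishing denominator near $0$, and in particular it is differentiable. Because rewards are positive, $\nu+V_j>0$, so $U_\alpha$ is smooth there. The chain rule then gives
\begin{equation*}
\mathfrak D_{\vec\pi'}J(\vec\pi)=\E_{s_0\sim\rho_0}\Big[\sumi[j]\big(\nu+V_j^{\vec\pi}(s_0)\big)^{-\alpha}\,\tfrac{d}{d\eta}V_j^{\vec\pi_\eta}(s_0)\big|_{\eta=0}\Big],
\end{equation*}
using $U_\alpha'(x)=x^{-\alpha}$ in both the case $\alpha\neq1$ and the case $\alpha=1$.

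Next I would compute $\tfrac{d}{d\eta}V_j^{\vec\pi_\eta}(s_0)$ with the (single-agent, joint-action) policy gradient identity. I would derive it from the exact performance difference lemma applied to $\vec\pi_\eta$ versus $\vec\pi$:
\begin{equation*}
V_j^{\vec\pi_\eta}(s_0)-V_j^{\vec\pi}(s_0)=\E_{s\sim d_{\vec\pi_\eta}^{s_0}}\Big[\sum_{\vec a}\big(\vec\pi_\eta(\vec a|s)-\vec\pi(\vec a|s)\big)A_j^{\vec\pi}(s,\vec a)\Big].
\end{equation*}
Here $d^{s_0}$ is the improper discounted visitation from $s_0$, and I use $\sum_{\vec a}\vec\pi(\vec a|s)A_j^{\vec\pi}(s,\vec a)=0$. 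Divide by $\eta$ and let $\eta\to0$. Continuity of $d_{\vec\pi_\eta}^{s_0}$ in $\eta$ lets me replace it by $d_{\vec\pi}^{s_0}$. The ratio $(\vec\pi_\eta-\vec\pi)/\eta$ converges to its $\eta$-derivative at $0$. For the product perturbation in the Definition, this derivative is $\sum_i(\pi'^i-\pi^i)(a^i|s)\prod_{k\neq i}\pi^k(a^k|s)$.

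Then I would absorb the weights $(\nu+V_j^{\vec\pi}(s_0))^{-\alpha}$ into the expectation over $d_{\vec\pi}$. This is exactly what the contextualized state $s=(\tilde s,\tilde c)$ is for: along a trajectory started from $s_0=\tilde c$, the context is frozen, so $(\nu+V_j^{\vec\pi}(\tilde c))^{-\alpha}=1/\mathcal V_j^{\vec\pi}(s)$ for every visited $s$. Averaging over $s_0\sim\rho_0$ merges the conditional visitations into $d_{\vec\pi}$. This yields
\begin{equation*}
\mathfrak D_{\vec\pi'}J(\vec\pi)=\E_{s\sim d_{\vec\pi}}\Big[\sum_{\vec a}\tfrac{d}{d\eta}\vec\pi_\eta(\vec a|s)\big|_{0}\,A_F^{\vec\pi}(s,\vec a)\Big].
\end{equation*}

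The main obstacle is identifying this with $\mathcal L_{\vec\pi}(\vec\pi')=\E_{s\sim d_{\vec\pi},\vec a\sim\vec\pi'}[A_F^{\vec\pi}]$. Since $\E_{\vec a\sim\vec\pi}A_F^{\vec\pi}=0$, the expression above equals $\sum_i\E_{a^i\sim\pi'^i,\,\vec a^{-i}\sim\vec\pi^{-i}}[A_F^{\vec\pi}]$. By contrast, expanding $\mathcal L_{\vec\pi}(\vec\pi')$ multilinearly in the $\pi'^i-\pi^i$ also produces mixed terms involving two or more agents. I would first try to use Lemma \ref{lem:advantage_decomp} and Definition \ref{def:local_surrogate} to show these mixed contributions cancel. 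I expect that to work only in special cases: $n=1$, or when $A_F^{\vec\pi}$ is additively separable across agents. In general the honest conclusion is $\mathfrak D_{\vec\pi'}J(\vec\pi)=\sum_i\mathcal L_{\vec\pi}^{\{i\}}(\void,\pi'^i)$, with equality to $\mathcal L_{\vec\pi}(\vec\pi')$ holding only up to these mixed terms.
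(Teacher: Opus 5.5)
Your derivation is correct, and the obstacle you reach in the last step is a real defect of the statement, not a weakness of your argument. The paper's proof takes the same route as yours: the performance difference lemma, continuity of $d_{\vec\pi_\eta}$, the chain rule with $U_\alpha'(x)=x^{-\alpha}$, and absorbing $1/\mathcal V_j^{\vec\pi}$ through the frozen context.

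The two proofs part ways at one step. In the proof of Lemma \ref{lem:state_value_gateaux}, the paper asserts ``by definition of $\vec\pi_\eta$'' that
\[
\E_{\vec a\sim\vec\pi_\eta(\cdot|s)}\big[A_i^{\vec\pi}(s,\vec a)\big]=(1-\eta)\,\E_{\vec a\sim\vec\pi(\cdot|s)}\big[A_i^{\vec\pi}(s,\vec a)\big]+\eta\,\E_{\vec a\sim\vec\pi'(\cdot|s)}\big[A_i^{\vec\pi}(s,\vec a)\big].
\]
The proof of Lemma \ref{lem:surrogate_gateaux} then writes $\vec\pi_\eta=\vec\pi+\eta(\vec\pi'-\vec\pi)$ at the joint level. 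This identity holds for the correlated joint mixture $(1-\eta)\vec\pi+\eta\vec\pi'$, whose $\eta$-derivative is $\vec\pi'-\vec\pi$. With that choice your final display collapses at once to $\E_{s\sim d_{\vec\pi}}\big[\E_{\vec\pi'}A_F^{\vec\pi}-\E_{\vec\pi}A_F^{\vec\pi}\big]=\mathcal L_{\vec\pi}(\vec\pi')$.

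The identity fails, for $n\ge 2$, for the product $\prod_i\pi^i_\eta$ that the Definition prescribes. There the first-order coefficient is $\sum_i(\pi'^i-\pi^i)\prod_{k\neq i}\pi^k$, exactly as you computed. The terms of $\mathcal L_{\vec\pi}(\vec\pi')$ in which two or more agents deviate have no counterpart in the derivative.

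A concrete check confirms this. Take $n=2$, a single state, actions $\{0,1\}$, and rewards $r_1=r_2=c+\mathds{1}\{a^1=a^2=1\}$ with $c>0$. Let $\pi^i(1)=p$ and $\pi'^i(1)=q$ for both agents, with $p,q\in(0,1)$. Rewards are then positive and bounded, and the policies satisfy Assumption \ref{assumption}.

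Write $w=(\nu+V_1^{\vec\pi})^{-\alpha}$. Then $A_F^{\vec\pi}(\vec a)=2w\big(\mathds{1}\{a^1=a^2=1\}-p^2\big)$ and $d_{\vec\pi}\equiv 1/(1-\gamma)$, so $\mathcal L_{\vec\pi}(\vec\pi')=\frac{2w}{1-\gamma}(q^2-p^2)$. Under the product perturbation, $J(\vec\pi_\eta)=2U_\alpha\big(\nu+\frac{c+(p+\eta(q-p))^2}{1-\gamma}\big)$, so $\mathfrak D_{\vec\pi'}J(\vec\pi)=\frac{2w}{1-\gamma}\,2p(q-p)$. The two differ by $\frac{2w}{1-\gamma}(q-p)^2\neq 0$, which is precisely your two-agent mixed term.

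There are two consistent repairs. One is to define the G\^ateaux derivative along the joint mixture. That path leaves the product-policy class, but $J$, $V_j$ and $d_{\vec\pi}$ remain well defined for correlated stationary policies, and then your argument, like the paper's, proves the lemma verbatim. The other is to keep the per-agent Definition and state the result as $\mathfrak D_{\vec\pi'}J(\vec\pi)=\sum_{i=1}^n\mathcal L_{\vec\pi}^{\{i\}}(\varnothing,\pi'^i)$, which is exactly what you derived.
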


See Appendix \ref{sec:proof_lemma_3.4} for the proof. The lemma provides a deeper understanding of Fair HATRL. Indeed, at each iteration $k+1$, the current joint policy $\vec\pi_k$ is moving toward the joint policy $\vec\pi'$ with the steepest slope to climb that remains in the trust region. Notice that similar result has be been shown for TRL.

Now we state the main properties enjoyed by Algorithm \ref{algo:fhatrl}. We first need to introduce the following solution concept.

\begin{definition}[Nash equilibrium]
    In a fully-cooperative game, a joint policy $\vec\pi_*=(\pi_*^1,\dots,\pi_*^n)$ is a Nash equilibrium (NE) if for every $i\in N$, $\piî\in\Pi^i$ implies $J(\vec\pi_*)\geq J(\pi^i, \vec \pi_*^{-i})$.
\end{definition}
NE (\cite{nash_equilibrium}) is common game-theory solution concept introduced by Nash. It describes an equilibrium where no agent can improve the global objective by deviating alone from its NE policy.

\begin{restatable}{theorem}{convergence}
    \label{thm:convergence}
    Supposing in Algorithm \ref{algo:fhatrl} any permutation of agents has a fixed non-zero probability to begin the update, the joint policies induced by the learning algorithm enjoy the following properties:
    \begin{itemize}[nosep]
        \item Monotonic Improvement property: $J(\vec\pi_{k+1})\geq J(\vec\pi_k)$,
        \item The fair global objective converges: $\lim_{k\rightarrow\infty}J(\vec\pi_k)=J_\infty$,
        \item The set of joint policies' limit points is non empty, each of which is a Nash equilibrium.
    \end{itemize}
\end{restatable}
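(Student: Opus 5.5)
The plan follows the HATRL convergence argument of \cite{hatrl}, adapted to the fair surrogate.

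\textbf{Monotonic improvement.} Apply Lemma \ref{lem:surrogate_decomp} with $\vec\pi=\vec\pi_k$ and $\vec\pi'=\vec\pi_{k+1}$. The right-hand side is a sum of $n$ summands. The $m$-th summand is exactly the objective maximised at step $m$ of Algorithm \ref{algo:fhatrl}, evaluated at its maximiser. Choosing $\pi^{i_m}=\pi_k^{i_m}$ gives value $0$, by the note after Definition \ref{def:local_surrogate} and because the KL term vanishes. So each summand is nonnegative, and $J(\vec\pi_{k+1})\ge J(\vec\pi_k)$. One technical point must be checked: the constant $C$ in the algorithm has to dominate the one from Lemma \ref{lem:fair_surrogate}. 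Here $\omega=A_{\max}$ for $\vec\pi_k$, and the second term requires $\epsilon$ to be read as $\nu$ (or $\epsilon\le\nu$).

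\textbf{Convergence of $J$.} Rewards are positive and bounded by $R_{\max}$, so $V_i\in[0,R_{\max}/(1-\gamma)]$. Hence $J$ is bounded above by $n\,U_\alpha(\nu+R_{\max}/(1-\gamma))$. A monotone bounded sequence converges to some $J_\infty$.

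\textbf{Limit points are NE.} The policy space is a finite product of simplices, hence compact, so $(\vec\pi_k)$ has a convergent subsequence and the set of limit points is nonempty. Let $\bar{\vec\pi}$ be a limit point along $k_j$. I would use continuity of $V_i^{\vec\pi}$, $A_j^{\vec\pi}$ and $\mathcal V_j^{\vec\pi}$ in $\vec\pi$, which holds for finite MDPs; $\mathcal V_j$ is bounded below by $\nu^\alpha>0$. With the permutation hypothesis, pass to a further subsequence on which a fixed permutation $I_n$ is drawn; this has positive probability for each first agent $i$, so almost surely each agent $i$ is first infinitely often.

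Suppose $\bar{\vec\pi}$ is not an NE. Then some $i$ and $\hat\pi^i$ satisfy $J(\hat\pi^i,\bar{\vec\pi}^{-i})>J(\bar{\vec\pi})$. By concavity-free reasoning along the segment, the Gâteaux derivative in direction $(\hat\pi^i,\bar{\vec\pi}^{-i})$ should be positive for some choice of $\hat\pi^i$. By Lemma \ref{lem:surrogate_gateaux}, this yields $\mathcal L^{i}_{\bar{\vec\pi}}(\pi^i_\eta)>0$ for small perturbations. The trust-region penalty is $O(\eta^2)$ while the gain is linear in $\eta$, so $\max_{\pi^i}[\mathcal L^{i}_{\bar{\vec\pi}}(\pi^i)-C D_{\mathrm{KL}}]\ge\delta>0$. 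By continuity, the same holds with margin $\delta/2$ at $\vec\pi_{k_j}$ for large $j$, whenever agent $i$ updates first. Lemma \ref{lem:surrogate_decomp} with nonnegative remaining summands then gives $J(\vec\pi_{k_j+1})-J(\vec\pi_{k_j})\ge\delta/2$ infinitely often. This contradicts convergence of $J(\vec\pi_k)$.

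\textbf{Main obstacle.} The hardest step is deriving a positive first-order improvement from $J(\hat\pi^i,\bar{\vec\pi}^{-i})>J(\bar{\vec\pi})$ without concavity. The idea is to use the upper bound in Lemma \ref{lem:fair_surrogate} at $\bar{\vec\pi}$: it gives $\mathcal L_{\bar{\vec\pi}}(\hat\pi^i,\bar{\vec\pi}^{-i})\ge J(\hat\pi^i,\bar{\vec\pi}^{-i})-J(\bar{\vec\pi})>0$. Because only agent $i$ deviates, this surrogate equals the single-agent local surrogate $\mathcal L^{i}$. The surrogate is linear in $\pi^i$ and the KL penalty is quadratic near $\bar\pi^i$, so the mixture $\bar\pi^i+\eta(\hat\pi^i-\bar\pi^i)$ yields a positive penalised summand for small $\eta$. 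This is precisely why the upper bound of Lemma \ref{lem:fair_surrogate} was included. A secondary care point is that the constant $C$ along the subsequence must stay bounded, which holds because $A_{\max}\le 2R_{\max}/(1-\gamma)$.
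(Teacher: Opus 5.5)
Your first two bullets follow the paper exactly. Your remarks that $\omega$ must be $A_{\max}$ at $\vec\pi_k$, and that $\epsilon$ in Algorithm~\ref{algo:fhatrl} should read $\nu$, are correct.

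For the Nash-equilibrium bullet you take a genuinely different, contrapositive route. If $\bar{\vec\pi}$ is not an NE, the upper bound of Lemma~\ref{lem:fair_surrogate} gives $\mathcal L^i_{\bar{\vec\pi}}(\hat\pi^i)>0$. Linear gain against an $O(t^2)$ KL penalty along $\bar\pi^i+t(\hat\pi^i-\bar\pi^i)$ then gives a penalised first-agent objective of at least $\delta$. Continuity keeps a margin $\delta/2$ near $\bar{\vec\pi}$; your uniform bound on $C$ neatly removes any need for continuity of $C_{\vec\pi}$. Hence $J$ jumps by $\delta/2$ infinitely often, a contradiction.

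The paper argues forward in three steps.
\begin{enumerate}
\item It shows every limit point is TR-stationary (Definition~\ref{def:tr_stationarity}), by bounding $0=\lim_k\mathbb{E}[J(\vec\pi_{k+1})-J(\vec\pi_k)]$ from below by $p_i\max_{\pi^i}[\mathcal L^i_{\vec\pi_\infty}(\pi^i)-C_{\vec\pi_\infty}D^{\max}_{\mathrm{KL}}(\pi^i_\infty,\pi^i)]\ge 0$.
\item Perturbing one state at a time, and using that the KL gradient vanishes at the reference policy, it shows that $\pi^i_\infty$ maximises $\mathbb{E}_{a^i\sim\pi^i}[A_F^{\vec\pi_\infty,i}(s,a^i)]$ state by state.
\item Only then does it invoke the upper bound of Lemma~\ref{lem:fair_surrogate}.
\end{enumerate}

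Your route is shorter, avoids TR-stationarity, and works pathwise rather than through expectations along a random subsequence. The paper's route extracts per-state optimality, and that buys robustness. The derivation of the upper bound (concavity of $U_\alpha$ plus Lemma~\ref{lem:diff_v0}) actually yields $\mathbb{E}_{s\sim d_{\vec\pi'},\,\vec a\sim\vec\pi'}[A_F^{\vec\pi}(s,\vec a)]$, which in general differs from $\mathcal L_{\vec\pi}(\vec\pi')$. Per-state nonpositivity gives $J(\pi^i,\vec\pi_\infty^{-i})\le J(\vec\pi_\infty)$ under either state distribution, on the common support guaranteed by $\eta$-softness. Your inequality $\mathcal L_{\bar{\vec\pi}}(\hat\pi^i,\bar{\vec\pi}^{-i})\ge J(\hat\pi^i,\bar{\vec\pi}^{-i})-J(\bar{\vec\pi})>0$, by contrast, needs the bound exactly as stated with $d_{\bar{\vec\pi}}$.

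To make your argument equally safe, localise. From the $d_{\vec\pi'}$ form, pick a state $s^*$ with $\mathbb{E}_{a^i\sim\hat\pi^i}[A_F^{\bar{\vec\pi},i}(s^*,a^i)]>0$; then $d_{\bar{\vec\pi}}(s^*)>0$ by $\eta$-softness. Mix only at $s^*$. This is precisely the paper's Step~2 construction.

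Three smaller points:
\begin{itemize}
\item Drop the sentence about a positive G\^ateaux derivative ``by concavity-free reasoning''. Without concavity, $J(\hat\pi^i,\bar{\vec\pi}^{-i})>J(\bar{\vec\pi})$ says nothing about the slope at $\bar{\vec\pi}$, and your argument in fact rests only on the upper bound.
\item The $O(t^2)$ KL claim needs Assumption~\ref{assumption}. Limits of $\eta$-soft policies are $\eta$-soft, so $\bar\pi^i$ has full support; mass moved off the support would make the KL first order in $t$.
\item Fixing a permutation along a further subsequence fixes some first agent, not necessarily $i$. What you need is that $i$ is drawn first at infinitely many iterations $k$ with $\vec\pi_k$ in a fixed neighbourhood of $\bar{\vec\pi}$ where the margin holds. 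This follows almost surely from L\'evy's conditional Borel--Cantelli lemma, because the permutation at iteration $k$ is independent of $\vec\pi_k$. Apply it over a countable base of neighbourhoods, since $\bar{\vec\pi}$ is random. The paper is no more careful on this point.
\end{itemize}
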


See Appendix \ref{sec:proof_theorem} for the proof. The monotonic improvement and the convergence of the global fair objective is pretty straightforward. The characterization of the limit points provides some guarantee about the convergence of Fair-HATRL.


\section{Practical algorithms}

In practical settings, we employ Deep Reinforcement Learning (DRL) by parameterizing each agent’s policy $\pi^i$ with $\theta^i$ to accommodate high-dimensional state and action spaces. For brevity, we denote the policy $\pi^i_{\theta^i}$ simply as $\theta^i$ where the context is clear.
Fair-HATRL and HATRL share a core architectural framework; specifically, Fair-HATRL can be obtained from HATRL by substituting the standard multi-agent advantage function with our proposed global fair advantage function. Since both functions share the same analytical properties, we provide a high-level overview of the practical implementation below and refer the reader to \cite{hatrl} for further details.

\subsection{$\alpha$-fair HATRPO}
Similarly to HATRPO (\cite{hatrl}) and TRPO (\cite{trl}), we replace the KL-penalty by a KL-constraint to avoid computing $C$ and $\dkl{\theta^{i_m}_k}{\theta^{i_m}}$ in Algorithm \ref{algo:fhatrl}. Formally, at every iteration $k+1$, given a permutation $I_n$, Fair-HATRPO sequentially optimises agent $i_m$ policy parameters $\theta^{i_m}_{k+1}$ by solving the following optimization problem for a given a KL-radius $\delta>0$:
\begin{equation}
\begin{split}
    \theta^{i_m}_{k+1}=\underset{\theta^{i_m}}{\arg\max}\;\;\E_{s_\sim d_{\vec\theta_k}, \;\vec{a}^{I_{m-1}}\sim\vec{\theta}_{k+1}^{I_{m-1}}(.|s), \, a^{i_m}\sim\theta^{i_m}(.|s)}\left[A_F^{\vec\theta_k,\,i_m}\left(s, \vec a ^{I_{m-1}}, a^{i_m}\right)\right] \\
    \text{subject to } \underset{s\sim d_{\vec\theta_k}}{\E}\left[D_{\text{KL}}\left(\theta^{i_m}_k(\cdot|s),\theta^{i_m}(\cdot|s)\right)\right] \leq \delta. \qquad\qquad\quad
\end{split}
\end{equation}

One can solve this constrained optimization problem by using a linear approximation to the objective function and a quadratic approximation of the KL-constraint leading to the following update rule:
\begin{equation}
    \theta_{k+1}^{i_m} = \theta_k^{i_m} + \beta^j \sqrt{\frac{2\delta}{g_k^{i_m} (H_k^{i_m})^{-1} g_k^{i_m}}} (H_k^{i_m})^{-1} g_k^{i_m} ,
\end{equation}
where $H_k^{i_m} = \nabla_{\theta^{i_m}}^2 \mathbb{E}_{s \sim d_{\vec{\theta}_k}} \left[ \text{D}_{\text{KL}} \left( \theta_k^{i_m} (\cdot | s), \theta^{i_m} (\cdot | s) \right) \right] \Big|_{\theta^{i_m} = \theta_k^{i_m}}$ is the Hessian of the expected KL-divergence, $g_k^{i_m}$ is the gradient of the objective function, $\beta^j$ is computed via backtracking line search, and the product $(H_k^{i_m})^{-1}g_k^{i_m}$ can be efficiently computed using the conjugate gradient algorithm.

For a given batch $\mathcal B$ of trajectories with length $T$, let $\hat A^{\vec\theta_k}_F(s,\vec a)$ be the estimator of the fair advantage function $A^{\vec\theta_k}_F(s,\vec a)$. Then, the gradient of the objective function can be estimated using the following formula:
\begin{equation}
    \hat{g}_{k}^{i_{m}}=\frac{1}{|\mathcal{B}|} \sum_{\tau \in \mathcal{B}} \sum_{t=0}^{T} M_F^{I_m}\left(s_{t}, \vec{a}_{t}\right) \nabla_{\theta^{i_{m}}} \log \pi_{\theta^{i_{m}}}^{i_{m}}\left(a_{t}^{i_m} | s_{t}\right)|_{\theta^{i_{m}}=\theta_{k}^{i_{m}}} .
\end{equation}
where $M^{I_m}_F = \frac{\vec{\pi}'^{I_{m-1}}(\vec{a}^{I_{m-1}} | s)}{\vec{\pi}^{I_{m-1}}(\mathbf{a}^{I_{m-1}} | s)} \hat{A}_F^{\vec\theta_k}(s, \vec{a})$. 
The estimator of the fair advantage function is estimated from individual state-value functions' estimates (by the individual critic) and individual advantage functions' estimates (like GAE--\cite{gae}).  

\subsection{$\alpha$-fair HAPPO}
Similarly to HAPPO and PPO, we propose a first order approximation and avoid the computation of the hessian $H_k^{i_m}$. Thus, agent $i_m$ updates its parameter $\theta^{i_m}$ by maximizing the following clipping objective 
\begin{equation}
    \mathbb{E}_{s \sim d_{\vec\theta_k}, \vec a \sim \vec{\theta}_{k}}\left[\min \left(\frac{\pi_{\theta^{i m}}^{i_{m}}\left(a^{i_m} | s\right)}{\pi_{\theta_{k}^{i m}}^{i_{m}}\left(a^{i_m} | \mathrm{s}\right)} M^{I_m}(s, \vec a), \operatorname{clip}\left(\frac{\pi_{\theta^{i m}}^{i_{m}}\left(a^{i_m} | s\right)}{\pi_{\theta_{k}^{i_m}}^{i_{m}}\left(a^{i_m} | s\right)}, 1 \pm \epsilon\right) M^{I_m}(s, \vec a)\right)\right] .
\end{equation}


\section{Experiments and Results}
Finally, we test both our algorithms on two SSD -- CleanUp and Harvest (\cite{inequity_aversion}) -- against state-of-the-art algorithms. We built our experiments upon the environments provided by SocialJax (\cite{socialjax}). All hyperparameter settings and implementations details can be found in Appendix \ref{app:add_metrics}. Simulations were done on two RTX 4090 (24GB VRAM).
\subsection{Environments and Baselines}
\textbf{Harvest} presents a "tragedy of the commons" scenario where agents must balance their individual desire to collect apples with the need for collective restraint, as the resource only regenerates if a sufficient population of apples is maintained. In contrast, \textbf{CleanUp} models a public goods dilemma where agents face a conflict between harvesting apples and performing the labor-intensive task of cleaning a river; because apples only grow when the river is under a certain level of pollutants, agents must navigate the tension between contributing to the environment and free-riding on the efforts of others. 

We compare both FHATRPO and FHAPPO for different values of $\alpha\in\{0.5, 1,1.5\}$ and study the influence of that parameter on the efficiency and the fairness. We also test our algorithms against SOTA algorithms : HATRPO/HAPPO with global reward and FMAPPO with altruism level of 1 \cite{moi}. FMAPPO is based on the Proportional fairness objective and uses an PPO objective extrapolated from the gradient derivation with no theoretical guarantees. 

\subsection{Results}
We decide to do the comparison on two metrics : the total apples consumed (TAC) which measure the overall efficiency of the agents and the Gini index which measure how equal the utility distribution is. Other metrics can be found in Appendix \ref{app:add_metrics}.

\textbf{Common Harvest.} 
The Figure \ref{fig:result_common_harvest} presents the results obtained on Common Harvest. All the algorithms successfully learn the complexity of the environment. Interestingly, our FHATRPO and FHAPPO seem to perform slightly better than their baselines while reaching lower level of Gini index. Moreover, it seems that the higher $\alpha$ is, the fairer the outcome is except for $1.5$-FHAPPO which is probably due to the hard hyperparameter tuning. On the other hand, FMAPPO seems to perform pretty well. It levels FHATRPO on both TAC and Gini index and outperformed FHAPPO on pure efficiency. Nonetheless, FHAPPO reaches socially better result with a lower Gini index.

\def\makeginiharvest#1/#2/#3{
    \addplot[name path=#1-upper, draw=none, forget plot] 
        table [col sep=comma, x=TRAINING TIMESTEP, y=#1_MAX] {harvest_gini.csv};
        
    \addplot[name path=#1-lower, draw=none, forget plot] 
        table [col sep=comma, x=TRAINING TIMESTEP, y=#1_MIN] {harvest_gini.csv};
        
    \addplot[fill=#2!20, opacity=0.6, forget plot] 
        fill between[of=#1-upper and #1-lower];
        
    \addplot[thick, #2, mark=none] 
        table [col sep=comma, x=TRAINING TIMESTEP, y=#1_MEAN] {harvest_gini.csv};
        
    \addlegendentry{#3}
}

\def\makeapplesharvest#1/#2/#3{
    \addplot[name path=#1-upper, draw=none, forget plot] 
        table [col sep=comma, x=TRAINING TIMESTEP, y=#1_MAX] {harvest_apples.csv};
        
    \addplot[name path=#1-lower, draw=none, forget plot] 
        table [col sep=comma, x=TRAINING TIMESTEP, y=#1_MIN] {harvest_apples.csv};
        
    \addplot[fill=#2!20, opacity=0.6, forget plot] 
        fill between[of=#1-upper and #1-lower];
        
    \addplot[thick, #2, mark=none] 
        table [col sep=comma, x=TRAINING TIMESTEP, y=#1_MEAN] {harvest_apples.csv};
        
    \addlegendentry{#3}
}

\begin{figure}[htbp]
    \centering
    
    \resizebox{\textwidth}{!}{
        
        \begin{tikzpicture}
            \begin{groupplot}[
                group style={
                    group size=3 by 2,
                    horizontal sep=1cm, 
                    vertical sep=1.3cm      
                },
                try min ticks=5,
                legend pos=north west,
                legend cell align=left,
                legend image code/.code={
                    \draw[mark repeat=2,mark phase=2, #1] 
                        plot coordinates {(0cm,0cm) (0.15cm,0cm) (0.3cm,0cm)};
                },
                xlabel shift={-5pt},
                title style={font=\scriptsize, yshift=-8pt},         
                legend style={font=\fontsize{4}{4}\selectfont, row sep=-3pt, fill opacity=0.7,},   
                label style={font=\tiny}, 
                tick label style={font=\fontsize{4}{4}\selectfont},
                tick style={font=\fontsize{1}{4}\selectfont}, 
                width=6cm,  
                height=5cm,
                grid=major
            ]

                \nextgroupplot[title style={text width=3cm, align=center}, title={Total apples consumed HATRPO/FHATRPO}, xlabel={timestep}, ylabel={Apples consumed}]
                \pgfplotsinvokeforeach{HATRPO/red/{HATRPO}, FHATRPO_0.5/violet/{$\alpha$=0.5}, FHATRPO_1/cyan/{$\alpha$=1}, FHATRPO_1.5/magenta/{$\alpha$=1.5}}{
                    \makeapplesharvest#1
                }
                
                \nextgroupplot[title style={text width=3cm, align=center}, title={Total apples consumed HAPPO/FHAPPO}, xlabel={timestep}, ylabel={Apples consumed}]
                \pgfplotsinvokeforeach{HAPPO/blue/{HAPPO}, FHAPPO_0.5/orange/{$\alpha$=0.5}, FHAPPO_1/purple/{$\alpha$=1}, FHAPPO_1.5/teal/{$\alpha$=1.5}}{
                    \makeapplesharvest#1
                }
                
                \nextgroupplot[title style={text width=3cm, align=center}, title={Total apples consumed for $\alpha=1$}, xlabel={timestep}, ylabel={Apples consumed}]
                \pgfplotsinvokeforeach{FMAPPO/brown/{FMAPPO}, FHATRPO_1/cyan/{FHATRPO}, FHAPPO_1/purple/{FHAPPO}}{
                    \makeapplesharvest#1
                }
                
                \nextgroupplot[title style={text width=3cm, align=center}, title={Gini index HATRPO/FHATRPO}, xlabel={timestep}, ylabel={Gini Index}]
                \pgfplotsinvokeforeach{HATRPO/red/{HATRPO}, FHATRPO_0.5/violet/{$\alpha$=0.5}, FHATRPO_1/cyan/{$\alpha$=1}, FHATRPO_1.5/magenta/{$\alpha$=1.5}}{
                    \makeginiharvest#1
                }
                
                \nextgroupplot[title style={text width=3cm, align=center}, title={Gini Index HAPPO/FHAPPO}, xlabel={timestep}, ylabel={Gini Index}]
                \pgfplotsinvokeforeach{HAPPO/blue/{HAPPO}, FHAPPO_0.5/orange/{$\alpha$=0.5}, FHAPPO_1/purple/{$\alpha$=1}, FHAPPO_1.5/teal/{$\alpha$=1.5}}{
                    \makeginiharvest#1
                }
                
                \nextgroupplot[title style={text width=3cm, align=center}, title={Gini Index for $\alpha=1$}, xlabel={timestep}, ylabel={Gini Index}]
                \pgfplotsinvokeforeach{FMAPPO_1/brown/{FMAPPO}, FHATRPO_1/cyan/{FHATRPO}, FHAPPO_1/purple/{FHAPPO}}{
                    \makeginiharvest#1
                }
                
            \end{groupplot}
        \end{tikzpicture}
        
    } 
    
    \caption{Results on Common Harvest. Each line is obtained by averaging its actual value on a rolling window of size 100 and its shaded area corresponds to its minimum and maximum on the same rolling window.}
    \label{fig:result_common_harvest}
\end{figure}

\textbf{CleanUp}
Results are presented in Figure \ref{fig:result_cleanup}. We can draw similar conclusion compared to Harvest ; especially that higher value of $\alpha$ leads to fairer outcomes. Nonetheless, the efficiency advantage of our fair algorithm is minimal and FMAPPO seems to be better than our fair algorithms. We also notice that FHAPPO struggles for $\alpha=1.5$ as the learning curve suffered from high drop of TAC.

\def\makeginicleanup#1/#2/#3{
    \addplot[name path=#1-upper, draw=none, forget plot] 
        table [col sep=comma, x=TRAINING TIMESTEP, y=#1_MAX] {cleanup_gini.csv};
        
    \addplot[name path=#1-lower, draw=none, forget plot] 
        table [col sep=comma, x=TRAINING TIMESTEP, y=#1_MIN] {cleanup_gini.csv};
        
    \addplot[fill=#2!20, opacity=0.6, forget plot] 
        fill between[of=#1-upper and #1-lower];
        
    \addplot[thick, #2, mark=none] 
        table [col sep=comma, x=TRAINING TIMESTEP, y=#1_MEAN] {cleanup_gini.csv};
        
    \addlegendentry{#3}
}

\def\makeapplescleanup#1/#2/#3{
    \addplot[name path=#1-upper, draw=none, forget plot] 
        table [col sep=comma, x=TRAINING TIMESTEP, y=#1_MAX] {cleanup_apples.csv};
        
    \addplot[name path=#1-lower, draw=none, forget plot] 
        table [col sep=comma, x=TRAINING TIMESTEP, y=#1_MIN] {cleanup_apples.csv};
        
    \addplot[fill=#2!20, opacity=0.6, forget plot] 
        fill between[of=#1-upper and #1-lower];
        
    \addplot[thick, #2, mark=none] 
        table [col sep=comma, x=TRAINING TIMESTEP, y=#1_MEAN] {cleanup_apples.csv};
        
    \addlegendentry{#3}
}

\begin{figure}[htbp]
    \centering
    
    \resizebox{\textwidth}{!}{
        
        \begin{tikzpicture}
            \begin{groupplot}[
                group style={
                    group size=3 by 2,
                    horizontal sep=1cm, 
                    vertical sep=1.3cm      
                },
                try min ticks=5,
                legend pos=north west,
                legend cell align=left,
                legend image code/.code={
                    \draw[mark repeat=2,mark phase=2, #1] 
                        plot coordinates {(0cm,0cm) (0.15cm,0cm) (0.3cm,0cm)};
                },
                title style={font=\scriptsize, yshift=-8pt},         
                xlabel shift={-5pt},
                legend style={font=\fontsize{4}{4}\selectfont, row sep=-3pt, fill opacity=0.7,},   
                label style={font=\tiny}, 
                tick label style={font=\fontsize{4}{4}\selectfont},
                tick style={font=\fontsize{1}{4}\selectfont}, 
                width=6cm,  
                height=5cm,
                grid=major
            ]

                \nextgroupplot[title style={text width=3cm, align=center}, title={Total apples consumed HATRPO/FHATRPO}, xlabel={timestep}, ylabel={Apples consumed}]
                \pgfplotsinvokeforeach{HATRPO/red/{HATRPO}, FHATRPO_0.5/violet/{$\alpha$=0.5}, FHATRPO_1/cyan/{$\alpha$=1}, FHATRPO_1.5/magenta/{$\alpha$=1.5}}{
                    \makeapplescleanup#1
                }
                
                \nextgroupplot[title style={text width=3cm, align=center}, title={Total apples consumed HAPPO/FHAPPO}, xlabel={timestep}, ylabel={Apples consumed}]
                \pgfplotsinvokeforeach{HAPPO/blue/{HAPPO}, FHAPPO_0.5/orange/{$\alpha$=0.5}, FHAPPO_1/purple/{$\alpha$=1}, FHAPPO_1.5/teal/{$\alpha$=1.5}}{
                    \makeapplescleanup#1
                }
                
                \nextgroupplot[title style={text width=3cm, align=center}, title={Total apples consumed for $\alpha=1$}, xlabel={timestep}, ylabel={Apples consumed}]
                \pgfplotsinvokeforeach{FMAPPO/brown/{FMAPPO}, FHATRPO_1/cyan/{FHATRPO}, FHAPPO_1/purple/{FHAPPO}}{
                    \makeapplescleanup#1
                }
                
                \nextgroupplot[title style={text width=3cm, align=center}, title={Gini index HATRPO/FHATRPO}, xlabel={timestep}, ylabel={Gini Index}]
                \pgfplotsinvokeforeach{HATRPO/red/{HATRPO}, FHATRPO_0.5/violet/{$\alpha$=0.5}, FHATRPO_1/cyan/{$\alpha$=1}, FHATRPO_1.5/magenta/{$\alpha$=1.5}}{
                    \makeginicleanup#1
                }
                
                \nextgroupplot[title style={text width=3cm, align=center}, title={Gini Index HAPPO/FHAPPO}, xlabel={timestep}, ylabel={Gini Index}]
                \pgfplotsinvokeforeach{HAPPO/blue/{HAPPO}, FHAPPO_0.5/orange/{$\alpha$=0.5}, FHAPPO_1/purple/{$\alpha$=1}, FHAPPO_1.5/teal/{$\alpha$=1.5}}{
                    \makeginicleanup#1
                }
                
                \nextgroupplot[title style={text width=3cm, align=center}, title={Gini Index for $\alpha=1$}, xlabel={timestep}, ylabel={Gini Index}]
                \pgfplotsinvokeforeach{FMAPPO/brown/{FMAPPO}, FHATRPO_1/cyan/{FHATRPO}, FHAPPO_1/purple/{FHAPPO}}{
                    \makeginicleanup#1
                }
                
            \end{groupplot}
        \end{tikzpicture}
        
    } 
    
    \caption{Results on CleanUp. Each line is obtained by averaging its actual value on a rolling window of size 100 and its shaded area corresponds to its minimum and maximum on the same rolling window.}
    \label{fig:result_cleanup}
\end{figure}

\section{Discussions and limitations}
During hyperparameter optimization, we observed significant performance degradation for high values of $\alpha$ (specifically, $\alpha > 1$). As $\alpha$ increases, the fair advantage function (Equation \ref{equ:fair_advantage_function}) effectively diminishes the weights of agents performing above the collective average, causing their corresponding terms in the summation to vanish. Consequently, the global objective becomes heavily biased toward the lowest-performing agents. This creates a high sensitivity to the "worst learners"; if these agents fail to explore effectively or hit a local optimum, the resulting gradients in the fair advantage function become volatile, leading to training instability. Using smaller KL-constraint/PPO-clipping can partially solve this issue.

On the other hand, the $\alpha$-fair HATRL framework suffers from several limitations. Namely, it relies on a strong theoretical constraint : positive bounded reward which narrows down the usable environments. Moreover, $\alpha$-fair HATRPO and $\alpha$-fair HAPPO require fully-observable state which is rarely the case in real world situation. Approximation using observation-based estimator with Recurent Neural Network can mitigate this but lack the theoretical properties.

\section{Conclusion} 
This work introduces a novel framework that bridges $\alpha$-fairness with Heterogeneous-Agent Trust Region Learning (HATRL), successfully filling the gap between equitable objective functions and theoretically grounded learning. We proved that our approach provides monotonic improvement and convergence toward Nash Equilibria in Positive Markov games. Practically, our developed algorithms—$\alpha$-fair HATRPO and $\alpha$-fair HAPPO—demonstrated in sequential social dilemmas like CleanUp and CommonHarvest that they can achieve significantly better fairness (lower Gini indices) while maintaining high utilitarian efficiency. This framework establishes a mathematically safe foundation for training agents to adopt sustainable, pro-social behaviors in complex multi-agent systems.
For future work, we suggest to investigate mechanism to enhance exploration like entropy normalization or adaptive KL-constraint/PPO-clipping that freezes the best learners and improves the worst.

\begin{ack}
Use unnumbered first level headings for the acknowledgments. All acknowledgments
go at the end of the paper before the list of references. Moreover, you are required to declare
funding (financial activities supporting the submitted work) and competing interests (related financial activities outside the submitted work).
More information about this disclosure can be found at: \url{https://neurips.cc/Conferences/2026/PaperInformation/FundingDisclosure}.

Do {\bf not} include this section in the anonymized submission, only in the final paper. You can use the \texttt{ack} environment provided in the style file to automatically hide this section in the anonymized submission.
\end{ack}

\vfill
\clearpage
\newpage
\bibliographystyle{plain} 
\bibliography{NeurIPS}

\newpage
\appendix
\section{Preliminaries}

\subsection{Assumptions and Definitions}
We use the same assumption as in HATRL \cite{hatrl}:
\begin{assumption}
    \label{assumption}
    There exists $\eta \in \mathbb{R}$, such that $0 < \eta \ll 1$, and for every agent $i \in \mathcal{N}$, the policy space $\Pi^i$ is $\eta$-soft; that means that for every $\pi^i \in \Pi^i$, $s \in \mathcal{S}$, and $a^i \in \mathcal{A}^i$, we have $\pi^i(a^i|s) \geq \eta$.
\end{assumption}

\begin{definition}
    Let $X$ be a finite set and $p: X \rightarrow \mathbb{R}$, $q: X \rightarrow \mathbb{R}$ be two maps. Then, the notion of \textbf{distance} between $p$ and $q$ that we adopt is given by $\|p - q\| \triangleq \max_{x \in X} |p(x) - q(x)|$.
\end{definition}

We also introduce the new probability distribution for the contextualized state:
\begin{definition}
    Let $\tilde{\mathcal M}=(N, \tilde{\mathcal{S}}, \allowbreak  \{\mathcal{A}_i\}_{i \in N},   \tilde P, \tilde \rho_0, \{\tilde r_i\}_{i \in N}, \gamma)$ be a Markov game. We define the Markov game $\mathcal M=(N, \mathcal{S}, \allowbreak  \{\mathcal{A}_i\}_{i \in N},   P, \rho_0, \{r_i\}_{i \in N}, \gamma)$ for the contextualized state $\mathcal S=\tilde{\mathcal S}\times\tilde{\mathcal S}$ where for any contextualized state $s=(\tilde s, \tilde c)$:
    \begin{itemize}
        \item $\rho_0(s)=\rho_0(\tilde s, \tilde c)=\tilde \rho_0(\tilde s)\mathds{1}_{\{\tilde s=\tilde c\}}$ is the initial state distribution,
        \item for any joint action $\vec a\in \mathcal A$ and other state $s'$, $P(s,\vec a, s')=\tilde P(\tilde s, \vec a, \tilde s')\mathds{1}_{\{\tilde c'=\tilde c\}}$ is the transition kernel,
        \item for any agent $i$ and joint action $\vec a$, $r_i(s, \vec a)=\tilde r_i(\tilde s,\vec a)$ is the reward function.
    \end{itemize}
\end{definition}

\subsection{Preliminary results}
\begin{lemma} \label{lem:policy_space_topology}
    Every agent $i$’s policy space $\Pi_i$ is convex and compact under the maximum norm.
\end{lemma}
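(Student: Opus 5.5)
We view each policy space $\Pi_i$ as a subset of the finite-dimensional space $\mathbb{R}^{\mathcal S\times\mathcal A_i}$, equipped with the maximum norm from the distance definition above. Under Assumption \ref{assumption}, the plan is to identify $\Pi_i$ with the set
\begin{equation*}
    \Pi_i=\Big\{\pi^i\in\mathbb{R}^{\mathcal S\times\mathcal A_i}\;:\;\pi^i(a^i|s)\geq\eta\ \ \forall (s,a^i),\ \ \sum_{a^i\in\mathcal A_i}\pi^i(a^i|s)=1\ \ \forall s\Big\},
\end{equation*}
which is exactly the set of $\eta$-soft stochastic policies. Convexity and compactness will then both follow from the fact that this set is a product of finitely many closed, bounded polytopes.

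\textbf{Convexity.} Take $\pi^i,\pi'^i\in\Pi_i$ and $\lambda\in[0,1]$. For every pair $(s,a^i)$ we have
\begin{equation*}
    \lambda\pi^i(a^i|s)+(1-\lambda)\pi'^i(a^i|s)\geq\lambda\eta+(1-\lambda)\eta=\eta .
\end{equation*}
Summing the convex combination over $a^i$ for a fixed $s$ gives $\lambda\cdot 1+(1-\lambda)\cdot 1=1$. So $\lambda\pi^i+(1-\lambda)\pi'^i$ satisfies every defining constraint and lies in $\Pi_i$. This is the same convex-combination form as the perturbed policies $\pi^i_\eta$ in the Gâteaux derivative definition, so those perturbations indeed stay inside $\Pi_i$.

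\textbf{Compactness.} Because $\mathcal S$ and $\mathcal A_i$ are finite, the ambient space is finite-dimensional, and there the maximum norm induces the usual topology. By Heine–Borel it therefore suffices to show that $\Pi_i$ is closed and bounded.
\begin{itemize}
    \item \emph{Bounded:} every coordinate satisfies $0\le\eta\le\pi^i(a^i|s)\le 1$.
    \item \emph{Closed:} $\Pi_i$ is a finite intersection of preimages of closed sets ($[\eta,\infty)$ and $\{1\}$) under continuous linear maps, namely coordinate evaluations and finite sums.
\end{itemize}
Alternatively, closedness follows directly: if $\pi_k\to\pi$ in max norm, then the inequalities and equalities pass to the limit coordinatewise.

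The only step needing care is fixing what $\Pi_i$ means: the set of \emph{all} $\eta$-soft Markov stationary policies, not some arbitrary subset. With that identification in place the remaining argument is routine. Compactness of each $\Pi_i$ then also yields compactness of the joint space $\prod_i\Pi_i$ by Tychonoff's theorem, or simply by finite-dimensionality. This joint compactness is what will later guarantee a nonempty set of limit points in Theorem \ref{thm:convergence}.
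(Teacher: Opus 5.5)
Your proof is correct and follows essentially the same route as the argument the paper relies on; the paper itself just defers to Lemma 3 in Appendix A.2 of \cite{hatrl}. That argument, like yours, gets convexity because convex combinations preserve normalization and the $\eta$-softness lower bound, and gets compactness because $\Pi_i$ is a closed, bounded subset of the finite-dimensional space $\mathbb{R}^{\mathcal S\times\mathcal A_i}$.
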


\begin{lemma} \label{lem:state_dist_continuity}
    The improper state distribution $d_{\vec\pi}$ is continuous in $\vec\pi$. 
\end{lemma}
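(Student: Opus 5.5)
The plan is to express the improper visitation measure through the Bellman-type linear equation it satisfies, and then use continuity of matrix inversion. Write everything on the finite contextualized state space $\mathcal S=\tilde{\mathcal S}\times\tilde{\mathcal S}$. For a joint policy $\vec\pi$, define the state-to-state transition matrix
\[
P_{\vec\pi}(s,s')=\sum_{\vec a\in\mathcal A}\Big(\prod_{i=1}^n\pi^i(a^i|s)\Big)P(s,\vec a,s').
\]
Each entry is a polynomial in the finitely many coordinates $\pi^i(a^i|s)$. Hence $\vec\pi\mapsto P_{\vec\pi}$ is continuous for the maximum-norm distance adopted above, taken on the product of the policy spaces of Lemma \ref{lem:policy_space_topology}.

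Next I would write the improper state distribution explicitly as
\[
d_{\vec\pi}(s')=\sum_{t=0}^\infty\gamma^t\,\mathbb P(s_t=s'\mid\vec\pi,\rho_0),
\quad\text{i.e.}\quad
d_{\vec\pi}^\top=\rho_0^\top\sum_{t=0}^\infty\gamma^tP_{\vec\pi}^t=\rho_0^\top(I-\gamma P_{\vec\pi})^{-1}.
\]
The identification of the series with the inverse is justified because $P_{\vec\pi}$ is row-stochastic. Its induced $\infty$-norm therefore equals $1$, so $\|\gamma P_{\vec\pi}\|\le\gamma<1$ and the Neumann series converges. In particular $I-\gamma P_{\vec\pi}$ is invertible for every $\vec\pi$, with $\|(I-\gamma P_{\vec\pi})^{-1}\|\le (1-\gamma)^{-1}$ uniformly.

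The last step is continuity of the inverse. Matrix inversion is continuous on the open set of invertible matrices; by Cramer's rule it is a rational function of the entries with nonvanishing denominator. Composing it with the continuous map $\vec\pi\mapsto I-\gamma P_{\vec\pi}$ and then with the fixed linear map $M\mapsto\rho_0^\top M$ gives continuity of $\vec\pi\mapsto d_{\vec\pi}$. Alternatively, one can get an explicit Lipschitz bound from the resolvent identity
\[
(I-\gamma P')^{-1}-(I-\gamma P)^{-1}=\gamma(I-\gamma P')^{-1}(P'-P)(I-\gamma P)^{-1},
\]
which yields $\|d_{\vec\pi'}-d_{\vec\pi}\|\le \frac{\gamma|\mathcal S|}{(1-\gamma)^2}\|P_{\vec\pi'}-P_{\vec\pi}\|$, up to a norm-equivalence constant. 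The difference $\|P_{\vec\pi'}-P_{\vec\pi}\|$ is in turn bounded by a constant times $\sum_i\|\pi'^i-\pi^i\|$, via the telescoping bound on differences of products of numbers in $[0,1]$.

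The main obstacle is not analytic but notational. One must check that the contextualized kernel is still a stochastic matrix on $\mathcal S$, so that the uniform invertibility bound holds. This is the case because the indicator $\mathds 1_{\{\tilde c'=\tilde c\}}$ keeps the total mass equal to one. Finite state and action spaces reduce every continuity claim to finite-dimensional statements, so no $\eta$-softness (Assumption \ref{assumption}) is needed here.
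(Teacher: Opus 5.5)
Your argument is correct, but it takes a different route from the one the paper relies on. The paper gives no proof of its own and defers to Lemma 4 in Appendix A.2 of \cite{hatrl}. That argument works time step by time step. It shows by induction on $t$ that each marginal $\mathbb P(s_t=\cdot\mid\vec\pi)$ is continuous, since it is obtained from $\mathbb P(s_{t-1}=\cdot\mid\vec\pi)$ through finite sums and products of policy entries. It then handles the discounted series by truncating at a horizon $T$, using that the tail is at most $\gamma^T/(1-\gamma)$ for every policy.

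You instead sum the series in closed form as $\rho_0^\top(I-\gamma P_{\vec\pi})^{-1}$ and appeal to continuity of matrix inversion. The inductive route is more elementary, using only continuity of finite operations and a uniform tail bound, but as written it gives only qualitative continuity. Your route is shorter and, through the resolvent identity, yields an explicit Lipschitz bound of order $\frac{\gamma|\mathcal S||\mathcal A|}{(1-\gamma)^2}\sum_i\|\pi'^i-\pi^i\|$. That is stronger than what the paper needs for its later uses of the lemma: the limit $d_{\vec\pi_\eta}\to d_{\vec\pi}$ in Lemma \ref{lem:state_value_gateaux}, and Lemma \ref{lem:continuity_expected_fair_adv}.

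Your check that the contextualized kernel remains row-stochastic covers something the citation silently assumes, since the cited lemma is stated for the original state space. You are also right that $\eta$-softness plays no role here. Finally, replacing $\rho_0$ by a point mass at $s_0$ in your formula also covers the conditional distributions $d_{\vec\pi|s_0}$ used in the appendix.
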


\begin{lemma} \label{lem:q_continuity}
    Let $\vec \pi$ be a policy. Then, for any agent $i$, $Q_i^{\vec\pi}(s,\vec a)$ is  Lipsichtz-continuous in $\vec\pi$.
\end{lemma}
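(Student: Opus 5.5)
The claim is that $Q_i^{\vec\pi}(s,\vec a)$ is Lipschitz in $\vec\pi$ with respect to the maximum norm. The plan is to write $Q_i^{\vec\pi}$ through the Bellman equation in matrix form and then bound how the resolvent changes with the policy. Fix agent $i$. Define the state transition matrix under $\vec\pi$ by
\[
P^{\vec\pi}(s,s')=\sum_{\vec a}\vec\pi(\vec a|s)P(s,\vec a,s'),
\]
and the expected reward $r_i^{\vec\pi}(s)=\sum_{\vec a}\vec\pi(\vec a|s)r_i(s,\vec a)$. Then $V_i^{\vec\pi}=(I-\gamma P^{\vec\pi})^{-1}r_i^{\vec\pi}$, and
\[
Q_i^{\vec\pi}(s,\vec a)=r_i(s,\vec a)+\gamma\sum_{s'}P(s,\vec a,s')V_i^{\vec\pi}(s').
\]
The first term does not depend on $\vec\pi$, and the transition weights sum to one. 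Hence
\[
|Q_i^{\vec\pi}(s,\vec a)-Q_i^{\vec\pi'}(s,\vec a)|\le\gamma\|V_i^{\vec\pi}-V_i^{\vec\pi'}\|,
\]
so it suffices to show that $V_i^{\vec\pi}$ is Lipschitz in $\vec\pi$.

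The first step is to show that the joint policy is Lipschitz in the individual policies. Telescoping the product $\prod_j\pi^j(a^j|s)$ one factor at a time, and using that every factor lies in $[0,1]$, gives
\[
|\vec\pi(\vec a|s)-\vec\pi'(\vec a|s)|\le\sum_j\|\pi^j-\pi'^j\|\le n\|\vec\pi-\vec\pi'\|.
\]
Summing over the finite joint action set $\mathcal A$, the row-sum norms then satisfy
\[
\|P^{\vec\pi}-P^{\vec\pi'}\|_\infty\le n|\mathcal A|\,\|\vec\pi-\vec\pi'\|,\qquad \|r_i^{\vec\pi}-r_i^{\vec\pi'}\|_\infty\le n|\mathcal A|R_{\max}\|\vec\pi-\vec\pi'\|.
\]

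The second step handles the resolvent, which is where the main care is needed. Writing $M_{\vec\pi}=(I-\gamma P^{\vec\pi})^{-1}$, I will use the resolvent identity
\[
M_{\vec\pi}-M_{\vec\pi'}=\gamma M_{\vec\pi}\,(P^{\vec\pi}-P^{\vec\pi'})\,M_{\vec\pi'}.
\]
Because $P^{\vec\pi}$ is row-stochastic, the Neumann series gives the uniform bound $\|M_{\vec\pi}\|_\infty\le 1/(1-\gamma)$. Hence
\[
\|M_{\vec\pi}-M_{\vec\pi'}\|_\infty\le\frac{\gamma n|\mathcal A|}{(1-\gamma)^2}\|\vec\pi-\vec\pi'\|.
\]

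Finally, decompose the value difference as
\[
V_i^{\vec\pi}-V_i^{\vec\pi'}=(M_{\vec\pi}-M_{\vec\pi'})r_i^{\vec\pi}+M_{\vec\pi'}(r_i^{\vec\pi}-r_i^{\vec\pi'}).
\]
Using $\|r_i^{\vec\pi}\|_\infty\le R_{\max}$, this gives
\[
\|V_i^{\vec\pi}-V_i^{\vec\pi'}\|\le n|\mathcal A|R_{\max}\Big(\frac{\gamma}{(1-\gamma)^2}+\frac{1}{1-\gamma}\Big)\|\vec\pi-\vec\pi'\|.
\]
Multiplying by $\gamma$ gives a Lipschitz constant for $Q_i^{\vec\pi}$ that is uniform in $(s,\vec a)$. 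The argument applies verbatim to the contextualized Markov game, since its state space $\tilde{\mathcal S}\times\tilde{\mathcal S}$ is still finite.
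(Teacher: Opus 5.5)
Your proof is correct, but it is organized differently from the argument the paper relies on. The paper gives no proof of its own. It cites Lemma~5 of the HATRL appendix and remarks that the Bellman equation for $Q$ holds for each individual reward $r_i$. That argument stays at the level of $Q$. It expands $Q_i^{\vec\pi}$ and $Q_i^{\vec\pi'}$ by the one-step Bellman equation, adds and subtracts $\vec\pi'(\vec a'|s')Q_i^{\vec\pi}(s',\vec a')$, and takes the maximum over $(s,\vec a)$. This gives a self-bounding inequality
\[
\|Q_i^{\vec\pi}-Q_i^{\vec\pi'}\|\le\gamma|\mathcal A|\,Q_{\max}\,\|\vec\pi-\vec\pi'\|+\gamma\,\|Q_i^{\vec\pi}-Q_i^{\vec\pi'}\|,
\]
which is then rearranged, legitimately since both $Q$'s are bounded by $Q_{\max}=R_{\max}/(1-\gamma)$. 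The result is a Lipschitz constant $\gamma|\mathcal A|R_{\max}/(1-\gamma)^2$ with respect to the joint action distribution.

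Your route through $V_i^{\vec\pi}=(I-\gamma P^{\vec\pi})^{-1}r_i^{\vec\pi}$ and the resolvent identity is an explicit form of the same contraction idea. Your Neumann bound $\|M_{\vec\pi}\|_\infty\le 1/(1-\gamma)$ is what the rearrangement step encodes implicitly. The self-bounding version is shorter. Yours makes the perturbation structure visible, and it handles the passage from per-agent policies $\pi^j$ to the product $\vec\pi$ by telescoping. The paper never addresses that step, even though its policy spaces $\Pi^i$ are per-agent.

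Incidentally, your constant simplifies: $\gamma/(1-\gamma)^2+1/(1-\gamma)=1/(1-\gamma)^2$. So you obtain $\gamma n|\mathcal A|R_{\max}/(1-\gamma)^2$, which is the $Q$-level constant times the factor $n$ from your telescoping step.
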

\begin{proof}
    For proof of the lemmas \ref{lem:policy_space_topology}, \ref{lem:state_dist_continuity}, \ref{lem:q_continuity}, see \cite{hatrl} (Appendix A.2 Lemma 3, 4, 5). Notice that in \cite{hatrl}, the authors prove the lemmas for a global reward function but the proof holds for any reward functions. The proof relies on the Bellman Equations for $Q_\pi$ (see \cite{sutton}, Chapter 3.5) which also hold in our Multi-agent settings.  
\end{proof}

\begin{lemma} \label{lem:alpha_fairness_lip}
    The function $U_\alpha$ is Lipschitz-continuous on $\mathbb R^+$.
\end{lemma}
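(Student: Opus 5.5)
The plan is to bound the derivative of $U_\alpha$ and apply the mean value theorem. One caveat must be addressed first. On the whole of $\mathbb R^+$ the statement cannot hold literally for $\alpha>0$, because $U_\alpha'(x)=x^{-\alpha}$ blows up as $x\to 0^+$. So I would prove it on the set where $U_\alpha$ is actually evaluated in this paper, namely $[\nu,+\infty)$, and more precisely on $[\nu,\,\nu+R_{\max}/(1-\gamma)]$. This restriction is harmless for the later proofs. Since rewards are positive and bounded by $R_{\max}$, every value function satisfies $0\le V_j^{\vec\pi}(s)\le R_{\max}/(1-\gamma)$. Hence every argument $\nu+V_j^{\vec\pi}(s_0)$ appearing in $J$ lies in this interval.

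First, I would check that on $(0,+\infty)$ the function $U_\alpha$ is $C^1$ with $U_\alpha'(x)=x^{-\alpha}$ in both cases. For $\alpha\neq 1$ this follows from $\frac{d}{dx}\frac{x^{1-\alpha}}{1-\alpha}=x^{-\alpha}$. For $\alpha=1$ it follows from $\frac{d}{dx}\ln x = x^{-1}$.

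Second, for $\alpha\ge 0$ the map $x\mapsto x^{-\alpha}$ is nonincreasing on $[\nu,+\infty)$. This gives $0<U_\alpha'(x)\le \nu^{-\alpha}$ there. If one also wants to cover $\alpha<0$, the upper endpoint $\nu+R_{\max}/(1-\gamma)$ supplies the bound instead. Either way, $U_\alpha'$ is continuous on a compact interval and hence bounded by some constant $L_\alpha$.

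Third, the mean value theorem gives $|U_\alpha(x)-U_\alpha(y)|\le L_\alpha|x-y|$ for all $x,y$ in the interval, with $L_\alpha=\nu^{-\alpha}$ when $\alpha\ge 0$. This is exactly the constant $\nu^{-\alpha}$ that appears in $C$ in Lemma~\ref{lem:fair_surrogate}. For later use I would also record the second derivative bound $|U_\alpha''(x)|=\alpha x^{-1-\alpha}\le \alpha\nu^{-1-\alpha}$ on the same domain, which matches the first term of $C$.

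The main obstacle is not computational. It is getting the domain right: the stated ``$\mathbb R^+$'' must be read as the range $[\nu,+\infty)$ guaranteed by the positivity shift $\nu>0$ and the nonnegativity of rewards. Without that shift the Lipschitz property fails at $0$ for every $\alpha>0$.
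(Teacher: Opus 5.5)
Your proof is correct and follows the paper's argument: the paper also computes $U_\alpha'(x)=x^{-\alpha}$, bounds it by $\nu^{-\alpha}$ on $[\nu,+\infty)$ for $\alpha\ge 0$, and concludes that $U_\alpha$ is Lipschitz with constant $\nu^{-\alpha}$. You are right that the statement cannot hold on all of $\mathbb R^+$ for $\alpha>0$ and must be read on $[\nu,+\infty)$; that is the domain the paper's proof works on, without flagging the discrepancy with the statement.
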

\begin{proof}
    Let us recall $U_\alpha(x)=\frac{x^{1-\alpha}}{1-\alpha}$ for $\alpha\neq 1$ and $U_\alpha(x)=\ln(x)$ for $\alpha=1$.
    
    The $\alpha$-fairness function is differentiable on $[\nu;+\infty[$ and for any $\alpha\geq 0$,
    \begin{equation*}
        U_\alpha'(x)=x^{-\alpha}
    \end{equation*}

    On $[\nu;+\infty[$, $\left|U_\alpha'(x)\right|$ is bounded by $\epsilon^{-\alpha}$. Thus, $U_\alpha$ is Lipschitz-continuous with constant $\nu^{-\alpha}$. 

\end{proof}

\begin{corollary} \label{cor:value_function_continuity}
    From Lemma \ref{lem:q_continuity} and Lemma \ref{lem:alpha_fairness_lip}, for any agent $i$, we obtain that the following functions are Lipschitz-continuous in $\pi$:
    \begin{itemize}
    \item the state value function $V_i^{\vec\pi}(s) = \sum_{\vec a} \vec\pi(\vec a|s)Q_i^{\vec\pi}(s, \vec a)$,
    \item the advantage function $A_i^{\vec\pi}(s, \vec a) = Q_i^{\vec\pi}(s, \vec a) - V_i ^{\vec\pi}(s)$,
    \item the fair global objective $J(\vec\pi) = \mathbb{E}_{s \sim \rho_0} [\sumi[i] U_\alpha\left(\nu + V_i^{\vec\pi}(s)\right)]$,
    \item and the fair advantage function $A_F^{\vec{\pi}}(s,\vec a)=\sumi[j]\frac{A_j^{\vec\pi}(s,\vec a)}{\mathcal V_j^{\vec\pi}(s)}$
    \end{itemize}
\end{corollary}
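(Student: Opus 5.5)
The corollary follows by closing Lipschitz continuity under composition, sums, products and quotients. Throughout we use that the policy spaces are compact and finite-dimensional (Lemma \ref{lem:policy_space_topology}) and that all quantities are bounded: $0\le Q_i^{\vec\pi}\le R_{\max}/(1-\gamma)$ because the rewards are positive and bounded. Write $L_Q$ for the Lipschitz constant of $Q_i^{\vec\pi}$ given by Lemma \ref{lem:q_continuity}; it is uniform over the finitely many $(s,\vec a)$.

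\textbf{Value function.} I write
\[
V_i^{\vec\pi'}(s)-V_i^{\vec\pi}(s)=\sum_{\vec a}\bigl(\vec\pi'(\vec a|s)-\vec\pi(\vec a|s)\bigr)Q_i^{\vec\pi'}(s,\vec a)+\sum_{\vec a}\vec\pi(\vec a|s)\bigl(Q_i^{\vec\pi'}-Q_i^{\vec\pi}\bigr)(s,\vec a).
\]
The product $\vec\pi(\vec a|s)=\prod_i\pi^i(a^i|s)$ is Lipschitz in the max norm with constant $n$, since each factor lies in $[0,1]$ (telescope over the factors). Hence the first term is at most $|\mathcal A|\,n\,R_{\max}/(1-\gamma)\,\|\vec\pi'-\vec\pi\|$. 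The second term is at most $L_Q\|\vec\pi'-\vec\pi\|$. The advantage $A_i^{\vec\pi}=Q_i^{\vec\pi}-V_i^{\vec\pi}$ is then Lipschitz as a difference of Lipschitz functions.

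\textbf{Objective.} Since $V_i^{\vec\pi}\ge 0$, we have $\nu+V_i^{\vec\pi}(s)\in[\nu,\infty)$. Lemma \ref{lem:alpha_fairness_lip} gives $U_\alpha$ Lipschitz there with constant $\nu^{-\alpha}$. The composition $U_\alpha(\nu+V_i^{\vec\pi}(s))$ is therefore Lipschitz, and so are the finite sum over $i$ and the expectation over $\rho_0$ (a convex combination, since $\rho_0$ does not depend on $\vec\pi$).

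\textbf{Fair advantage.} This is the only step needing care. The map $x\mapsto(\nu+x)^{-\alpha}$ on $x\in[0,R_{\max}/(1-\gamma)]$ has derivative bounded by $\alpha\nu^{-\alpha-1}$, so $1/\mathcal V_j^{\vec\pi}(s)$ is Lipschitz in $\vec\pi$ via $V_j^{\vec\pi}(\tilde c,\tilde c)$. It is also bounded by $\nu^{-\alpha}$. The advantage satisfies $|A_j^{\vec\pi}|\le R_{\max}/(1-\gamma)$.

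Products of bounded Lipschitz functions are Lipschitz, using
\[
|fg-f'g'|\le|f|\,|g-g'|+|g'|\,|f-f'|.
\]
Summing over $j$ then gives $A_F^{\vec\pi}$ Lipschitz.

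The main obstacle is keeping every bound uniform over the policy space. The paper states $U_\alpha$ is Lipschitz on $\mathbb R^+$, which fails near $0$, so I apply it only on $[\nu,\infty)$ via the shift by $\nu$. I also rely on positivity of the rewards to keep $\nu+V_j$ bounded away from zero.
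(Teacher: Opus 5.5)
Your proposal is correct and takes the same route as the paper. The paper gives no separate argument: it asserts the corollary directly from Lemma \ref{lem:q_continuity} and Lemma \ref{lem:alpha_fairness_lip}, mirroring the analogous corollary in HATRL. Your write-up is the explicit version of that closure argument under sums, products and composition. Your caveat is also right: apply $U_\alpha$ only on $[\nu,\infty)$ and $x\mapsto(\nu+x)^{-\alpha}$ only on $[0,\infty)$, using positive rewards to keep $\nu+V_j^{\vec\pi}\ge\nu$. That is exactly what the paper's proof of Lemma \ref{lem:alpha_fairness_lip} establishes, despite its statement claiming Lipschitz continuity on all of $\mathbb R^+$.
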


\begin{lemma}
\label{lem:continuity_expected_fair_adv}
    Let $\vec\pi$ and $\vec\pi'$ be some policies. The quantity $\E_{s\sim d_{\vec\pi}, \vec a\sim \vec \pi'(\cdot|s)}\left[A_F^{\vec{\pi}}(s,\vec a)\right]$ is continuous with $\vec\pi$.
\end{lemma}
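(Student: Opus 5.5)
We prove continuity in $\vec\pi$ of
\[
F(\vec\pi)=\E_{s\sim d_{\vec\pi},\, \vec a\sim\vec\pi'(\cdot|s)}\left[A_F^{\vec\pi}(s,\vec a)\right]
\]
by writing it as a finite sum and showing each term is a product of continuous functions of $\vec\pi$. Since the state and action spaces are finite,
\[
F(\vec\pi)=\sum_{s\in\mathcal S} d_{\vec\pi}(s)\sum_{\vec a\in\mathcal A}\vec\pi'(\vec a|s)\,A_F^{\vec\pi}(s,\vec a).
\]
Here $\vec\pi'$ is fixed, so each weight $\vec\pi'(\vec a|s)=\prod_i\pi'^i(a^i|s)$ is a constant. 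It therefore suffices to show that $\vec\pi\mapsto d_{\vec\pi}(s)$ and $\vec\pi\mapsto A_F^{\vec\pi}(s,\vec a)$ are continuous for each fixed pair $(s,\vec a)$. Finite sums and products of continuous functions are then continuous.

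\textbf{Step 1: the state distribution.} Continuity of $d_{\vec\pi}$ in $\vec\pi$ is exactly Lemma \ref{lem:state_dist_continuity}. Under the max-norm distance, continuity of $d_{\vec\pi}$ as a map into functions on the finite set $\mathcal S$ is equivalent to continuity of each coordinate $d_{\vec\pi}(s)$.

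\textbf{Step 2: the fair advantage.} Corollary \ref{cor:value_function_continuity} already lists $A_F^{\vec\pi}$ as Lipschitz in $\vec\pi$. I would still check this point, because the weights $1/\mathcal V_j^{\vec\pi}$ are quotients. We have $A_j^{\vec\pi}=Q_j^{\vec\pi}-V_j^{\vec\pi}$, which is continuous by Lemma \ref{lem:q_continuity} together with $V_j^{\vec\pi}(s)=\sum_{\vec a}\vec\pi(\vec a|s)Q_j^{\vec\pi}(s,\vec a)$, a finite sum of products of continuous functions. Next, $\mathcal V_j^{\vec\pi}(s)=(\nu+V_j^{\vec\pi}(\tilde c,\tilde c))^\alpha$. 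Rewards are positive, so $V_j\ge 0$ and hence $\nu+V_j\ge\nu>0$. The map $x\mapsto x^{-\alpha}$ is continuous (indeed Lipschitz with constant $\alpha\nu^{-1-\alpha}$) on $[\nu,+\infty)$. Therefore $1/\mathcal V_j^{\vec\pi}(s)$ is continuous in $\vec\pi$ and bounded by $\nu^{-\alpha}$, and the finite sum over $j$ that defines $A_F^{\vec\pi}$ is continuous.

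\textbf{Step 3: conclusion.} Combining Steps 1 and 2, each term $d_{\vec\pi}(s)\vec\pi'(\vec a|s)A_F^{\vec\pi}(s,\vec a)$ is continuous in $\vec\pi$, and so is their finite sum $F$. If a quantitative statement is wanted, the boundedness of all factors gives a Lipschitz estimate. Specifically, $d_{\vec\pi}$ has total mass $1/(1-\gamma)$, $|A_F^{\vec\pi}|\le n\nu^{-\alpha}\cdot 2R_{\max}/(1-\gamma)$, and Lipschitz continuity of $d_{\vec\pi}$ follows from the resolvent formula $d_{\vec\pi}=\rho_0^\top(I-\gamma P_{\vec\pi})^{-1}$.

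The only delicate point is Step 2. We must keep the denominator $\mathcal V_j^{\vec\pi}$ uniformly bounded away from zero, which relies on the positivity of rewards and $\nu>0$. The remaining steps are routine bookkeeping over finite sums.
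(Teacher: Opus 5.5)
Your proof is correct and follows the paper's argument. The paper also expands the expectation as the finite sum $\sum_s\sum_{\vec a} d_{\vec\pi}(s)\,\vec\pi'(\vec a|s)\,A_F^{\vec\pi}(s,\vec a)$ and concludes from the continuity of $d_{\vec\pi}$ (Lemma~\ref{lem:state_dist_continuity}) and of $A_F^{\vec\pi}$ (Corollary~\ref{cor:value_function_continuity}); your Step 2, checking that the weights $1/\mathcal V_j^{\vec\pi}$ stay continuous because $\nu+V_j^{\vec\pi}\ge\nu>0$, only makes explicit what the paper leaves to the corollary (and you rightly treat $\vec\pi'(\vec a|s)$ as fixed weights, where the paper's displayed sum has the typo $\vec\pi(\vec a|s)$).
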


\begin{proof}
    We have 
    \begin{equation}
        \E_{s\sim d_{\vec\pi}, \vec a\sim \vec \pi'(\cdot|s)}\left[A_F^{\vec{\pi}}(s,\vec a)\right]=\sum_s\sum_{\vec a}d_{\vec\pi}(s)\vec\pi(\vec a|s) A_F^{\vec{\pi}}(s,\vec a),
    \end{equation}
    the continuity is induced by continuity of $d_{\vec{\pi}}$ (Lemma \ref{lem:state_dist_continuity}) and $A_F^{\vec{\pi}}$ (Corollary \ref{cor:value_function_continuity}).
\end{proof}

\section{Heterogeneous Agent Trust Region Learning with fair objective}
In this section, the goal is to build a Trust Region Learning framework for the global objective we defined previously. 
\subsection{Previous results}
In this section, we list the useful lemma proved in TRL \cite{trl} that holds in our multi-agent settings. 
\begin{lemma}\label{lem:diff_v0}
    Let $\vec\pi$ and $\vec\pi'$ be two stochastic policies' profiles. For any agent $i$, the difference in their expected returns for a given initial state $s_0\in \mathcal{S}$ is given by:
    
    \begin{equation*}
        V_i^{\vec{\pi}'}(s_0) - V_i^{\vec{\pi}}(s_0) = \mathbb{E}_{\tau \sim \vec{\pi}'} \left[ \sum_{t=0}^{\infty} \gamma^t A_i^{\vec{\pi}}(s_t, \vec{a}_t) \mid s_0 \right]
    \end{equation*}
\end{lemma}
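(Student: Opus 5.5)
This is the performance difference lemma of Kakade and Langford, stated for agent $i$'s value function in the multi-agent setting. The plan is a telescoping argument along trajectories generated by $\vec\pi'$. Two facts about the fixed joint policy $\vec\pi$ make it work: the reward depends only on $(s,\vec a)$, and for each $i$ the value $V_i^{\vec\pi}$ satisfies its own Bellman equation.

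\textbf{Step 1 (Bellman expectation).} For every state $s$ and joint action $\vec a$,
\[
Q_i^{\vec\pi}(s,\vec a)=r_i(s,\vec a)+\gamma\,\E_{s'\sim P(\cdot\mid s,\vec a)}\bigl[V_i^{\vec\pi}(s')\bigr].
\]
This follows by conditioning on the first transition in the definition of $Q_i^{\vec\pi}$, exactly as in the single-agent case. It depends only on $(s,\vec a)$ and not on the policy that chose $\vec a$. Hence for a trajectory $\tau\sim\vec\pi'$ with $\vec a_t\sim\vec\pi'(\cdot\mid s_t)$ and $s_{t+1}\sim P(\cdot\mid s_t,\vec a_t)$,
\[
\E_{\tau\sim\vec\pi'}\bigl[A_i^{\vec\pi}(s_t,\vec a_t)\mid s_0\bigr]=\E_{\tau\sim\vec\pi'}\bigl[r_i(s_t,\vec a_t)+\gamma V_i^{\vec\pi}(s_{t+1})-V_i^{\vec\pi}(s_t)\mid s_0\bigr].
\]
This uses the tower property, conditioning on $(s_t,\vec a_t)$.

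\textbf{Step 2 (sum and telescope).} Multiply by $\gamma^t$ and sum over $t$. Since rewards are bounded by $R_{\max}$ and the spaces are finite, $|V_i^{\vec\pi}|\le R_{\max}/(1-\gamma)$ and $|A_i^{\vec\pi}|\le 2R_{\max}/(1-\gamma)$. The series therefore converges absolutely, and Fubini/dominated convergence lets us swap sum and expectation. The value terms telescope:
\[
\sum_{t\ge 0}\gamma^t\bigl(\gamma V_i^{\vec\pi}(s_{t+1})-V_i^{\vec\pi}(s_t)\bigr)=-V_i^{\vec\pi}(s_0).
\]
This holds because the partial sums equal $\gamma^{T+1}V_i^{\vec\pi}(s_{T+1})-V_i^{\vec\pi}(s_0)$, and the first term vanishes by boundedness as $T\to\infty$. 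The remaining reward terms give
\[
\E_{\tau\sim\vec\pi'}\Bigl[\sum_{t}\gamma^t r_i(s_t,\vec a_t)\mid s_0\Bigr]=V_i^{\vec\pi'}(s_0),
\]
which yields the claim.

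\textbf{Main obstacle.} The only delicate point is justifying the interchange of infinite sum and expectation and the vanishing of the tail term, which is what Step 2 handles. The bounded-reward assumption and $\gamma<1$ settle both.

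A remark on the contextualized state: the context coordinate $\tilde c$ is frozen by the transition kernel $P$ of $\mathcal M$. So the argument is carried out verbatim in $\mathcal M$, and conditioning on $s_0$ fixes the context throughout.
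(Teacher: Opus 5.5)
Your proposal is correct and follows the same route as the paper. Inside the $\vec{\pi}'$-expectation you rewrite $A_i^{\vec{\pi}}(s_t,\vec{a}_t)$ via the one-step Bellman identity as $r_i(s_t,\vec{a}_t)+\gamma V_i^{\vec{\pi}}(s_{t+1})-V_i^{\vec{\pi}}(s_t)$, identify the discounted reward sum with $V_i^{\vec{\pi}'}(s_0)$, and telescope the value terms so that only $-V_i^{\vec{\pi}}(s_0)$ survives once the tail $\gamma^{T}V_i^{\vec{\pi}}(s_T)$ vanishes. Your version is slightly more careful than the paper's, since you justify the tower-property step and the interchange of sum and expectation via boundedness, which the paper leaves implicit.
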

\begin{proof}
    The proof is similar to the performance difference lemma demonstrated in \cite{trl} (Lemma 1) or in \cite{kakade} (Lemma 6.1):
    \begin{align*}
        \mathbb{E}_{\tau \sim \vec{\pi}'} \left[ \sum_{t=0}^{\infty} \gamma^t A_i^{\vec{\pi}}(s_t, \vec{a}_t) \mid s_0 \right]
        &= \mathbb{E}_{\tau \sim \vec{\pi}'} \left[ \sum_{t=0}^{\infty} \gamma^t \left( r_i(s_t, \vec{a}_t) + \gamma V_i^{\vec{\pi}}(s_{t+1}) - V_i^{\vec{\pi}}(s_t)  \right) \mid s_0 \right] \\ 
        &= \mathbb{E}_{\tau \sim \vec{\pi}'} \left[ \sum_{t=0}^{\infty} \gamma^t r_i(s_t, \vec{a}_t) \mid s_0 \right] \\
        &\qquad\qquad\qquad+ \mathbb{E}_{\tau \sim \vec{\pi}'} \left[ \sum_{t=0}^{\infty} \gamma^t \left( \gamma V_i^{\vec{\pi}}(s_{t+1}) - V^{\vec{\pi}}(s_t) \right) \mid s_0 \right] \\
        &= V_i^{\vec{\pi}'}(s_0) + \mathbb{E}_{\tau \sim \vec{\pi}'} \left[ \sum_{t=0}^{\infty} \left( \gamma^{t+1} V_i^{\vec{\pi}}(s_{t+1}) - \gamma^t V_i^{\vec{\pi}}(s_t) \right) \mid s_0 \right] \\
        &= V_i^{\vec{\pi}'}(s_0) + \mathbb{E}_{\tau \sim \vec{\pi}'} \left[ -V_i^{\vec{\pi}}(s_0) + \lim_{T \to \infty} \gamma^T V_i^{\vec{\pi}}(s_T) \mid s_0 \right]\\
        &= V_i^{\vec{\pi}'}(s_0) - V_i^{\vec{\pi}}(s_0)
    \end{align*}
\end{proof}

\begin{lemma} \label{lem:expected_advantage_null}
    For any agent $i$, any state $s \in \mathcal{S}$ and a joint strategy profile $\vec{\pi}$, the following result holds:
    \begin{equation*}
        \mathbb{E}_{\vec{a} \sim \vec{\pi}(\cdot|s)} [A_i^{\vec{\pi}}(s, \vec{a})] = 0
    \end{equation*}
\end{lemma}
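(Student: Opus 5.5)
Writing $A_i^{\vec\pi}(s,\vec a)=Q_i^{\vec\pi}(s,\vec a)-V_i^{\vec\pi}(s)$ and noting that $V_i^{\vec\pi}(s)$ does not depend on $\vec a$, we get
\begin{equation*}
\mathbb{E}_{\vec a\sim\vec\pi(\cdot|s)}\left[A_i^{\vec\pi}(s,\vec a)\right]=\sum_{\vec a}\vec\pi(\vec a|s)Q_i^{\vec\pi}(s,\vec a)-V_i^{\vec\pi}(s)\sum_{\vec a}\vec\pi(\vec a|s).
\end{equation*}
The second sum equals $1$ because $\vec\pi(\cdot|s)=\prod_{j}\pi^j(\cdot|s)$ is a probability distribution on the finite joint action space $\mathcal A$.

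It remains to show $\sum_{\vec a}\vec\pi(\vec a|s)Q_i^{\vec\pi}(s,\vec a)=V_i^{\vec\pi}(s)$. This is the first identity listed in Corollary \ref{cor:value_function_continuity}, and it also follows directly from the definitions. We condition the expectation defining $V_i^{\vec\pi}(s)=\mathbb{E}_{\tau}\left[\sum_{t=0}^\infty\gamma^t r_i(s_t,\vec a_t)\mid s_0=s\right]$ on the first joint action $\vec a_0$. Under $\vec\pi$, $\vec a_0\sim\vec\pi(\cdot|s)$, and the conditional expectation given $\vec a_0=\vec a$ is $Q_i^{\vec\pi}(s,\vec a)$. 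The tower property then gives the identity. The interchange is legitimate because rewards are bounded by $R_{\max}$ and $\gamma<1$, so the discounted sum is absolutely integrable.

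Substituting back, the expectation equals $V_i^{\vec\pi}(s)-V_i^{\vec\pi}(s)=0$. There is no real obstacle here. The only point requiring care is that the identity is stated for contextualized states $s=(\tilde s,\tilde c)$; it holds unchanged because the contextualized game $\mathcal M$ is itself a Markov game with bounded rewards.
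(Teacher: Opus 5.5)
Your proof is correct and follows the same route as the paper: split the advantage by linearity, pull out $V_i^{\vec\pi}(s)$, and use $\mathbb{E}_{\vec a\sim\vec\pi(\cdot|s)}[Q_i^{\vec\pi}(s,\vec a)]=V_i^{\vec\pi}(s)$. The paper simply asserts that last identity, whereas you justify it by conditioning on the first joint action. That extra justification is welcome but not essential.
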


\begin{proof} For any state $s$, we have
\begin{align*}
    \mathbb{E}_{\vec{a} \sim \vec{\pi}(\cdot|s)} [A_i^{\vec{\pi}}(s, \vec{a})]
    &=\mathbb{E}_{\vec{a} \sim \vec{\pi}(\cdot|s)} [Q_i^{\vec{\pi}}(s, \vec{a}) - V_i^{\vec{\pi}}(s)] \\
    &= \mathbb{E}_{\vec{a} \sim \vec{\pi}(\cdot|s)} [Q_i^{\vec{\pi}}(s, \vec{a})] - \mathbb{E}_{\vec{a} \sim \vec{\pi}(\cdot|s)} [V_i^{\vec{\pi}}(s)]\\
    &=V_i^{\vec{\pi}}(s) - V_i^{\vec{\pi}}(s) \\
    &= 0
\end{align*}
\end{proof}

\begin{lemma} \label{lem:borne_sup_advantage}
    Let us define $\bar{A}_i^{\vec{\pi}',\, \vec{\pi}}(s)=\mathbb{E}_{\vec{a} \sim \vec{\pi}'(\cdot|s)} \left[A_i^{\vec{\pi}}(s, \vec{a})\right]$.
    For any state $s \in \mathcal{S}$ and any $\kappa$-coupled policy pair $(\vec\pi,\vec\pi')$, the magnitude of the expected joint advantage function $A^{\vec{\pi}}(s, \vec{a})$ evaluated under the policy $\vec{\pi}'$ is upper-bounded by the Total Variation (TV) distance between the two profiles:
    \begin{equation*}
        \left| \mathbb{E}_{\vec{a} \sim \vec{\pi}'(\cdot|s)} [A_i^{\vec{\pi}}(s, \vec{a})] \right| \leq 2 D_{\text{TV}}\left(\vec{\pi}'(\cdot|s) \parallel \vec{\pi}(\cdot|s)\right) \max_{\vec{a}} |A_i^{\vec{\pi}}(s, \vec{a})|
    \end{equation*}
\end{lemma}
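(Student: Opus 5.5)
The plan is to subtract from the expectation under $\vec\pi'$ the same expectation under $\vec\pi$, which is zero, and then control the resulting difference by the total variation distance. Fix a state $s$ and an agent $i$. By Lemma \ref{lem:expected_advantage_null}, $\mathbb{E}_{\vec a\sim\vec\pi(\cdot|s)}[A_i^{\vec\pi}(s,\vec a)]=0$. Since the action space is finite, this gives
\begin{equation*}
\mathbb{E}_{\vec a\sim\vec\pi'(\cdot|s)}\left[A_i^{\vec\pi}(s,\vec a)\right]=\sum_{\vec a}\left(\vec\pi'(\vec a|s)-\vec\pi(\vec a|s)\right)A_i^{\vec\pi}(s,\vec a).
\end{equation*}

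Next I take absolute values and use the triangle inequality, pulling out $\max_{\vec a}|A_i^{\vec\pi}(s,\vec a)|$. What remains is $\sum_{\vec a}|\vec\pi'(\vec a|s)-\vec\pi(\vec a|s)|$. With the convention $D_{\text{TV}}(p\parallel q)=\tfrac12\sum_x|p(x)-q(x)|$, this sum equals $2D_{\text{TV}}(\vec\pi'(\cdot|s)\parallel\vec\pi(\cdot|s))$, which is exactly the claimed bound.

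An alternative is the coupling argument of TRPO. A $\kappa$-coupled pair $(\vec a,\vec a')$ satisfies $P(\vec a\neq\vec a')\le\kappa$. The target expectation can then be written as $\mathbb{E}[A_i^{\vec\pi}(s,\vec a')-A_i^{\vec\pi}(s,\vec a)]$, and this difference vanishes on the event $\{\vec a=\vec a'\}$. On the complementary event it is at most $2\max_{\vec a}|A_i^{\vec\pi}(s,\vec a)|$, so the expectation is bounded by $2\kappa\max_{\vec a}|A_i^{\vec\pi}(s,\vec a)|$. Choosing the maximal coupling gives $\kappa=D_{\text{TV}}$, which recovers the bound. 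However, the direct computation above does not need the coupling hypothesis at all.

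There is no real obstacle in either route. The only point needing care is the normalization of $D_{\text{TV}}$: the factor $2$ must be matched with the $\tfrac12$ in the definition of total variation.
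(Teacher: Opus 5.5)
Your proposal is correct and follows the paper's own proof: subtract the zero expectation from Lemma \ref{lem:expected_advantage_null}, apply the triangle inequality, pull out $\max_{\vec a}|A_i^{\vec\pi}(s,\vec a)|$, and identify $\sum_{\vec a}|\vec\pi'(\vec a|s)-\vec\pi(\vec a|s)|$ with $2D_{\text{TV}}$. Your remark also holds for the paper's proof, which never actually uses the $\kappa$-coupling hypothesis.
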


\begin{proof}
    We use the previous Lemma \ref{lem:expected_advantage_null} to write $0$ differently:
    \begin{align*}
        \left|\bar{A}_i^{\vec{\pi}',\, \vec{\pi}}(s)\right|
        &=\left| \mathbb{E}_{\vec{a} \sim \vec{\pi}'(\cdot|s)} \left[A_i^{\vec{\pi}}(s, \vec{a})\right] \right| \\
        &= \left| \mathbb{E}_{\vec{a} \sim \vec{\pi}'(\cdot|s)} [A_i^{\vec{\pi}}(s, \vec{a})] - \mathbb{E}_{\vec{a} \sim \vec{\pi}(\cdot|s)} [A_i^{\vec{\pi}}(s, \vec{a})] \right| \\
        &= \left| \sum_{\vec{a}} \left( \vec{\pi}'(\vec{a}|s) - \vec{\pi}(\vec{a}|s) \right) A_i^{\vec{\pi}}(s, \vec{a}) \right| \\
        &\leq \sum_{\vec{a}} \left| \vec{\pi}'(\vec{a}|s) - \vec{\pi}(\vec{a}|s) \right| \cdot \max_{\vec{a}} |A_i^{\vec{\pi}}(s, \vec{a})| \\
        &\leq 2 \epsilon_i(s) D_{\text{TV}}(\vec{\pi}'(\cdot|s) \parallel \vec{\pi}(\cdot|s)),
    \end{align*}
    where $\epsilon_i(s)=\max_{\vec a}|A_i^{\vec{\pi}}(s, \vec{a})|$.
\end{proof}

\begin{lemma} \label{lem:diff_advantage}
    Let $\kappa=\dkl{\vec\pi}{\vec\pi'}=\max_sD_{\text{KL}}\left(\vec\pi(\cdot|s), \vec\pi'(\cdot|s)\right)$
    Let $(\vec{\pi}, \vec{\pi}')$ be an $\kappa$-coupled policy pair. Then, the following statement holds:
    \begin{align*}
        \left| \mathbb{E}_{s_t \sim \vec{\pi}'} [A_i^{\vec{\pi}',\, \vec{\pi}}(s_t)] - \mathbb{E}_{s_t \sim\vec\pi} [A_i^{\vec{\pi}',\, \vec{\pi}}
        (s_t)] \right|
        &\leq 4\kappa \max_{s,a} |A_i^{\vec\pi}(s, a)| \\
        &\le 4\kappa(1 - (1 - \kappa)^t) \max_{s,a} |A_i^{\vec{\pi}}(s, a)|
    \end{align*}
\end{lemma}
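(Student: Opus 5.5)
This is the standard TRPO coupling argument (Schulman et al., Lemma 3), applied agent by agent. The quantity $\bar A_i^{\vec\pi',\vec\pi}(s_t)$ depends only on the state. So the difference of its expectations under the two state distributions at time $t$ is controlled by two things: how much these distributions differ, and a uniform bound on $|\bar A_i^{\vec\pi',\vec\pi}|$.

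\textbf{Step 1 (coupling).} I read ``$\kappa$-coupled'' in the TRPO sense. There is a joint distribution of $(\vec a,\vec a')$ given $s$ with marginals $\vec\pi(\cdot|s)$ and $\vec\pi'(\cdot|s)$ and $\mathbb P(\vec a\neq\vec a'|s)\le\kappa$ for all $s$. Such a coupling exists whenever $\max_s D_{\text{TV}}(\vec\pi(\cdot|s)\,\|\,\vec\pi'(\cdot|s))\le\kappa$, by the maximal coupling lemma. Pinsker's inequality links this to the KL quantity in the statement, and I would take the TV version of $\kappa$ as the working definition.

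Run the two trajectories on the same probability space, using the coupled actions at every step and shared transition randomness whenever the actions agree. Let $n_t$ be the number of disagreements before time $t$. Then $\mathbb P(n_t=0)\ge(1-\kappa)^t$, and on $\{n_t=0\}$ we have $s_t=s'_t$.

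\textbf{Step 2 (decompose).} Condition on $n_t=0$ versus $n_t>0$. The $n_t=0$ contributions to the two expectations coincide and cancel. What remains is
\[
\mathbb P(n_t>0)\,\Bigl|\E_{s_t\sim\vec\pi'\mid n_t>0}[\bar A_i^{\vec\pi',\vec\pi}(s_t)]-\E_{s_t\sim\vec\pi\mid n_t>0}[\bar A_i^{\vec\pi',\vec\pi}(s_t)]\Bigr|.
\]
The absolute difference is at most $2\max_s|\bar A_i^{\vec\pi',\vec\pi}(s)|$.

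\textbf{Step 3 (bound).} Lemma \ref{lem:borne_sup_advantage} gives $|\bar A_i^{\vec\pi',\vec\pi}(s)|\le 2\kappa\max_{s,\vec a}|A_i^{\vec\pi}(s,\vec a)|$. Combining this with $\mathbb P(n_t>0)\le 1-(1-\kappa)^t$ yields $4\kappa(1-(1-\kappa)^t)\max|A_i^{\vec\pi}|$. The first, cruder inequality follows by using $\mathbb P(n_t>0)\le1$. (As stated, the chain of inequalities reads backwards, since $1-(1-\kappa)^t\le1$ makes the second bound the smaller one. I would prove the sharper bound, which implies the cruder one.)

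\textbf{Main obstacle.} The hard part is Step 1: constructing the coupling for joint policies and checking that the state processes really stay equal while the actions agree. For this I would use the product-of-marginals structure, or simply couple the joint action directly, since only the joint distribution matters. The rest is bookkeeping identical to the single-agent case, because the reward index $i$ plays no role in the coupling.
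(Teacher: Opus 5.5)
Your proposal is correct and follows essentially the same route as the paper. Both use a coupling with per-step disagreement probability at most $\kappa$, split on $\{n_t=0\}$ versus $\{n_t>0\}$, cancel the agreement term, and combine Lemma \ref{lem:borne_sup_advantage} with $\mathbb P(n_t>0)\le 1-(1-\kappa)^t$. Both of your adjustments are warranted. The coupling the paper cites (Levin--Peres, Prop.~4.7) only gives disagreement probability $\max_s D_{\text{TV}}$, so $\kappa$ must be read in the TV sense; Pinsker yields only $D_{\text{TV}}\le\sqrt{D_{\text{KL}}/2}$, so it does not recover the KL version linearly. The displayed chain in the statement is indeed misordered, and what the argument actually establishes is the sharper bound $4\kappa(1-(1-\kappa)^t)\max_{s,\vec a}|A_i^{\vec\pi}(s,\vec a)|$.
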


\begin{proof}
    Let $\kappa=\dkl{\vec\pi}{\vec\pi'}$, then we can define an $\kappa$-coupled policy pair $(\vec\pi,\vec\pi')$ such that 
    \begin{equation*}
        \mathbb{P}(\vec{a}_k \neq \vec{a}'_k | s_k) \le \kappa.
    \end{equation*}
    See \cite{markovchain} (proposition 4.7).
    Hence, the probability that the two policies disagree on an action at any given time step is bounded by $\kappa$.

    Let $n_t$ be the number of times the policies diverge before time step $t$. The probability that they agree on \textit{all} steps up to $t-1$ is at least $(1 - \kappa)^t$. 
    \begin{equation*}
        \mathbb{E}_{s_t \sim \vec{\pi}'} [A_i^{\vec{\pi}',\, \vec{\pi}}(s_t)] = \mathbb P(n_t = 0) \mathbb{E}_{s_t \sim \vec{\pi}' | n_t = 0} [A_i^{\vec{\pi}',\, \vec{\pi}}(s_t)] + \mathbb P(n_t > 0) \mathbb{E}_{s_t \sim \vec{\pi}' | n_t > 0} [A_i^{\vec{\pi}',\, \vec{\pi}}(s_t)]
    \end{equation*}
    
    Similarly for the policy $\vec{\pi}$:

    \begin{equation*}
        \mathbb{E}_{s_t \sim \vec{\pi}} [A_i^{\vec{\pi}',\, \vec{\pi}}(s_t)] = \mathbb P(n_t = 0) \mathbb{E}_{s_t \sim \vec{\pi} | n_t = 0} [A_i^{\vec{\pi}',\, \vec{\pi}}(s_t)] + \mathbb P(n_t > 0) \mathbb{E}_{s_t \sim \vec{\pi} | n_t > 0} [A_i^{\vec{\pi}',\, \vec{\pi}}(s_t)]
    \end{equation*}

    Note that $n_t=0$ terms are equal as it means that $\vec{\pi}$ and $\vec{\pi}'$ agreed on all the timesteps less than $t$:
    \begin{equation*}
         \mathbb{E}_{s_t \sim \vec{\pi} | n_t = 0} [A_i^{\vec{\pi}',\, \vec{\pi}}(s_t)]= \mathbb{E}_{s_t \sim \vec{\pi}' | n_t = 0} [A_i^{\vec{\pi}',\, \vec{\pi}}(s_t)]
    \end{equation*}

    By definition, the probability that they disagree at least once before time $t$ is bounded by:
    \begin{equation*}
        \mathbb P(n_t > 0) \le 1 - (1 - \kappa)^t
    \end{equation*}

    Then, we have
    
    \begin{align}
        |\mathbb{E}_{s_t \sim \vec{\pi}' | n_t > 0} [A_i^{\vec{\pi}',\, \vec{\pi}}(s_t)] - \mathbb{E}_{s_t \sim \pi | n_t > 0}& [A_i^{\vec{\pi}',\, \vec{\pi}}(s_t)]| \\
        &\leq |\mathbb{E}_{s_t \sim \tilde{\pi} | n_t > 0} [A_i^{\vec{\pi}',\, \vec{\pi}}(s_t)]| + |\mathbb{E}_{s_t \sim \pi | n_t > 0} [A_i^{\vec{\pi}',\, \vec{\pi}}(s_t)]| \\
        &\leq 4\kappa \max_{s,a} |A_i^{\vec\pi}(s, a)|
    \end{align}

    Therefore, using Lemma \ref{lem:borne_sup_advantage}, we have
    \begin{align*}
        \left| \mathbb{E}_{s_t \sim \vec{\pi}'} [A_i^{\vec{\pi}',\, \vec{\pi}}(s_t)] - \mathbb{E}_{s_t \sim\vec\pi} [A_i^{\vec{\pi}',\, \vec{\pi}}(s_t)] \right|
        &\le 2(1 - (1 - \kappa)^t) \max_s |\bar{A}_i^{\vec{\pi}',\, \vec{\pi}}(s)| \\
        &\le 4\kappa(1 - (1 - \kappa)^t) \max_{s,a} |A_i^{\vec{\pi}}(s, a)|
    \end{align*}
\end{proof}
\subsection{Proof of Lemma \ref{lem:fair_surrogate}}
\label{sec:proof_lemma_3.1}
\fairsurrogate*

\begin{proof}
    We are looking for a lower bound of $\diffj$.
    \begin{equation}
        \diffj = \E_{s_0\sim\rho_0}\left[\sumi\ua\left(\epsilon+\vf{'}(s_0)\right) - \ua\left(\epsilon+\vf{}(s_0)\right)\right]
    \end{equation}

    As $\ua$ is concave, one interesting property is by
    \begin{equation}
        \frac{y-x}{x^\alpha}\geq \ua(y)-\ua(x)\geq \frac{y-x}{y^\alpha}.
    \end{equation}

    Applying this property to the previous difference and using Lemma \ref{lem:diff_v0} gives the lower bound
    \begin{align}
        \diffj 
        &\geq \E_{s_0\sim\rho_0}\left[\frac{\vf{'}(s_0)-\vf{}(s_0)}{\calvf{'}(s_0)}\right] \\
        &= \E_{s_0\sim\rho_0}\left[\sumi \frac{1}{\calvf{'}(s_0)}\mathbb{E}_{\tau \sim \vec{\pi}'|s_0} \left[ \sum_{t=0}^{\infty} \gamma^t A_j^{\vec{\pi}}(s_t, \vec{a}_t) \right]\right]\\
        &= \mathbb{E}_{\tau \sim \vec{\pi}'} \left[ \sum_{t=0}^{\infty} \gamma^t \left(\sumi \frac{A_j^{\vec{\pi}}(s_t, \vec{a}_t)}{\calvf['](s_0)}\right) \right] \\
        &= \mathcal{F}_{\vec{\pi}}(\vec{\pi}'),
    \end{align}
    and the upper bound
    \begin{align*}
        \diffj
        &\leq \E_{s_0\sim\rho_0}\left[\frac{\vf{'}(s_0)-\vf{}(s_0)}{\calvf(s_0)}\right] \\
        &= \mathbb{E}_{\tau \sim \vec{\pi}'} \left[ \sum_{t=0}^{\infty} \gamma^t \left(\sumi \frac{A_j^{\vec{\pi}}(s_t, \vec{a}_t)}{\calvf['](s_0)}\right) \right] \\
        &= \mathcal L_{\vec\pi}(\vec\pi').
    \end{align*}

    Let us define $A_j^{\vec{\pi}', \vec \pi}(s)=\E_{\vec{a}\sim \vec{\pi}'(\cdot|s)}\left[A_j^{\vec\pi}(s, \vec a)\right]$. Using that notation and conditional expectation, we have 
    \begin{align}
        \mathcal{F}_{\vec{\pi}}(\vec{\pi}')= \mathbb{E}_{\tau \sim \vec{\pi}'} \left[ \sum_{t=0}^{\infty} \gamma^t \left(\sumi \frac{A_j^{\vec{\pi}', \vec \pi}(s_t)}{\calvf['](s_0)}\right) \right]
    \end{align}

    As $s_t=(\tilde s_t, \tilde s_0)$ carries the initial state, we can drop the initial state dependency in the denominator:

    \begin{align}
        \mathcal{F}_{\vec{\pi}}(\vec{\pi}')
        &=\mathbb{E}_{\tau \sim \vec{\pi}'} \left[ \sum_{t=0}^{\infty} \gamma^t \left(\sumi \frac{A_j^{\vec{\pi}', \vec \pi}(s_t)}{\calvf['](s_0)}\right) \right]\notag\\
        &= \sum_s\rho_0(s)\sumt\gamma^t \sum_{s'}\mathbb{P}_{\vec{\pi}'}(s_t=s'\,|\, s_0=s) \left(\sumi \frac{A_j^{\vec{\pi}', \vec \pi}(s')}{\calvf['](s_0)}\right) \notag\\
        &=\sum_{\tilde s,\,\tilde c}\rho_0(\tilde s)\mathds{1}_{\{\tilde s = \tilde c\}}\sumt\gamma^t \sum_{\tilde s', \tilde c'}\mathbb{P}_{\vec{\pi}'}(\tilde s_t=\tilde s'\,|\, \tilde s_0=\tilde s)\mathds{1}_{\{\tilde c' = \tilde c\}} \left(\sumi \frac{A_j^{\vec{\pi}', \vec \pi}(\tilde s', \tilde c')}{\calvf['](\tilde s, \tilde c)}\right) \notag\\
        &=\sum_{\tilde s,\,\tilde c}\rho_0(\tilde s)\mathds{1}_{\{\tilde s = \tilde c\}}\sumt\gamma^t \sum_{\tilde s', \tilde c'}\mathbb{P}_{\vec{\pi}'}(\tilde s_t=\tilde s'\,|\, \tilde s_0=\tilde s)\mathds{1}_{\{\tilde c' = \tilde c\}} \left(\sumi \frac{A_j^{\vec{\pi}', \vec \pi}(\tilde s', \tilde c')}{\calvf['](\tilde c', \tilde c')}\right) \notag\\
        &=\sum_{\tilde s,\,\tilde c}\rho_0(\tilde s)\mathds{1}_{\{\tilde s = \tilde c\}} \sum_{\tilde s', \tilde c'}\sumt\gamma^t\mathbb{P}_{\vec{\pi}'}(\tilde s_t=\tilde s'\,|\, \tilde s_0=\tilde s)\mathds{1}_{\{\tilde c' = \tilde c\}} \left(\sumi \frac{A_j^{\vec{\pi}', \vec \pi}(\tilde s', \tilde c')}{\calvf['](\tilde c', \tilde c')}\right) \notag\\
        &=\mathbb{E}_{s \sim d_{\vec{\pi}'}} \left[\sumi \frac{A_j^{\vec{\pi}', \vec \pi}(s)}{\calvf['](s)} \right]
    \end{align}

    Because the value function and the state distribution of the other policy $\vec \pi '$ are unknown, we want to bound $\mathcal{F}_{\vec{\pi}}(\vec{\pi}')$ by $\vec \pi$-depended functions. Therefore, let us consider the following function:
    \begin{equation}
        \mathcal{L}_{\vec{\pi}}(\vec{\pi}')=\mathbb{E}_{s \sim d_{\vec{\pi}}} \left[\sumi \frac{A_j^{\vec{\pi}', \vec \pi}(s)}{\calvf(s)} \right]
    \end{equation}

    We now want to bound the absolute difference of $|\mathcal{F}_{\vec{\pi}}(\vec{\pi}') - \mathcal{L}_{\vec{\pi}}(\vec{\pi}')|$:
    \begin{align}
        |\mathcal{F}_{\vec{\pi}}(\vec{\pi}') - \mathcal{L}_{\vec{\pi}}(\vec{\pi}')|
        &=\left|\mathbb{E}_{s \sim d_{\vec{\pi}'}} \left[\sumi \frac{A_j^{\vec{\pi}', \vec \pi}(s)}{\calvf['](s)} \right]-\mathbb{E}_{s \sim d_{\vec{\pi}}} \left[\sumi \frac{A_j^{\vec{\pi}', \vec \pi}(s)}{\calvf(s)} \right]\right| \\[2ex]
        &\leq \underbrace{\left|\mathbb{E}_{s \sim d_{\vec{\pi}'}} \left[\sumi \frac{A_j^{\vec{\pi}', \vec \pi}(s)}{\calvf['](s)} \right] -\mathbb{E}_{s \sim d_{\vec{\pi}'}} \left[\sumi \frac{A_j^{\vec{\pi}', \vec \pi}(s)}{\calvf(s)} \right]\right|}_{=\Delta_1}\nonumber\\[2ex]
        &\qquad + \underbrace{\left|\mathbb{E}_{s \sim d_{\vec{\pi}'}} \left[\sumi \frac{A_j^{\vec{\pi}', \vec \pi}(s)}{\calvf(s)} \right]-\mathbb{E}_{s \sim d_{\vec{\pi}}} \left[\sumi \frac{A_j^{\vec{\pi}', \vec \pi}(s)}{\calvf(s)} \right]\right|}_{=\Delta_2}
    \end{align}

    The first error $\Delta_1$ is due to a shift of state value functions between the two policies $\vec \pi$ and $\vec \pi '$:

    \begin{align*}
        \Delta_1
        &=\left|\mathbb{E}_{s \sim d_{\vec{\pi}'}} \left[\sumi \frac{A_j^{\vec{\pi}', \vec \pi}(s)}{\calvf['](s)} \right] -\mathbb{E}_{s \sim d_{\vec{\pi}'}} \left[\sumi \frac{A_j^{\vec{\pi}', \vec \pi}(s)}{\calvf(s)} \right]\right| \\
        &=\left|\mathbb{E}_{s \sim d_{\vec{\pi}'}} \left[\sumi \left(\frac{1}{\calvf['](s)}  -\frac{1}{\calvf(s)}\right) A_j^{\vec{\pi}', \vec \pi}(s) \right]\right|
    \end{align*}

    The function $f(v)=(\epsilon +v) ^{-\alpha}$ is Lipschitz continuous on $\mathbb R ^+$. Thus, the following statement holds for any $v,\, w\in \mathbb R^+$:
    \begin{equation}
        |f(v)-f(w)|\leq L_\alpha|v-w|,
    \end{equation}
    where $L_\alpha=\max_{v\in\mathbb R^+}\;|f'(v)| = \alpha \nu^{-1-\alpha}$ is a Lipschitz constant. 

    Therefore, we have 
    \begin{align*}
        \left|\frac{1}{\calvf['](s)}  -\frac{1}{\calvf(s)}\right|
        &= \left|\frac{1}{f(\vf['](s))}  -\frac{1}{f(\vf(s))}\right| \\
        &\leq L_\alpha\cdot |\vf['](s)-\vf(s)| \\
        &= L_\alpha\cdot\left|\mathbb{E}_{\tau \sim \vec{\pi}'} \left[ \sum_{t=0}^{\infty} \gamma^t A_j^{\vec{\pi}}(s_t, \vec{a}_t) | s_0=s \right]\right| \\
        &= L_\alpha\cdot\left|\mathbb{E}_{\tau \sim \vec{\pi}'} \left[ \sum_{t=0}^{\infty} \gamma^t A_j^{\vec{\pi}', \vec \pi}(s_t) | s_0=s \right]\right| \\
        &\leq L_\alpha\cdot\mathbb{E}_{\tau \sim \vec{\pi}'} \left[\sum_{t=0}^{\infty} \gamma^t \left|A_j^{\vec{\pi}', \vec \pi}(s_t)\right| \;|\; s_0=s \right] \\
        &\leq \frac{2L_\alpha}{1-\gamma}A_{\max} D_{\text{TV}}^{\max}(\vec{\pi},\,\vec \pi ')
    \end{align*}

    Plugging back the previous result into the first difference leads to:
    \begin{align}
        \Delta_1
        &=\left|\mathbb{E}_{s \sim d_{\vec{\pi}'}} \left[\sumi \left(\frac{1}{\calvf['](s)}  -\frac{1}{\calvf(s)}\right) A_j^{\vec{\pi}', \vec \pi}(s) \right]\right|\notag\\
        &\leq \mathbb{E}_{s \sim d_{\vec{\pi}'}} \left[\sumi \left|\frac{1}{\calvf['](s)}  -\frac{1}{\calvf(s)}\right| \left|A_j^{\vec{\pi}', \vec \pi}(s)\right| \right]\notag\\
        \text{using Lemma \ref{lem:borne_sup_advantage}, }&\\
        &\leq \mathbb{E}_{s \sim d_{\vec{\pi}'}} \left[\sumi \frac{4L_\alpha}{1-\gamma}A_{\max}^2 \kappa^2 \right]\notag\\
        &= \frac{4nL_\alpha}{(1-\gamma)^2}A_{\max}^2 \kappa^2,
    \end{align}

    where $\kappa=D_{\text{TV}}^{\max}(\vec{\pi},\,\vec \pi ')$.

    Next, we have to find an upper bound of the second difference $\Delta_2$ which is due to the shift of state visitation distribution:
    \begin{align*}
        \Delta_2
        &=\left|\mathbb{E}_{s \sim d_{\vec{\pi}'}} \left[\sumi \frac{A_j^{\vec{\pi}', \vec \pi}(s)}{\calvf(s)} \right]-\mathbb{E}_{s \sim d_{\vec{\pi}}} \left[\sumi \frac{A_j^{\vec{\pi}', \vec \pi}(s)}{\calvf(s)} \right]\right| \notag\\
        &=\left|\mathbb{E}_{\tau \sim \vec{\pi}'} \left[\sumt\gamma^t \left(\sumi \frac{A_j^{\vec{\pi}', \vec \pi}(s_t)}{\calvf(s_0)} \right)\right]-\mathbb{E}_{\tau \sim \vec{\pi}} \left[\sumt\left(\sumi \frac{A_j^{\vec{\pi}', \vec \pi}(s)}{\calvf(s)} \right)\right]\right|\notag\\
        &=\sumt\gamma^t\left|\mathbb{E}_{\tau \sim \vec{\pi}'} \left[\sumi \frac{A_j^{\vec{\pi}', \vec \pi}(s_t)}{\calvf(s_0)}\right]- \mathbb{E}_{\tau \sim \vec{\pi}} \left[\sumi \frac{A_j^{\vec{\pi}', \vec \pi}(s_t)}{\calvf(s_0)}\right]\right|\notag\\
        &=\sumt\gamma^t\left|\E_{s_0\sim\rho_0}\left[\mathbb{E}_{\tau \sim \vec{\pi}'|s_0} \left[\sumi \frac{A_j^{\vec{\pi}', \vec \pi}(s_t)}{\calvf(s_0)}\right]- \mathbb{E}_{\tau \sim \vec{\pi}|s_0} \left[\sumi \frac{A_j^{\vec{\pi}', \vec \pi}(s_t)}{\calvf(s_0)}\right]\right]\right|\notag\\
        &\leq\sumt\gamma^t\E_{s_0\sim\rho_0}\left[\sumi\frac{1}{\calvf(s_0)}\left|\mathbb{E}_{\tau \sim \vec{\pi}'|s_0} \left[ A_j^{\vec{\pi}', \vec \pi}(s_t)\right]- \mathbb{E}_{\tau \sim \vec{\pi}|s_0} \left[ A_j^{\vec{\pi}', \vec \pi}(s_t)\right]\right|\right]\\
    \end{align*}

    Let us recall Lemma \ref{lem:diff_advantage}:
    \begin{equation*}
        \left|\mathbb{E}_{\tau \sim \vec{\pi}'|s_0} \left[ A_j^{\vec{\pi}', \vec \pi}(s_t)\right]- \mathbb{E}_{\tau \sim \vec{\pi}|s_0} \left[ A_j^{\vec{\pi}', \vec \pi}(s_t)\right]\right| \leq 4\kappa(1-(1-\kappa)^t)A_{\max}.
    \end{equation*}

    Moreover, $\frac{1}{\calvf(s_0)}$ is upper bounded by $\nu^{-\alpha}$. Thus, we have 
    \begin{align}
        \Delta_2
        &\leq\sumt\gamma^t\E_{s_0\sim\rho_0}\left[\sumi\frac{1}{\calvf(s_0)}\left|\mathbb{E}_{\tau \sim \vec{\pi}'|s_0} \left[ A_j^{\vec{\pi}', \vec \pi}(s_t)\right]- \mathbb{E}_{\tau \sim \vec{\pi}|s_0} \left[ A_j^{\vec{\pi}', \vec \pi}(s_t)\right]\right|\right]\\
        &\leq \sumt \gamma^t \E_{s_0\sim\rho_0}\left[4nA_{\max}\nu^{-\alpha}(1-(1-\kappa)^t)\kappa\right]\\
        &\leq \sumt \gamma^t 4nA_{\max}\nu^{-\alpha}(1-(1-\kappa)^t)\kappa\\
        &\leq \frac{4nA_{\max}\kappa}{\nu^\alpha}\left(\frac{1}{1-\gamma}-\frac{1}{1-\gamma(1-\kappa)}\right)\\
        &\leq \frac{4nA_{\max}\gamma\kappa^2}{\nu^\alpha(1-\gamma)^2 }
    \end{align}

    Therefore, by combining both inequalities, we have :
    \begin{align}
        \left|\mathcal F_{\vec \pi}(\vec\pi')-\mathcal L_{\vec\pi}(\vec\pi')\right|
        &\leq \Delta_1+\Delta_2\notag\\
        &\leq \frac{4nL_\alpha}{(1-\gamma)^2}A_{\max}^2 \kappa^2 + \frac{4nA_{\max}\gamma}{\nu^\alpha(1-\gamma)^2 }\kappa^2
    \end{align}
    
    This concludes the proof.
\end{proof}
\label{sec:proof_lemma_3.2}
\subsection{Proof of Lemma \ref{lem:advantage_decomp}}
The Lemma \ref{lem:advantage_decomp} is induced by individual advantage functions decomposition:
\begin{lemma}
    \label{lem:ind_adv_decomp}
    For any agent j and subset $I_m\subseteq I_n$, we have 
    \begin{equation}
        A_j^{\vec \pi,\, I_m}(s, \vec{a}^{I_m}) = A^{\vec \pi,\, I_m}(s, a^{\void}, \vec{a}^{I_m})=\sum_{l=1}^m A_j^{\vec \pi,\, i_m}(s, \vec a^{I_{m-1}}, \vec{a}^{i_m})
    \end{equation}
\end{lemma}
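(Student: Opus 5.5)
The identity is a telescoping sum that follows directly from the definitions of $Q_j^{\vec\pi,\,I_m}$ and $A_j^{\vec\pi,\,I_m}$. The index in the summand should be read as $l$, i.e. $A_j^{\vec\pi,\,i_l}(s,\vec a^{I_{l-1}},a^{i_l})$, as in Lemma \ref{lem:advantage_decomp}.

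First, fix the conventions for the empty set. With $I_0=\void$, we have $Q_j^{\vec\pi,\,\void}(s)=\E_{\vec a\sim\vec\pi}\left[Q_j^{\vec\pi}(s,\vec a)\right]=V_j^{\vec\pi}(s)$. The statement's $A_j^{\vec\pi,\,I_m}(s,\vec a^{I_m})$ is then read as the advantage relative to the empty set of preceding agents. Taking $J_k=\void$ in the definition gives
\[
A_j^{\vec\pi,\,I_m}(s,a^{\void},\vec a^{I_m})=Q_j^{\vec\pi,\,I_m}(s,\vec a^{I_m})-V_j^{\vec\pi}(s).
\]
This yields the first equality, which is purely notational.

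For the second equality, specialise the definition to $J_k=I_{l-1}$ and the singleton $\{i_l\}$. Since $I_{l-1}\sqcup\{i_l\}=I_l$, we get
\[
A_j^{\vec\pi,\,i_l}(s,\vec a^{I_{l-1}},a^{i_l})=Q_j^{\vec\pi,\,I_l}(s,\vec a^{I_l})-Q_j^{\vec\pi,\,I_{l-1}}(s,\vec a^{I_{l-1}}).
\]
Summing over $l=1,\dots,m$, the intermediate terms $Q_j^{\vec\pi,\,I_l}$ for $1\le l\le m-1$ cancel. What remains is $Q_j^{\vec\pi,\,I_m}(s,\vec a^{I_m})-Q_j^{\vec\pi,\,I_0}(s)=Q_j^{\vec\pi,\,I_m}(s,\vec a^{I_m})-V_j^{\vec\pi}(s)$, which is the claimed identity.

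There is no real obstacle. The only points to check are that the ordered subsets are nested ($I_{l-1}\subset I_l$, obtained by appending $i_l$), which makes the telescoping valid, and that the empty-set convention for $Q_j^{\vec\pi,\,\void}$ agrees with the expectation over all agents' actions. Lemma \ref{lem:advantage_decomp} then follows by dividing each identity by $\mathcal V_j^{\vec\pi}(s)$ and summing over $j$. Since $\mathcal V_j^{\vec\pi}(s)$ does not depend on the actions, linearity gives the decomposition of $A_F^{\vec\pi,\,I_m}$.
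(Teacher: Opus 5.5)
Your proposal is correct and follows the paper's own proof: both use the empty-set convention $Q_j^{\vec\pi,\,\void}(s)=V_j^{\vec\pi}(s)$ and telescope the differences $Q_j^{\vec\pi,\,I_l}-Q_j^{\vec\pi,\,I_{l-1}}$ over $l=1,\dots,m$. Reading the summand index as $l$ rather than the $m$ printed in the statement is right, and it is also what the paper's proof actually uses.
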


\begin{proof}
    For any state $s$ and joint action $\vec a^{I_m}$, we have
    \begin{align}
        A_j^{\vec \pi,\, I_m}(s, \vec{a}^{I_m}) 
        &= A^{\vec \pi,\, I_m}(s, a^{\void}, \vec{a}^{I_m})\\
        &=Q_j^{\vec \pi,\, \void\sqcup I_m}(s, \, \vec{a}^{\void\sqcup I_m}) - Q_j^{\vec \pi,\, \void}(s, \, a^{\void})\\
        &=Q_j^{\vec \pi,\, I_l}(s, \, \vec{a}^{I_l}) - V_j^{\vec \pi}(s)\\
        &=\sum_{l=1}^m Q_j^{\vec \pi,\, I_l}(s, \, \vec{a}^{I_l})-Q_j^{\vec \pi,\, I_{l-1}}(s, \, \vec{a}^{I_{l-1}})\\
        &=\sum_{l=1}^m A_j^{\vec \pi,\, i_l}(s, \vec a^{I_{l-1}}, a^{i_l})
    \end{align}
\end{proof}

\advantagedecomp*
\begin{proof}
    The proof is pretty straightforward by considering the Lemma \ref{lem:ind_adv_decomp}. For any set $I_m$, state $s$ and joint action $\vec a^{I_m}$, we have by definition
    \begin{align*}
        A_F^{\vec\pi,\, I_m}(s, \vec a^{I_m}) 
        &= \sumi[j]\frac{1}{\mathcal V_j^{\vec\pi}(s)}A_j^{\vec \pi,\, I_m}(s, \vec{a}^{I_m}) \\
        &= \sumi[j]\frac{1}{\mathcal V_j^{\vec\pi}(s)}\sum_{l=1}^m A_j^{\vec \pi,\, i_l}(s, \vec a^{I_{l-1}}, \vec{a}^{i_l})\\
        &= \sum_{l=1}^m \left(\sumi[j]\frac{1}{\mathcal V_j^{\vec\pi}(s)} A_j^{\vec \pi,\, i_l}(s, \vec a^{I_{l-1}}, a^{i_l})\right)\\
        &=\sum_{l=1}^m A_F^{\vec \pi,\, i_l}(s, \vec a^{I_{l-1}}, a^{i_l})
    \end{align*}
\end{proof}

\subsection{Proof of Lemma \ref{lem:surrogate_decomp}}
\label{sec:proof_lemma_3.3}
To prove Lemma \ref{lem:surrogate_decomp}, we first need to decompose the KL-penalty term:
\begin{lemma}
    \label{lem:dkl_decomp}
    For any joint policies $\vec \pi$ and $\vec \pi'$ and state $s$, the following statement holds 
    \begin{equation*}
        D_{\text{KL}}\left(\vec \pi(.|s) \; ||\; \vec \pi ' (.|s)\right)= \sumi D_{\text{KL}}\left(\pi_j(.|s) \; ||\; \pi_j (.|s)\right).
    \end{equation*}
\end{lemma}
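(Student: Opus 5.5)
The statement to prove is the additivity of the KL divergence over the factors of a product policy. Note the typo on the right-hand side: the second argument should be $\pi'_j(\cdot|s)$, not $\pi_j(\cdot|s)$. The plan is a direct computation from the definition of the KL divergence, using the product structure $\vec\pi(\vec a|s)=\prod_{j=1}^n\pi_j(a_j|s)$ and $\vec\pi'(\vec a|s)=\prod_{j=1}^n\pi'_j(a_j|s)$ stated in the Markov game setup. By Assumption \ref{assumption}, every $\pi'_j(a_j|s)\ge\eta>0$, so all logarithms below are finite and no support issues arise.

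First, fix $s$ and write
\begin{equation*}
D_{\text{KL}}\left(\vec\pi(\cdot|s)\,||\,\vec\pi'(\cdot|s)\right)=\sum_{\vec a\in\mathcal A}\vec\pi(\vec a|s)\log\frac{\vec\pi(\vec a|s)}{\vec\pi'(\vec a|s)}.
\end{equation*}
The log of a ratio of products splits into $\sum_{j=1}^n\log\frac{\pi_j(a_j|s)}{\pi'_j(a_j|s)}$. Exchanging the finite sums over $\vec a$ and over $j$ gives
\begin{equation*}
\sum_{j=1}^n\E_{\vec a\sim\vec\pi(\cdot|s)}\left[\log\frac{\pi_j(a_j|s)}{\pi'_j(a_j|s)}\right].
\end{equation*}

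Second, observe that for each $j$ the integrand depends only on $a_j$. Summing out the other coordinates, whose marginal $\prod_{l\neq j}\pi_l(a_l|s)$ has total mass $1$, reduces each expectation to
\begin{equation*}
\sum_{a_j}\pi_j(a_j|s)\log\frac{\pi_j(a_j|s)}{\pi'_j(a_j|s)}=D_{\text{KL}}\left(\pi_j(\cdot|s)\,||\,\pi'_j(\cdot|s)\right),
\end{equation*}
which yields the identity.

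There is no real obstacle; the only points needing care are the marginalization step and the finiteness of the logarithms, which $\eta$-softness settles. As a remark for the subsequent use in Lemma \ref{lem:surrogate_decomp}, taking $\max_s$ gives only $D_{\text{KL}}^{\max}(\vec\pi,\vec\pi')\le\sum_j D_{\text{KL}}^{\max}(\pi^j,\pi'^j)$. That inequality is in the direction needed there.
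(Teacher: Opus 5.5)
Your proof is correct and is essentially the paper's argument: write the KL divergence as an expectation under the product policy, split the log of the product ratio into a sum over agents, and reduce each term to agent $j$'s marginal. Two of your remarks are also right. The second argument on the right-hand side should indeed be $\pi'_j(\cdot|s)$. After taking $\max_s$ one only gets $D_{\text{KL}}^{\max}(\vec\pi,\vec\pi')\le\sum_j D_{\text{KL}}^{\max}(\pi^j,\pi'^j)$, which is the direction Lemma~\ref{lem:surrogate_decomp} needs.
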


\begin{proof}
    For any state $s$, we have 
    \begin{align*}
        D_{\text{KL}}\left(\vec{\pi}(\cdot|s) \parallel \vec{\pi}'(\cdot|s)\right) 
        &= \E_{\vec{a}\sim\vec{\pi}(.|s)}  \left[\log \vec{\pi}(\vec{a}|s) - \log\vec{\pi}'(\vec{a}|s)\right] \\
        &= \E_{\vec{a}\sim\vec{\pi}(.|s)}  \left[\log \prod_{j=1}^n \pi_j(a_j|s) - \log\prod_{j=1}^N \pi_j'(a_j|s)\right] \\
        &= \E_{\vec{a}\sim\vec{\pi}(.|s)}  \left[\sumi\log \pi_j(a_j|s) - \log \pi_j'(a_j|s)\right] \\
        &= \sumi\E_{a_j\sim\pi_j(.|s)}  \left[\log \pi_j(a_j|s) - \log \pi_j'(a_j|s)\right] \\
        &= \sum_{j=1}^n D_{\text{KL}}\left(\pi_j(\cdot|s) \parallel \pi'_j(\cdot|s)\right)
\end{align*}
\end{proof}

\surrogatedecomp*
\begin{proof}
    The proof relies on Lemma \ref{lem:advantage_decomp} and Lemma \ref{lem:dkl_decomp}:
    \begin{align}
        \diffj
        &\geq \mathcal L_{\vec\pi}(\vec\pi')-C\cdot D_{\text{KL}}^{\max}(\vec\pi\;||\; \vec\pi') \\
        &\geq \mathbb{E}_{s \sim d_{\vec{\pi}}, \,\vec a \sim \vec\pi'(.|s)} \left[\sumi \frac{A_j^{\vec \pi}(s, \vec a)}{\calvf(s)} \right] -C\cdot D_{\text{KL}}^{\max}(\vec\pi\;||\; \vec\pi') \\
        &\geq \mathbb{E}_{s \sim d_{\vec{\pi}}, \,\vec a \sim \vec\pi'(.|s)} \left[\sumi \frac{1}{\calvf(s)}\sumi[m] A_j^{\vec \pi,\, i_m}(s, \vec a^{I_{m-1}}, a^{i_m}) \right] \\
        &\qquad\qquad\qquad\qquad\qquad-\sumi[m]C\cdot D_{\text{KL}}^{\max}(\pi_{i_m}\;||\; \pi'_{i_m}) \\
        &\geq \sumi[m] \mathbb{E}_{s_\sim d_{\vec\pi}, \;\vec{a}^{I_{m-1}}\sim\vec{\pi}^{I_{m-1}}(.|s), \, a^{i_m}\sim\pi'^{i_m}(.|s)} \left[\sumi \frac{1}{\calvf(s)} A_j^{\vec \pi,\, i_m}(s, \vec a^{I_{m-1}}, a^{i_m}) \right] \nonumber\\
        &\qquad\qquad\qquad\qquad\qquad -C\cdot D_{\text{KL}}^{\max}(\pi_{i_m}\;||\; \pi'_{i_m}) \\
        &= \sumi[m]\mathcal L_{\vec \pi}^{I_m}(\vec{\pi}'^{I_{m-1}}, \pi'^{i_m}) - C\cdot \dkl{\pi_{i_m}}{\pi'_ {i_m}}
    \end{align}
\end{proof}

\subsection{Proof of Lemma \ref{lem:surrogate_gateaux}}
\label{sec:proof_lemma_3.4}
Let us recall the definition of Gâteaux derivative.
\gateaux*
\begin{lemma}
    \label{lem:state_value_gateaux}
    Let $\vec\pi$ and $\vec\pi'$ be two joint policies, then for any agent $i$ and state $s$, $V_i^{\vec\pi}(s)$ is Gâteaux differentiable and 
    \begin{equation}
        \mathfrak D_{\vec\pi'}V_i^{\vec{\pi}}(s_0) = \E_{s\sim d_{\vec\pi|s_0},\,\vec a \sim \vec{\pi}'(\cdot|s)}\left[ A_i^{\vec{\pi}}(s, \vec{a}) \right]
    \end{equation}
\end{lemma}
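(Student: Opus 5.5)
We will compute the Gâteaux derivative of $V_i^{\vec\pi}(s_0)$ in the direction $\vec\pi'$ by combining the performance difference identity (Lemma \ref{lem:diff_v0}) with the continuity results of the appendix. Here $d_{\vec\pi|s_0}(s)=\sumt\gamma^t\mathbb P_{\vec\pi}(s_t=s\,|\,s_0)$ denotes the improper discounted visitation measure started from $s_0$.

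\textbf{Step 1: apply the performance difference lemma.} Applying Lemma \ref{lem:diff_v0} to the pair $(\vec\pi_\eta,\vec\pi)$ gives
\begin{equation*}
V_i^{\vec\pi_\eta}(s_0)-V_i^{\vec\pi}(s_0)=\E_{s\sim d_{\vec\pi_\eta|s_0}}\left[\bar A_i^{\vec\pi_\eta,\vec\pi}(s)\right],
\end{equation*}
with $\bar A_i^{\vec\pi_\eta,\vec\pi}(s)=\E_{\vec a\sim\vec\pi_\eta(\cdot|s)}[A_i^{\vec\pi}(s,\vec a)]$.

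\textbf{Step 2: expand the perturbed joint policy.} The joint policy factorizes as a product, so
\begin{equation*}
\vec\pi_\eta(\vec a|s)=\prod_j\big(\pi^j+\eta(\pi'^j-\pi^j)\big)(a^j|s),
\end{equation*}
which is a polynomial in $\eta$. Writing $\vec\pi_\eta(\vec a|s)=\vec\pi(\vec a|s)+\eta\,\delta(\vec a|s)+O(\eta^2)$, the first-order term is
\begin{equation*}
\delta(\vec a|s)=\sum_j(\pi'^j-\pi^j)(a^j|s)\prod_{k\neq j}\pi^k(a^k|s).
\end{equation*}
By Lemma \ref{lem:expected_advantage_null}, the zeroth-order term vanishes: $\E_{\vec\pi(\cdot|s)}[A_i^{\vec\pi}]=0$. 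Finiteness of the state and action spaces, together with bounded rewards, makes the $O(\eta^2)$ remainder uniform in $s$. Hence
\begin{equation*}
\bar A_i^{\vec\pi_\eta,\vec\pi}(s)=\eta\sum_{\vec a}\delta(\vec a|s)A_i^{\vec\pi}(s,\vec a)+O(\eta^2).
\end{equation*}

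\textbf{Step 3: divide by $\eta$ and pass to the limit.} Dividing by $\eta$ and using continuity of $d_{\vec\pi|s_0}$ in the policy (Lemma \ref{lem:state_dist_continuity}), together with $\vec\pi_\eta\to\vec\pi$, we obtain
\begin{equation*}
\mathfrak D_{\vec\pi'}V_i^{\vec\pi}(s_0)=\E_{s\sim d_{\vec\pi|s_0}}\Big[\sum_{\vec a}\delta(\vec a|s)A_i^{\vec\pi}(s,\vec a)\Big].
\end{equation*}
The interchange of limit and sum is justified by the bound $d\le 1/(1-\gamma)$ and by finiteness of the state space.

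\textbf{Step 4 (main obstacle): identify the limit with the stated expression.} The limit found in Step 3 equals $\sum_{\vec a}\delta(\vec a|s)A_i^{\vec\pi}(s,\vec a)$. The statement instead has $\E_{\vec a\sim\vec\pi'}[A_i^{\vec\pi}]$, i.e.\ $\sum_{\vec a}(\vec\pi'-\vec\pi)(\vec a|s)A_i^{\vec\pi}(s,\vec a)$ after again using Lemma \ref{lem:expected_advantage_null}. These two coincide when a single agent deviates, or more generally when the perturbation acts on the \emph{joint} policy linearly, $\vec\pi_\eta=\vec\pi+\eta(\vec\pi'-\vec\pi)$. In the general multi-agent case they differ by cross terms of the form $\prod_{j\in S}(\pi'^j-\pi^j)$ with $|S|\geq 2$.

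I would therefore interpret the perturbation in the joint sense, as in the single-agent TRL result. If the statement is kept with factor-wise mixing, I would instead state the derivative as the sum over $j$ of single-agent directional derivatives, which is exactly $\sum_{\vec a}\delta(\vec a|s)A_i^{\vec\pi}(s,\vec a)$, and flag that the stated formula requires this adjustment.
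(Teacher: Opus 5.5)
Your Steps~1--3 follow the paper's argument. The paper also applies Lemma~\ref{lem:diff_v0} to $(\vec\pi_\eta,\vec\pi)$, removes the zeroth-order term with Lemma~\ref{lem:expected_advantage_null}, divides by $\eta$, and passes to the limit using continuity of the visitation measure (Lemma~\ref{lem:state_dist_continuity}).

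The one divergence is the point you flag in Step~4, and there your analysis is right and the paper's is not. The paper asserts, ``by definition of $\vec\pi_\eta$'', that $\E_{\vec a\sim\vec\pi_\eta(\cdot|s)}[A_i^{\vec\pi}(s,\vec a)]=(1-\eta)\,\E_{\vec a\sim\vec\pi(\cdot|s)}[A_i^{\vec\pi}(s,\vec a)]+\eta\,\E_{\vec a\sim\vec\pi'(\cdot|s)}[A_i^{\vec\pi}(s,\vec a)]$. This identity is exact for the joint mixture $\vec\pi+\eta(\vec\pi'-\vec\pi)$, which the paper silently switches to in the proof of Lemma~\ref{lem:surrogate_gateaux}. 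It is false for the product of agent-wise mixtures $\pi^i+\eta(\pi'^i-\pi^i)$, which is what the G\^ateaux definition and the first line of the paper's own proof prescribe.

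The discrepancy is a genuine failure of the statement, not an artefact of your expansion. Take a single-state game with two agents and reward $1+\mathds{1}_{\{a^1=a^2=1\}}$, and write $p=\pi^1(1|s)$, $q=\pi^2(1|s)$, with $p',q'$ the same quantities for $\vec\pi'$. Then $V_i^{\vec\pi}=(1+pq)/(1-\gamma)$, so the agent-wise G\^ateaux derivative is $\left[(p'-p)q+p(q'-q)\right]/(1-\gamma)$. The stated formula instead gives $(p'q'-pq)/(1-\gamma)$. The two differ by $(p'-p)(q'-q)/(1-\gamma)$, which is exactly your $|S|=2$ cross term. For $p=q=1/2$ and $p'=q'=3/4$ this is $1/4$ versus $5/16$, both divided by $1-\gamma$.

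Your proposed resolution is therefore the correct one. Under the joint reading, your Step~2 becomes exact with $\delta=\vec\pi'-\vec\pi$ and zero remainder, and you recover the paper's proof line by line. You should add one remark: $\vec\pi_\eta$ is then a correlated (non-product) joint policy, so Lemmas~\ref{lem:diff_v0} and~\ref{lem:state_dist_continuity} are used outside the product class. This is harmless, because neither the telescoping argument nor the resolvent formula for $d_{\vec\pi}$ uses the factorization.

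Under the agent-wise reading, the lemma has to be restated with your first-order term, and so does Lemma~\ref{lem:surrogate_gateaux}, which relies on it. That is, $\E_{\vec a\sim\vec\pi'}[A_i^{\vec\pi}(s,\vec a)]$ is replaced by $\sum_{j}\E_{a^j\sim\pi'^j,\,\vec a^{-j}\sim\vec\pi^{-j}}[A_i^{\vec\pi}(s,\vec a)]$. The two agree up to terms quadratic in the per-agent policy differences, so the claimed equality fails as soon as two or more agents move.
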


\begin{proof}
    Let $\vec\pi, \vec\pi'$ be two joint policies and $\eta>0$. For any agent $i$, we defined the perturbed policy $\pi^i_\eta=\pi^i+\eta(\pi'^i-\pi^i)$, which together define the perturbed joint policy $\vec\pi_\eta$.

    Let us consider the following quantity for a given state $s_0$
    \begin{equation*}
        f(\eta)=\frac{V_i^{\vec{\pi}_\eta}(s_0)-V_i^{\vec{\pi}}(s_0)}{\eta}.
    \end{equation*}

    Using Lemma \ref{lem:diff_v0}, we have 
    \begin{align*}
        f(\eta)
        &=\frac{1}{\eta}\mathbb{E}_{\tau \sim \vec{\pi}_\eta} \left[ \sum_{t=0}^{\infty} \gamma^t A_i^{\vec{\pi}}(s_t, \vec{a}_t) \mid s_0 \right] \\
        &=\frac{1}{\eta}\sum_{t=0}^{\infty} \gamma^t \mathbb{E}_{\tau \sim \vec{\pi}_\eta} \left[ A_i^{\vec{\pi}}(s_t, \vec{a}_t) \mid s_0 \right] \\
        &=\frac{1}{\eta}\sum_{t=0}^{\infty} \gamma^t \sum_s\mathbb P(s_t=s|s_0,\vec\pi_\eta)\E_{\vec a \sim \vec{\pi}_\eta(\cdot|s)} \left[ A_i^{\vec{\pi}}(s, \vec{a}) \right]
    \end{align*}
    Note that we can swap the expectation with the sum using Fubini's Theorem as the expected sum of the discounted advantage function is converging uniformly.

    By definition of $\vec\pi_\eta$, we have 
    \begin{equation*}
        \E_{\vec a \sim \vec{\pi}_\eta(\cdot|s)} \left[ A_i^{\vec{\pi}}(s, \vec{a}) \right]=(1-\eta)\E_{\vec a \sim \vec{\pi}(\cdot|s)} \left[ A_i^{\vec{\pi}}(s, \vec{a}) \right]+\eta\E_{\vec a \sim \vec{\pi}'(\cdot|s)} \left[ A_i^{\vec{\pi}}(s, \vec{a}) \right]
    \end{equation*}
    Lemma \ref{lem:expected_advantage_null} states that the first term is null. Therefore, we have
    \begin{equation*}
        \E_{\vec a \sim \vec{\pi}_\eta(\cdot|s)} \left[ A_i^{\vec{\pi}}(s, \vec{a}) \right]=\eta\E_{\vec a \sim \vec{\pi}'(\cdot|s)} \left[ A_i^{\vec{\pi}}(s, \vec{a}) \right].
    \end{equation*}

    Putting back into $f(\eta)$ gives:
    \begin{align*}
        f(\eta)
        &=\sum_{t=0}^{\infty} \gamma^t \sum_s\mathbb P(s_t=s|s_0)\E_{\vec a \sim \vec{\pi}'(\cdot|s)} \left[ A_i^{\vec{\pi}}(s, \vec{a}) \right] \\
        &= \sum_sd_{\vec\pi_\eta}(s|s_0)\E_{\vec a \sim \vec{\pi}'(\cdot|s)}\left[ A_i^{\vec{\pi}}(s, \vec{a}) \right]
    \end{align*}

    When $\eta\rightarrow0$, Lemma \ref{lem:state_dist_continuity} states that $d_{\vec\pi_\eta}$ converges to $d_{\vec\pi}$ by continuity. Hence, we have 
    \begin{align*}
        \mathfrak D_{\vec\pi'} V^{\vec\pi}_i(s_0)
        &=\lim_{\eta\rightarrow 0}f(\eta) \\
        &=\sum_sd_{\vec\pi}(s|s_0)\E_{\vec a \sim \vec{\pi}'(\cdot|s)}\left[ A_i^{\vec{\pi}}(s, \vec{a}) \right]\\
        &=\E_{s\sim d_{\vec\pi|s_0},\,\vec a \sim \vec{\pi}'(\cdot|s)}\left[ A_i^{\vec{\pi}}(s, \vec{a}) \right]
    \end{align*}
\end{proof}

\surrogategateaux*

\begin{proof}
    Let $\vec\pi, \vec\pi'$ be two joint policies and $\eta>0$. We defined the perturbed policy $\vec\pi_\eta=\vec\pi+\eta(\vec\pi'-\vec\pi)$.

    Let us consider the following quantity
    \begin{equation*}
        f(\eta)=\frac{J(\vec{\pi}_\eta)-J(\vec{\pi})}{\eta}.
    \end{equation*}

    Then, we have
    \begin{align*}
        f(\eta)
        &=\frac{J(\vec{\pi}_\eta)-J(\vec{\pi})}{\eta} \\
        &=\frac{1}{\eta}\E_{s_0\sim\rho_0}\left[\sumi U_\alpha(\nu+V_j^{\vec\pi_\eta}(s_0))- U_\alpha(\nu+V_j^{\vec\pi}(s_0))\right] \\
        &=\sum_{s_0}\rho_0(s_0)\sumi\frac{U_\alpha(\nu+V_j^{\vec\pi_\eta}(s_0))- U_\alpha(\nu+V_j^{\vec\pi}(s_0))}{\eta}
    \end{align*}

    As the state space is finite, we can pass to the limit in the sum. 
    \begin{align*}
        \lim_{\eta\rightarrow0}f(\eta)
        &=\sum_{s_0}\rho_0(s_0)\sumi\lim_{\eta\rightarrow0}\frac{U_\alpha(\nu+V_j^{\vec\pi_\eta}(s_0))- U_\alpha(\nu+V_j^{\vec\pi}(s_0))}{\eta}
    \end{align*}

    We recognize the derivative of a composite function. As $U_\alpha$ is differentiable on $\mathbb R^+$ and $V_j^{\vec\pi}(s_0)$ is Gateaux derivative, the composite $U_\alpha(\nu+V_j^{\vec\pi}(s_0))$ is also Gateaux differentiable and the following statement holds:
    \begin{equation*}
        \mathfrak D_{\vec\pi'}J(\vec\pi)=\lim_{\eta\rightarrow0}f(\eta)=\sum_{s_0}\rho_0(s_0)\sumi U_\alpha'(\nu+V_j^{\vec\pi}(s_0))  \mathfrak D_{\vec\pi'}V_j^{\vec\pi}(s_0)
    \end{equation*}

    Using Lemma \ref{lem:state_value_gateaux}, we have 
    \begin{align*}
        \mathfrak D_{\vec\pi'}J(\vec\pi)
        &=\sum_{s_0}\rho_0(s_0) \sumi U_\alpha'(\nu+V_j^{\vec\pi}(s_0))  \mathfrak D_{\vec\pi'}V_j^{\vec\pi}(s_0)\\
        &=\E_{s_0\sim\rho_0}\left[\sumi \frac{1}{\mathcal V_j^{\vec\pi}(s_0)}\E_{s\sim d_{\vec\pi|s_0},\,\vec a \sim \vec{\pi}'(\cdot|s)}\left[ A_i^{\vec{\pi}}(s, \vec{a}) \right]\right].
    \end{align*}

    Recall we are using contextualized states, thus the initial state $s_0$ is passed to any other state in a trajectory. Then, using the total expectation theorem, we have 
    \begin{equation*}
        \mathfrak D_{\vec\pi'}J(\vec\pi)
        = \E_{s\sim d_{\vec\pi},\,\vec a \sim \vec{\pi}'(\cdot|s)}\left[ \sumi \frac{1}{\mathcal V_j^{\vec\pi}(s_0)} A_i^{\vec{\pi}}(s, \vec{a}) \right]=\mathcal L_{\vec\pi}(\vec\pi').
    \end{equation*}
\end{proof}

\subsection{Proof of Theorem \ref{thm:convergence}}
\label{sec:proof_theorem}

To prove Theorem \ref{thm:convergence}, we need the following lemmas.
\begin{lemma}
    \label{lem:expected_local_advantage_null}
    For any subset $I_m$, any joint policies $\vec\pi$ and $\vec\pi'$
    \begin{equation}
        \E_{s\sim d_{\vec\pi},\, \vec{a}^{I_{m-1}}\sim\vec\pi'^{I_{m-1}},\, a^{i_m}\sim\pi^{i_m}}\left[A_F^{\vec \pi,\, i_m}(s, \vec a^{I_{m-1}}, a^{i_m}) \right] = 0
    \end{equation}
\end{lemma}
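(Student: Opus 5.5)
The plan is to reduce the claim to a pointwise identity: for every fixed state $s$ and every fixed $\vec a^{I_{m-1}}$, the inner expectation over $a^{i_m}\sim\pi^{i_m}(\cdot|s)$ already vanishes. The outer expectations over $s\sim d_{\vec\pi}$ and $\vec a^{I_{m-1}}\sim\vec\pi'^{I_{m-1}}(\cdot|s)$ then vanish trivially. Since the state and action spaces are finite, all expectations are finite sums, so the tower property applies without integrability issues.

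First, I will use linearity to pull out the weights. By definition,
\begin{equation*}
A_F^{\vec\pi,\,i_m}(s,\vec a^{I_{m-1}},a^{i_m})=\sumi[j]\frac{1}{\mathcal V_j^{\vec\pi}(s)}A_j^{\vec\pi,\,i_m}(s,\vec a^{I_{m-1}},a^{i_m}).
\end{equation*}
The weights $1/\mathcal V_j^{\vec\pi}(s)$ depend only on $s$ (through its context component), not on any action, so they can be taken outside the expectation over $a^{i_m}$. It therefore suffices to show, for each $j$,
\begin{equation*}
\E_{a^{i_m}\sim\pi^{i_m}(\cdot|s)}\left[A_j^{\vec\pi,\,i_m}(s,\vec a^{I_{m-1}},a^{i_m})\right]=0 .
\end{equation*}

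Second, I will expand the advantage as $Q_j^{\vec\pi,\,I_m}(s,\vec a^{I_{m-1}},a^{i_m})-Q_j^{\vec\pi,\,I_{m-1}}(s,\vec a^{I_{m-1}})$. The subtracted term does not depend on $a^{i_m}$. For the first term, $Q_j^{\vec\pi,\,I_m}$ is the expectation of $Q_j^{\vec\pi}$ over $\vec a^{-I_m}\sim\vec\pi^{-I_m}$. Because the joint policy factorizes as $\vec\pi(\cdot|s)=\prod_i\pi^i(\cdot|s)$, averaging further over $a^{i_m}\sim\pi^{i_m}$ gives the expectation of $Q_j^{\vec\pi}$ over $\vec a^{-I_{m-1}}\sim\vec\pi^{-I_{m-1}}$, which is exactly $Q_j^{\vec\pi,\,I_{m-1}}(s,\vec a^{I_{m-1}})$. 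The difference is therefore zero. This is the multi-agent analogue of Lemma \ref{lem:expected_advantage_null}.

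The only delicate point is the bookkeeping in this last step. The identity $\E_{a^{i_m}\sim\pi^{i_m}}[Q_j^{\vec\pi,I_m}]=Q_j^{\vec\pi,I_{m-1}}$ relies on the sampling distribution of $a^{i_m}$ being the \emph{current} policy $\pi^{i_m}$, the same one used inside $Q_j^{\vec\pi,\cdot}$, together with the product structure of $\vec\pi$. I will also note explicitly that the preceding agents' actions may be drawn from the different policy $\vec\pi'^{I_{m-1}}$. This is harmless, because the cancellation holds for each fixed $\vec a^{I_{m-1}}$. This also justifies the remark after Definition \ref{def:local_surrogate} that $\mathcal L_{\vec\pi}^{I_m}(\vec\pi'^{I_{m-1}},\pi^{i_m})=0$.
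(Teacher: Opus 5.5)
Your proposal is correct and follows the paper's proof essentially step for step. You pull the state-only weights $1/\mathcal V_j^{\vec\pi}(s)$ out by linearity, write each local advantage as $Q_j^{\vec\pi,I_m}-Q_j^{\vec\pi,I_{m-1}}$, and use the product structure of $\vec\pi$ to collapse the average of $Q_j^{\vec\pi,I_m}$ over $a^{i_m}\sim\pi^{i_m}$ into $Q_j^{\vec\pi,I_{m-1}}$. The paper does the same, except that it performs this cancellation under the outer expectation over $\vec a^{I_{m-1}}\sim\vec\pi'^{I_{m-1}}$, whereas you do it pointwise for each fixed $\vec a^{I_{m-1}}$; this difference is cosmetic.
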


\begin{proof}
    By definition, we have
    \begin{align}
        &\E_{s\sim d_{\vec\pi},\, \vec{a}^{I_{m-1}}\sim\vec\pi'^{I_{m-1}},\, a^{i_m}\sim\pi^{i_m}}\left[A_F^{\vec \pi,\, i_m}(s, \vec a^{I_{m-1}}, a^{i_m}) \right] \\
        &\qquad= \E_{s\sim d_{\vec\pi},\, \vec{a}^{I_{m-1}}\sim\vec\pi'^{I_{m-1}},\, a^{i_m}\sim\pi^{i_m}}\left[\sumi\frac{1}{\calvf(s)} A_j^{\vec \pi,\, i_m}(s, \vec a^{I_{m-1}}, a^{i_m}) \right] \\
        &\qquad= \sumi\E_{s\sim d_{\vec\pi},\, \vec{a}^{I_{m-1}}\sim\vec\pi'^{I_{m-1}},\, a^{i_m}\sim\pi^{i_m}}\left[\frac{1}{\calvf(s)} A_j^{\vec \pi,\, i_m}(s, \vec a^{I_{m-1}}, a^{i_m}) \right] \\
        &\qquad= \sumi\E_{s\sim d_{\vec\pi}}\left[\frac{1}{\calvf(s)}\E_{\vec{a}^{I_{m-1}}\sim\vec\pi'^{I_{m-1}},\, a^{i_m}\sim\pi^{i_m} |s}\left[ A_j^{\vec \pi,\, i_m}(s, \vec a^{I_{m-1}}, a^{i_m}) \right]\right].
    \end{align}

    For any agent $j$, we have
    \begin{align}
        &\E_{\vec{a}^{I_{m-1}}\sim\vec\pi'^{I_{m-1}},\, a^{i_m}\sim\pi^{i_m} |s}\left[ A_j^{\vec \pi,\, i_m}(s, \vec a^{I_{m-1}}, a^{i_m}) \right] \\
        &\qquad= \E_{\vec{a}^{I_{m-1}}\sim\vec\pi'^{I_{m-1}},\, a^{i_m}\sim\pi^{i_m} |s}\left[ Q_j^{\vec \pi,\, I_m}(s, \, \vec{a}^{I_{m-1}}, \, a^{i_m}) - Q_j^{\vec \pi,\, I_{m-1}}(s, \, \vec{a}^{I_{m-1}}) \right] \\
        &\qquad= \E_{\vec{a}^{I_{m-1}}\sim\vec\pi'^{I_{m-1}},\, a^{i_m}\sim\pi^{i_m} |s}\left[ Q_j^{\vec \pi,\, I_m}(s, \, \vec{a}^{I_{m-1}}, \, a^{i_m})\right]\notag\\ 
        &\qquad\qquad\qquad\qquad\qquad- \E_{\vec{a}^{I_{m-1}}\sim\vec\pi'^{I_{m-1}}|s}\left[Q_j^{\vec \pi,\, I_{m-1}}(s, \, \vec{a}^{I_{m-1}}) \right]
    \end{align}

    The first term can be re-written as 
    \begin{align}
        &\E_{\vec{a}^{I_{m-1}}\sim\vec\pi'^{I_{m-1}},\, a^{i_m}\sim\pi^{i_m} |s}\left[ Q_j^{\vec \pi,\, I_m}(s, \, \vec{a}^{I_{m-1}}, \, a^{i_m})\right]\\ 
        &\quad= \E_{\vec{a}^{I_{m-1}}\sim\vec\pi'^{I_{m-1}},\, a^{i_m}\sim\pi^{i_m} |s}\left[\E_{\vec a^{-I_m}\sim \vec \pi_{-I_m}}\left[Q_j^{\vec \pi}(s,\, \vec{a}^{I_{m-1}}, a^{i_m},\, \vec{a}^{-I_m})\right]\right] \\
        &\quad= \E_{\vec{a}^{I_{m-1}}\sim\vec\pi'^{I_{m-1}}|s}\left[\E_{\vec a^{-I_{m-1}}\sim \vec \pi^{-I_{m-1}}}\left[Q_j^{\vec \pi}(s,\, \vec{a}^{I_{m-1}},\, \vec{a}^{-I_{m-1}})\right]\right] \\
        &\quad= \E_{\vec{a}^{I_{m-1}}\sim\vec\pi'^{I_{m-1}}|s}\left[Q_j^{\vec\pi,\, I_{m-1}}(s, \vec a^{I_{m-1}})\right]
    \end{align}

    Putting back into the former equation ends the demonstration:
    \begin{align}
        &\E_{\vec{a}^{I_{m-1}}\sim\vec\pi'^{I_{m-1}},\, a^{i_m}\sim\pi^{i_m} |s}\left[ A_j^{\vec \pi,\, i_m}(s, \vec a^{I_{m-1}}, a^{i_m}) \right] \\
        &\quad= \E_{\vec{a}^{I_{m-1}}\sim\vec\pi'^{I_{m-1}},\, a^{i_m}\sim\pi^{i_m} |s}\left[ Q_j^{\vec \pi,\, I_m}(s, \, \vec{a}^{I_{m-1}}, \, a^{i_m})\right]\notag\\
        &\qquad\qquad\qquad\qquad- \E_{\vec{a}^{I_{m-1}}\sim\vec\pi'^{I_{m-1}}|s}\left[Q_j^{\vec \pi,\, I_{m-1}}(s, \, \vec{a}^{I_{m-1}}) \right] \\
        &\quad= \E_{\vec{a}^{I_{m-1}}\sim\vec\pi'^{I_{m-1}}|s}\left[Q_j^{\vec \pi,\, I_{m-1}}(s, \, \vec{a}^{I_{m-1}}) \right] - \E_{\vec{a}^{I_{m-1}}\sim\vec\pi'^{I_{m-1}}|s}\left[Q_j^{\vec \pi,\, I_{m-1}}(s, \, \vec{a}^{I_{m-1}}) \right] \\
        &\quad =0
    \end{align}
\end{proof}

\convergence*

\begin{proof}
    We will prove the theorem in three folds.
    
    \textbf{I. Monotonic Improvement Property}
    
    For any $k\in \mathbb N$ and any permutation $I_n$, suppose $\vec \pi_{k+1}$ is updated by Algorithm \ref{algo:fhatrl}, i.e. 
    \begin{equation}
        \pi_{k+1}^{i_m}=\underset{\pi^{i_m}}{\arg \max}\;\mathcal L_{\vec \pi^i}^{I_m}(\vec{\pi}_{k+1}^{I_{m-1}}, \pi^{i_m}) - C\cdot \dkl{\pi_k^{i_m}}{\pi^ {i_m}}
    \end{equation}
   
    We want to bound the difference $J(\vec \pi_{k+1}) -  J(\vec\pi^i)$:
    \begin{align*}
        J(\vec \pi_{k+1}) -  J(\vec\pi^i)
        &\geq \sumi[l]\mathcal L_{\vec \pi^i}^{I_m}(\vec{\pi}_{k+1}^{I_{m-1}}, \pi_{k+1}^{i_m}) - C\cdot \dkl{\pi_{k}^{i_m}}{\pi_{k+1}^{i_m}} \\
    \end{align*}

     As $\pi_{k+1}^{i_m}$ maximizes the surrogate function, the latter function gets lower evaluation for any other policy $\pi'^{i_m}$. In particular, for $\pi'^{i_m}=\pi_{k}^{i_m}$, we have :
    \begin{align}
        \mathcal L_{\vec \pi^i}^{I_m}(\vec{\pi}_{k+1}^{I_{m-1}}, \pi_{k+1}^{i_m}) - C\cdot \dkl{\pi_k^{i_m}}{\pi_{k+1}^{i_m}}
        &\geq \mathcal L_{\vec \pi^i}^{I_m}(\vec{\pi}_{k+1}^{I_{m-1}}, \pi_{k}^{i_m}) - C\cdot \dkl{\pi_k^{i_m}}{\pi_{k}^{i_m}} \\
        &\geq \mathcal L_{\vec \pi^i}^{I_m}(\vec{\pi}_{k+1}^{I_{m-1}}, \pi_{k}^{i_m})
    \end{align}

    We then use the Lemma \ref{lem:expected_local_advantage_null}:
    \begin{align}
        \mathcal L_{\vec \pi^i}^{I_m}(\vec{\pi}_{k+1}^{I_{m-1}}, \pi_{k}^{i_m})
        &=\E_{s_\sim d_{\vec\pi^i}, \;\vec{a}^{I_{m-1}}\sim\vec{\pi}_{k+1}^{I_{m-1}}(.|s), \, a^{i_m}\sim\pi_k^{i_m}(.|s)}\left[A^{\vec\pi,\,i_m}\left(s, \vec a ^{I_{m-1}}, a^{i_m}\right)\right] \\
        &=0.
    \end{align}

    Hence, it follows that 
    \begin{align*}
        J(\vec \pi_{k+1}) -  J(\vec\pi^i)
        &\geq \sumi[l]\mathcal L_{\vec \pi^i}^{I_m}(\vec{\pi}_{k+1}^{I_{m-1}}, \pi_{k+1}^{i_m}) - C\cdot \dkl{\pi_{k}^{i_m}}{\pi_{k+1}^{i_m}} \\
        &\geq \sumi[l]\mathcal L_{\vec \pi^i}^{I_m}(\vec{\pi}_{k+1}^{I_{m-1}}, \pi_{k}^{i_m})\\
        &\geq 0.
    \end{align*}

    \textbf{II. Convergence of the fair global objective}
    
    First of all, the sequence $(J(\vec\pi_k))_{k\in\mathbb N}$ converges as, by the Monotonic Improvement property, it is monotonically increasing and upper bounded by $n\cdot U_\alpha\left(\nu+\frac{r_{\max}}{1-\gamma}\right)$. Let us denote the limit by $J_\infty$.

    \textbf{III. Nash Equilibrium of the limit points}
    
    The following proof is adapted from Appendix C.3 of \cite{hatrl}. We update the original notation and substitute their multi-agent advantage function with our fair advantage function. While Steps 1 and 2 follow directly from this substitution, we include the fully adapted steps here for completeness.
    
    \textbf{Step 1 (Stationarity of any limit point).} The policy state $\Pi^i$ is bounded. Therefore, for any sequence of policies $(\vec \pi^i)$, Bolzano-Weierstrass Theeorem states that we can extract a subsequence $(\vec\pi_{k_j})_{j\in\mathbb N}$ that converges to an adherent point $\vec{\pi}_\infty$.
    By continuity of $J$ in $\vec\pi$ (\ref{cor:value_function_continuity}), we have
    \begin{equation}
        J(\vec{\pi}_\infty) = J\left( \lim_{j \to \infty} \vec\pi_{k_j} \right)= \lim_{j \to \infty} J(\vec\pi_{k_j}) = J_\infty.
    \end{equation}

    \begin{definition}[TR-Stationarity, \cite{hatrl}]
        \label{def:tr_stationarity}
        A joint policy $\vec\pi_\circ$ is a trust-region-stationary (TR-stationary) if, for every agent $i$,
        \begin{equation}
            \pi^i_\circ = \arg \max_{\pi^i} \left[ \mathbb{E}_{s \sim d_{\vec{\pi}_\circ},\, a^i \sim \pi^i} \left[ A_F^{\vec{\pi}_\circ,\, i}(s, a^i) \right] - C_{\vec{\pi}_\circ} \mathrm{D}_{KL}^{\max} \left( \pi^i_\circ, \pi^i \right) \right],
        \end{equation}
        where $C_{\vec{\pi}_\circ} = \frac{4nL_\alpha}{(1-\gamma)^2}A_{\max}^2 + \frac{4nA_{\max}\gamma}{\nu^\alpha(1-\gamma)^2 }$, and $A_{\max} = \max_{j, s, \vec{a}} |A_j^{\vec{\pi}_\circ}(s, \vec{a})|$.
    \end{definition}

    The goal now is to establish the TR-stationarity of any limit point joint policy $\vec{\pi}_\infty$. Let $\mathbb{E}_{I_{n}^{0:\infty}} [\cdot]$ denote the expected value operator under the random process $(I_{n}^{0:\infty})$. Let also $A_k = \max_{j, \,s, \,\vec{a}} |A_j^{\vec{\pi}_k}(s, \vec{a})|$, and $C_{\vec{\pi}_k} = \frac{4nL_\alpha}{(1-\gamma)^2}A_{k}^2 + \frac{4nA_{k}\gamma}{\nu^\alpha(1-\gamma)^2 }$. We have
    \begin{align*}
        0 &= \lim_{k \to \infty} \mathbb{E}_{I_{n}^{0:\infty}} [J(\vec{\pi}_{k+1}) - J(\vec{\pi}_k)] \\
        &\ge \lim_{k \to \infty} \mathbb{E}_{I_{n}^{0:\infty}} [\mathcal L_{\vec{\pi}_k}(\vec{\pi}_{k+1}) - C_k \mathrm{D}_{KL}^{\max}(\vec{\pi}_k, \vec{\pi}_{k+1})] \text{ by Lemma \textcolor{red}{\ref{lem:fair_surrogate}}} \\
        &\ge \lim_{k \to \infty} \mathbb{E}_{I_{n}^{0:\infty}} \left[ \mathcal L_{\vec{\pi}_k}^{i^k_1} \left( \pi_{k+1}^{i^k_1} \right) - C_k \mathrm{D}_{KL}^{\max} \left( \pi_k^{i^k_1}, \pi_{k+1}^{i^k_1} \right) \right],
    \end{align*}
    by Lemma \textcolor{red}{\ref{lem:surrogate_decomp}} and removing the all the summands except $i_1^k$ (because the summands are all positive). 
    
    Now, we consider an arbitrary limit point $\vec\pi_\infty$ from the adherent set, and a subsequence $(\vec\pi_{k_j})_{j=0}^{\infty}$ that converges to $\vec{\pi}_\infty$. 
    
    On one hand, we have from above
    
    \begin{equation}
        0 \ge \lim_{j \to \infty} \mathbb{E}_{I_{n}^{0:\infty}} \left[
    \mathcal L^{i^{k_j}_1}_{\vec\pi_{k_j}} \left( \pi^{i^{k_j}_1}_{k_j+1} \right)
    - C_{k_j} D_{\mathrm{KL}}^{\max} \left( \pi^{i^{k_j}_1}_{k_j}, \pi^{i^{k_j}_1}_{k_j+1} \right)
    \right].
    \end{equation}
    
    On the other hand, let denote by $p_i$ the probability to have $i_{k_j} = i$ under the assumption that every permutation has strictly positive probability to happen. As the expectation is taken of positive random variables, we also have

    \begin{align*}
        &\lim_{j \to \infty} \mathbb{E}_{I_{n}^{0:\infty}} \left[\mathcal L^{i^{k_j}_1}_{\vec\pi_{k_j}} \left( \pi^{i^{k_j}_1}_{k_j+1} \right)- C_{k_j} D_{\mathrm{KL}}^{\max} \left( \pi^{i^{k_j}_1}_{k_j}, \pi^{i^{k_j}_1}_{k_j+1} \right)\right]\\
        & \qquad\qquad\qquad\qquad\qquad\qquad\qquad\geq 
        p_i \lim_{j \to \infty} \max_{\pi^i} \left[\mathcal L^{i}_{\vec\pi_{k_j}} \left( \pi^i \right) - C_{k_j} D_{\mathrm{KL}}^{\max} \left( \pi^i_{k_j}, \pi^i \right)\right] \\
        & \qquad\qquad\qquad\qquad\qquad\qquad\qquad\geq 
        p_i \max_{\pi^i} \left[\mathcal L^{i}_{\vec{\pi}_\infty} \left( \pi^i \right) - C_{\vec{\pi}_\infty} D_{\mathrm{KL}}^{\max} \left( \vec{\pi}^i_\infty, \pi^i \right)\right] \\
        & \qquad\qquad\qquad\qquad\qquad\qquad\qquad\geq 0 \text{ by Lemma \ref{lem:expected_local_advantage_null}.}
    \end{align*}

    $\mathcal L^{i}_{\vec\pi_{k_j}}(\pi^i)$ converges as $\mathcal L^{i}_{\vec\pi}(\pi^i)$ is continuous with respect to $\vec\pi$ (by Definition \ref{def:local_surrogate} and Lemma \ref{lem:continuity_expected_fair_adv}). $C_{k_j}$ converges as it is continuous with respect to $\vec\pi$ (Corollary \ref{cor:value_function_continuity}). $D_{\mathrm{KL}}^{\max}$ convergence follows from from continuity of $D_{\mathrm{KL}}$ with Assumption \ref{assumption} and continuity of $\max$ over finite state space.

    Therefore, we have 
    \begin{equation}
        \max_{\pi^i} \left[\mathcal L^{i}_{\vec{\pi}_\infty}(\pi^i)- C_{\vec{\pi}_\infty} D_{\mathrm{KL}}^{\max} \left( \pi^i_\infty, \pi^i \right)\right] = 0,
    \end{equation}
    which proves $\vec\pi_\infty$ is TR-stationary. 
    
    \textbf{Step 2 (dropping the penalty term).} In this second step, the goal is to prove that a TR-stationary joint policy $\bar{\pi}$ satisfies 
    \begin{equation}
        \label{eq:arg_max}
        \pi_\infty^i = \arg\max_{\pi^i} \mathbb{E}_{a^i \sim \pi^i} \left[ A_F^{\vec{\pi}_\infty, \, i} (s, a^i) \right],
    \end{equation}
    for every state $s \in \mathcal S$.
    
    We will prove the statement by contradiction. Hence, suppose that there exists a state $s_0$ and a policy $\hat{\pi}^i$ such that
    \begin{equation}
        \label{eq:contradiction_hypothesis}
        \mathbb{E}_{a^i \sim \hat{\pi}^i} \left[ A_F^{\vec{\pi}_\infty,\, i} (s_0, a^i) \right] > \mathbb{E}_{a^i \sim \pi^i_\infty} \left[ A_F^{\vec{\pi}_\infty,\, i} (s_0, a^i) \right].
    \end{equation}

    Parameterize the policy $\pi^i(\cdot \mid s_0)$ as a probability vector:
    \begin{equation}
        \pi^i(\cdot \mid s_0) = \left( x_1^i, \ldots, x_{d^i-1}^i, 1 - \sum_{j=1}^{d^i-1} x_j^i \right),
    \end{equation}
    where $x_j^i$ ensures a valid distribution and $d^i=|\mathcal A_i|$. The expected fair advantage in Equation \ref{eq:arg_max} can be rewritten as:
    \begin{align}
        \mathbb{E}_{a^i \sim \pi^i} \left[A_F^{\vec{\pi}_\infty,\, i}(s_0, a^i) \right] 
        &= \sum_{j=1}^{d^i-1} x_j^i \cdot A_F^{\vec{\pi}_\infty,\, i} (s_0, a_j^i) + \left( 1 - \sum_{h=1}^{d^i-1} x_h^i \right) A_F^{\vec{\pi}_\infty,\, i} (s_0, a_{d^i}^i) \\
        &= \sum_{j=1}^{d^i-1} x_j^i \left[ A_F^{\vec{\pi}_\infty,\, i}(s_0, a_j^i) - A_F^{\vec{\pi}_\infty,\, i} (s_0, a_{d^i}^i) \right] + A_F^{\vec{\pi}_\infty,\, i}(s_0, a_{d^i}^i).
    \end{align}
    
    Because this expectation is an affine function of $x^i$, its gradients and directional derivatives are constant. The existence of $\hat{\pi}^i(\cdot \mid s_0)$ satisfying Inequality \ref{eq:contradiction_hypothesis} guarantees a strictly positive directional derivative from $\bar{\pi}^i(\cdot \mid s_0)$ toward $\hat{\pi}^i(\cdot \mid s_0)$. 
    
    Additionally, the KL divergence gradient vanishes at the reference policy $\pi_\infty^i$:
    \begin{align}
        \frac{\partial D_{\mathrm{KL}}(\pi_\infty^i(\cdot \mid s_0), \pi^i(\cdot \mid s_0))}{\partial x_j^i} 
        &= \frac{\partial}{\partial x_j^i} \left[ (\pi_\infty^i(\cdot \mid s_0))^T \big( \log \pi_\infty^i(\cdot \mid s_0) - \log \pi^i(\cdot \mid s_0) \big) \right] \\
        &= - \frac{\partial}{\partial x_j^i} \left[ (\pi_\infty^i)^T \log \pi^i \right] \quad \text{(omitting state $s_0$ for brevity)} \\
        &= - \frac{\partial}{\partial x_j^i} \sum_{k=1}^{d^i-1} \pi_{\infty, k}^i \log x_k^i \;-\; \frac{\partial}{\partial x_j^i} \pi_{\infty,\,d^i}^i \log \left( 1 - \sum_{k=1}^{d^i-1} x_k^i \right) \\
        &= - \frac{\pi_{\infty,\,j}^i}{x_j^i} + \frac{\pi_{\infty,\,d^i}^i}{1 - \sum_{k=1}^{d^i-1} x_k^i} \\
        &= - \frac{\pi_{\infty,\,j}^i}{\pi_j^i} + \frac{\pi_{\infty,\,d^i}^i}{\pi_{d^i}^i} = 0, \quad \text{when } \pi^i = \pi_\infty^i.
    \end{align}
    
    Consequently, when evaluated at $\pi^i(\cdot \mid s_0) = \pi_\infty^i(\cdot \mid s_0)$, the full penalized objective:
    \begin{equation}
        d_{\vec\pi_\infty}(s_0)\,\mathbb{E}_{a^i \sim \pi^i} \left[ A_F^{\vec{\pi}_\infty,\, i} (s_0, a^i) \right] - C_{\vec{\pi}_\infty} D_{\mathrm{KL}} \big( \pi_\infty^i(\cdot \mid s_0), \pi^i(\cdot \mid s_0) \big)
    \end{equation}
    has a strictly positive directional derivative toward $\hat{\pi}^i(\cdot \mid s_0)$. This implies we can find a local policy $\tilde{\pi}^i(\cdot \mid s_0)$ on the path toward $\hat{\pi}^i(\cdot \mid s_0)$ that strictly improves the objective:
    \begin{equation}
    d_{\vec\pi_\infty}(s_0)\,\mathbb{E}_{a^i \sim \tilde{\pi}^i} \left[ A_F^{\vec{\pi}_\infty,\, i} (s_0, a^i) \right] 
    - C_{\vec{\pi}_\infty} D_{\mathrm{KL}} \big( \pi_\infty^i(\cdot \mid s_0), \tilde{\pi}^i(\cdot \mid s_0) \big) 
    > 0.
    \end{equation}
    
    Now, construct a global policy $\pi_*^i$ such that $\pi_*^i(\cdot \mid s_0) = \tilde{\pi}^i(\cdot \mid s_0)$ and $\pi_*^i(\cdot \mid s) = \pi_\infty^i(\cdot \mid s)$ for all $s \neq s_0$. For unperturbed states ($s \neq s_0$), both the advantage and KL divergence evaluate to zero:
    \begin{align}
        d_{\vec{\pi}_\infty}(s)\,\mathbb{E}_{a^i \sim \pi_*^i} \left[ A_F^{\vec{\pi}_\infty,\, i} (s, a^i) \right] 
        &= d_{\vec{\pi}_\infty}(s)\,\mathbb{E}_{a^i \sim \pi_\infty^i} \left[ A_F^{\vec{\pi}_\infty,\, i}(s, a^i) \right] = 0, \notag\\
        D_{\mathrm{KL}} \big( \pi_\infty^i(\cdot \mid s), \pi_*^i(\cdot \mid s) \big) &= 0.
    \end{align}
    
    Applying this to the full objective yields:
    \begin{align}
        \mathcal L^i_{\vec{\pi}_\infty}(\pi_*^i) - C_{\vec{\pi}_\infty} D_{\mathrm{KL}}^{\max}(\pi_\infty^{i}, \pi_*^i)
        &= d_{\vec{\pi}_\infty}(s_0)\,\mathbb{E}_{a^i \sim \tilde{\pi}^i} \left[ A_F^{\vec{\pi}_\infty,\, i} (s_0, a^i) \right] \notag\\
        &\qquad - C_{\vec{\pi}_\infty} D_{\mathrm{KL}} \big( \pi_\infty^i(\cdot \mid s_0), \tilde{\pi}^i(\cdot \mid s_0) \big) \\
        &> 0 \\
        &= \mathcal L^i_{\vec{\pi}_\infty}(\pi_\infty^i) - C_{\vec{\pi}_\infty} D_{\mathrm{KL}}^{\max}(\pi_\infty^{i}, \pi_\infty^i).
    \end{align}
    
    This strict improvement directly contradicts the assumption of TR-stationarity for $\bar{\pi}$, thereby proving the claim.

    \textbf{Step 3 (optimality).} Now, for a fixed joint policy $\vec{\pi}_\infty^{-i}$ of other agents, $\pi_\infty^i$ satisfies 
    \begin{equation}
        \pi_\infty^i = \arg\max_{\pi^i} \mathbb{E}_{a^i \sim \pi^i} \left[ A^{\vec{\pi}_\infty, \, i} (s, a^i) \right].
    \end{equation}

    Lemma \ref{lem:fair_surrogate} gives an upper bound of the performance difference. Then, for a given TR-stationary point $\vec\pi_\infty$ and any other policy $\pi^i$ for agent $i$, we have  
    \begin{equation*}
        J(\pi^i,\vec\pi_\infty^{-i})-J(\vec\pi_\infty) \leq \mathcal L _{\vec\pi_\infty}(\pi^i,\vec\pi_\infty^{-i}),
    \end{equation*}
    which by definition of $\mathcal L_{\vec\pi_\infty}$ is equivalent to 
    \begin{align}
        J(\pi^i,\vec\pi_\infty^{-i})-J(\vec\pi_\infty) 
        &\leq \mathbb{E}_{s\sim d_{\vec\pi_\infty}, \; a^i\sim\pi^i(\cdot|s),\, \vec{a}^{-i}\sim\vec{\pi}^{-i}_\infty(\cdot|s)}\left[A_{F}^{\vec\pi_\infty}(s,a^i,\vec a^{-i}) \right] \\
        &=\mathbb{E}_{s\sim d_{\vec\pi_\infty}}\left[\E_{a^i\sim\pi^i(\cdot|s)}\left[A_{F}^{\vec\pi_\infty, \,i}(s,a^i) \right] \right]\\
        &\leq \mathbb{E}_{s\sim d_{\vec\pi_\infty}}\left[\E_{a^i\sim\pi^i_\infty(\cdot|s)}\left[A_{F}^{\vec\pi_\infty, \,i}(s,a^i) \right] \right] \text{ (by definition  of $\pi_\infty^i$)} \\
        &=0\qquad\qquad\qquad\qquad\qquad\qquad\qquad\quad\; \text{ (by Lemma \ref{lem:expected_local_advantage_null})}
    \end{align}

    Thus, we have 
    \begin{equation}
        J(\pi^i,\vec\pi_\infty^{-i})\leq J(\vec\pi_\infty).
    \end{equation}

    As this statement holds for any arbitrary agent $i$, this proves $\vec\pi_\infty$ is a Nash equilibrium.
\end{proof}

\section{Pratical Algorithms}
\subsection{$\alpha$-fair HATRPO}
The derivation of $\alpha$-fair HATRPO relies on an estimator $\hat{g}_{k}^{i_{m}}$ of the gradient of the local objective $\E_{s_\sim d_{\vec\theta_k}, \;\vec{a}^{I_{m-1}}\sim\vec{\theta}_{k+1}^{I_{m-1}}(.|s), \, a^{i_m}\sim\theta^{i_m}(.|s)}\left[A_F^{\vec\theta_k,\,i_m}\left(s, \vec a ^{I_{m-1}}, a^{i_m}\right)\right]$.  

In this form, the local objective is hard to differentiate. Thankfully, we can re-write this in a more tractable expression.
\begin{lemma}
    Let $\vec\pi$ be a joint policy, $\vec\pi'^{I_{m-1}}$ be some other joint policy of agents $I_{m-1}$ and $\pi'^{i_m}$ be some other policy of agent $i_m$. Then, the local objective can be rewritten as 
    \begin{align}
        \mathbb{E}_{\vec{a}^{I_{m-1}} \sim \vec{\pi}'^{I_{m-1}}, a^{i_m} \sim \pi'^{i_m}} [A_F^{\vec{\pi},\,i_m}(s,& \vec{a}^{I_{m-1}}, a^{i_m})] = \notag \\
        &\mathbb{E}_{\vec{a} \sim \vec{\pi}} \left[\left(\frac{\pi'^{i_m}({a}^{i_m}|s)}{\pi^{i_m}({a}^{i_m}|s)} - 1\right)\frac{\vec{\pi}'^{I_{m-1}}(\vec{a}^{I_{m-1}}|s)}{\vec\pi^{I_{m-1}}(\vec{a}^{I_{m-1}}|s)}A_F^{\vec{\pi}}(s, \vec{a})\right].
    \end{align}
\end{lemma}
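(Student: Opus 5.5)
The identity is a HAPPO-style importance-sampling rewrite carried out state by state. The plan is to reduce it, by linearity, to the corresponding statement for each individual $Q_j$, and then to pass from $Q_j$ to $A_j$ using the fact that a likelihood ratio has mean one.

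\textbf{Step 1: reduce to a single agent $j$.} Fix a contextualized state $s$. The weights $1/\mathcal V_j^{\vec\pi}(s)$ depend only on $s$ (through the context $\tilde c$), not on the actions. By the definition of $A_F^{\vec\pi,\,i_m}$ and of $A_F^{\vec\pi}$, both sides are therefore $\sumi[j]\frac{1}{\mathcal V_j^{\vec\pi}(s)}(\cdot)$ of the corresponding quantities for agent $j$. It thus suffices to prove, for each $j$,
\begin{equation*}
\E_{\vec a^{I_{m-1}}\sim\vec\pi'^{I_{m-1}},\,a^{i_m}\sim\pi'^{i_m}}\left[A_j^{\vec\pi,\,i_m}(s,\vec a^{I_{m-1}},a^{i_m})\right]=\E_{\vec a\sim\vec\pi}\left[(\rho^{i_m}-1)\,\rho^{I_{m-1}}A_j^{\vec\pi}(s,\vec a)\right],
\end{equation*}
where $\rho^{i_m}=\pi'^{i_m}(a^{i_m}|s)/\pi^{i_m}(a^{i_m}|s)$ and $\rho^{I_{m-1}}=\vec\pi'^{I_{m-1}}(\vec a^{I_{m-1}}|s)/\vec\pi^{I_{m-1}}(\vec a^{I_{m-1}}|s)$. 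Under Assumption \ref{assumption} all denominators are at least $\eta^{m}>0$, so these ratios are well defined.

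\textbf{Step 2: importance-sample each $Q$ term.} Write $A_j^{\vec\pi,\,i_m}(s,\vec a^{I_{m-1}},a^{i_m})=Q_j^{\vec\pi,\,I_m}(s,\vec a^{I_m})-Q_j^{\vec\pi,\,I_{m-1}}(s,\vec a^{I_{m-1}})$, and recall that $Q_j^{\vec\pi,\,I_m}$ is $Q_j^{\vec\pi}$ averaged over $\vec a^{-I_m}\sim\vec\pi^{-I_m}$.

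For the first term, sampling $\vec a^{I_m}$ from $\vec\pi'^{I_{m-1}}\times\pi'^{i_m}$ and $\vec a^{-I_m}$ from $\vec\pi^{-I_m}$ is the same as sampling $\vec a\sim\vec\pi$ and reweighting by $\rho^{i_m}\rho^{I_{m-1}}$. This uses the product structure $\vec\pi(\vec a|s)=\prod_i\pi^i(a^i|s)$ and finiteness of the action sets, so it is a direct sum manipulation. The first term therefore equals $\E_{\vec a\sim\vec\pi}[\rho^{i_m}\rho^{I_{m-1}}Q_j^{\vec\pi}(s,\vec a)]$. In the same way, the second term equals $\E_{\vec a\sim\vec\pi}[\rho^{I_{m-1}}Q_j^{\vec\pi}(s,\vec a)]$. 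Subtracting gives the identity with $Q_j^{\vec\pi}$ in place of $A_j^{\vec\pi}$.

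\textbf{Step 3: replace $Q_j$ by $A_j$.} Since $A_j^{\vec\pi}=Q_j^{\vec\pi}-V_j^{\vec\pi}(s)$, it remains to show that $\E_{\vec a\sim\vec\pi}[(\rho^{i_m}-1)\rho^{I_{m-1}}]\,V_j^{\vec\pi}(s)=0$. Under $\vec\pi$ the action $a^{i_m}$ is independent of $\vec a^{I_{m-1}}$, so the expectation factorizes as $\E_{a^{i_m}\sim\pi^{i_m}}[\rho^{i_m}-1]\cdot\E[\rho^{I_{m-1}}]$. The first factor is $\sum_{a}\pi'^{i_m}(a|s)-1=0$.

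Summing back over $j$ with the weights $1/\mathcal V_j^{\vec\pi}(s)$ yields the lemma. I expect no real obstacle. The only points needing care are keeping the product/independence structure explicit in Step 2 (the agents in $-I_m$ are still drawn from $\vec\pi$ on both sides) and noting that the fair weights are constant in the actions, which is exactly what the contextualized state provides.
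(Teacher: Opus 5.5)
Your proof is correct and follows the paper's route: the paper defers to HATRL's Proposition~2, whose proof is exactly your Steps~2--3 (importance-sample the two $Q$ terms against $\vec\pi$, then replace $Q$ by $A$ because $\E_{a^{i_m}\sim\pi^{i_m}}[\rho^{i_m}-1]=0$ and $a^{i_m}$ is independent of $\vec a^{I_{m-1}}$ under $\vec\pi$). Your Step~1, the reduction by linearity to each agent $j$ using that the weights $1/\mathcal V_j^{\vec\pi}(s)$ do not depend on the actions, is the precise reason that argument transfers to the fair setting, more directly than the decomposition Lemma~\ref{lem:advantage_decomp} the paper cites. Your appeal to Assumption~\ref{assumption} to make the ratios well defined is a detail the paper leaves implicit.
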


\begin{proof}
    See \cite{hatrl}, Appendix D.1 (Proposition 2). The proof is similar by replacing the multi-agent advantage function with our fair advantage function. Because both advantage functions enjoy the same decomposition property (Lemma \ref{lem:advantage_decomp}), the proof holds for our fair advantage function.
\end{proof}

Let us introduce $M_F^{I_m} = \frac{\vec{\pi}'^{I_{m-1}}(\vec{a}^{I_{m-1}} | s)}{\vec{\pi}^{I_{m-1}}(\vec{a}^{I_{m-1}} | s)} A_F^{\vec\theta_k}(s, \vec{a})$.
It follows that the derivation of the gradient estimator for HATRPO holds for $\alpha$-fair HATRPO by replacing the multi-agent advantage function with our fair advantage function:

\begin{align}
&\nabla_{\theta^{i_m}} \mathbb{E}_{s \sim d_{\vec\theta_k}, \vec{a} \sim \vec\theta_k} \left[ \left( \frac{\pi_{\theta^{i_m}}^{i_m}(a^{i_m}|s)}{\pi_{\theta_k^{i_m}}^{i_m}(a^{i_m}|s)} - 1 \right) M_F^{I_{m}}(s, \vec{a}) \right] \\
&\qquad\qquad= \mathbb{E}_{s \sim d_{\vec\theta_k}, a \sim \vec\theta_k} \left[ \frac{\pi_{\theta^{i_m}}^{i_m}(a^{i_m}|s)}{\pi_{\theta_k^{i_m}}^{i_m}(a^{i_m}|s)} \nabla_{\theta^{i_m}} \log \pi_{\theta^{i_m}}^{i_m}(a^{i_m}|s) M_F^{i_{1:m}}(s, \vec{a}) \right].
\end{align}
See Appendix D.2 of \cite{hatrl} for more details.

\begin{algorithm}[htbp]
\caption{$\alpha$-FHATRPO}
\label{alg:hatrpo}
\begin{algorithmic}[1]
    \State \textbf{Input:} Stepsize $\omega$, batch size $B$, number of: agents $n$, episodes $K$, steps per episode $T$, constant $\nu$, fairness level $\alpha$, possible steps in line search $L$, line search acceptance threshold $\iota$.
    \State \textbf{Initialize:} Actor networks $\{\theta_0^i, \ \forall i \in \mathcal{N}\}$, individual V-value networks $\{\phi^i_0,  \ \forall i \in \mathcal{N}\}$, Replay buffer $\mathcal{B}$
    \For{$k = 0, 1, \dots, K - 1$}
        \State Collect a set of trajectories by running the joint policy $\boldsymbol{\pi}_{\boldsymbol{\theta}_k} = (\pi^1_{\theta^1_k}, \dots, \pi^n_{\theta^n_k})$.
        \State Sample a random minibatch of $B$ transitions from $\mathcal{B}$.
        \State Compute individual state value function $\hat{V}_j(s)$ based on individual critics network.
        \State Compute individual advantage function $\hat{A}_j(s, \vec{a})$ with GAE.
        \State Compute fair advantage function $\hat{A}_F(s, \vec{a})$ with $\alpha$, $\nu$, $\hat A_j(s,\vec a)$ and $\hat V_j(s)$.
        \State Draw a random permutation of agents $I_{n}$.
        \State Set $M_F^{i_1}(\mathbf{s}, \vec{a}) = \hat{A}_F(\mathbf{s}, \vec{a})$.
        \For{\textbf{agent} $i_m = i_1, \dots, i_n$}
            \State Estimate the gradient of the agent's maximisation objective
            \[
                \hat{\boldsymbol{g}}_{k}^{i_m} = \frac{1}{B} \sum_{b=1}^{B} \sum_{t=1}^{T} \nabla_{\theta_{k}^{i_m}} \log \pi_{\theta_{k}^{i_m}}^{i_m} (a_t^{i_m} \mid o_t^{i_m}) M_F^{i_{1:m}}(s_t, \boldsymbol{a}_t).
            \]
            \Statex \qquad Use the conjugate gradient algorithm to compute the update direction
            \[
                \hat{\boldsymbol{x}}_{k}^{i_m} \approx (\hat{\boldsymbol{H}}_{k}^{i_m})^{-1} \hat{\boldsymbol{g}}_{k}^{i_m},
            \]
            \Statex \qquad where $\hat{\boldsymbol{H}}_{k}^{i_m}$ is the Hessian of the average KL-divergence
            \[
                \frac{1}{BT} \sum_{b=1}^{B} \sum_{t=1}^{T} \text{D}_{\text{KL}} \left( \pi_{\theta_k^{i_m}}^{i_m} (\cdot \mid o_t^{i_m}), \pi_{\theta^{i_m}}^{i_m} (\cdot \mid o_t^{i_m}) \right).
            \]
            \State Estimate the maximal step size allowing for meeting the KL-constraint
            \[
                \hat{\beta}_{k}^{i_m} \approx \sqrt{\frac{2\delta}{(\hat{\boldsymbol{x}}_{k}^{i_m})^T \hat{\boldsymbol{H}}_{k}^{i_m} \hat{\boldsymbol{x}}_{k}^{i_m}}}.
            \]
            \State Update agent $i_m$'s policy by
            \[
                \theta_{k+1}^{i_m} = \theta_{k}^{i_m} + \omega^j \hat{\beta}_{k}^{i_m} \hat{\boldsymbol{x}}_{k}^{i_m},
            \]
            \Statex \qquad where $j \in \{0, 1, \dots, L\}$ is the smallest such $j$ which improves the sample loss by at least 
            \Statex \qquad $\iota \alpha^j \hat{\beta}_{k}^{i_m} \hat{\boldsymbol{x}}_{k}^{i_m} \cdot \hat{\boldsymbol{g}}_{k}^{i_m}$, found by the backtracking line search.
            \State Compute $M_F^{i_{1:m+1}}(\mathbf{s}, \vec{a}) = \frac{\pi_{\theta_{k+1}^{i_m}}^{i_m}(a^{i_m} \mid o^{i_m})}{\pi_{\theta_{k}^{i_m}}^{i_m}(a^{i_m} \mid o^{i_m})} M_F^{i_{1:m}}(s_t, \boldsymbol{a}_t)$. \textcolor{teal}{//Unless $m = n$.}
        \EndFor
        \State Update V-value networks by following formula:
        \[
            \phi^i_{k+1} = \arg\min_{\phi^i} \frac{1}{BT} \sum_{b=1}^{B} \sum_{t=0}^{T} \left( V_{\phi^i}(s_t) - \hat{R}_t^i \right)^2, \, \forall i\in \mathcal N.
        \]
    \EndFor
\end{algorithmic}
\end{algorithm}

\subsection{$\alpha$-fair HAPPO}
\begin{algorithm}[H]
\caption{$\alpha$-FHAPPO}
\label{alg:happo}
\begin{algorithmic}[1]
    \State \textbf{Input:} Clipping $\epsilon$, batch size $B$, number of: agents $n$, episodes $K$, steps per episode $T$.
    \State \textbf{Initialize:} Actor networks $\{\theta_0^i, \ \forall i \in \mathcal{N}\}$, individual V-value networks $\{\phi^i_0,  \ \forall i \in \mathcal{N}\}$, Replay buffer $\mathcal{B}$
    \For{$k = 0, 1, \dots, K - 1$}
        \State Collect a set of trajectories by running the joint policy $\boldsymbol{\pi}_{\boldsymbol{\theta}_k} = (\pi^1_{\theta^1_k}, \dots, \pi^n_{\theta^n_k})$.
        \State Push transitions $\{(o^i_t, a^i_t, o^i_{t+1}, r_t), \forall i \in \mathcal{N}, t \in T\}$ into $\mathcal{B}$.
        \State Sample a random minibatch of $B$ transitions from $\mathcal{B}$.
        \State Compute individual state value function $\hat{V}_j(s)$ based on individual critics network.
        \State Compute individual advantage function $\hat{A}_j(s, \vec{a})$ with GAE.
        \State Compute fair advantage function $\hat{A}_F(s, \vec{a})$ with $\alpha$, $\nu$, $\hat A_j(s,\vec a)$ and $\hat V_j(s)$.
        \State Draw a random permutation of agents $i_{1:n}$.
        \State Set $M_F^{i_1}(\mathbf{s}, \vec{a}) = \hat{A}_F(\mathbf{s}, \vec{a})$.
        \For{\textbf{agent} $i_m = i_1, \dots, i_n$}
            \State Update actor $i_m$ with $\theta_{k+1}^{i_m}$, the argmax of the PPO-Clip objective
            \[
                \frac{1}{BT} \sum_{b=1}^{B} \sum_{t=0}^{T} \min \left( \frac{\pi_{\theta^{i_m}}^{i_m} (a_t^{i_m} \mid o_t^{i_m})}{\pi_{\theta_k^{i_m}}^{i_m} (a_t^{i_m} \mid o_t^{i_m})} M_F^{i_{1:m}}(s_t, \boldsymbol{a}_t), \text{clip} \left( \frac{\pi_{\theta^{i_m}}^{i_m} (a_t^{i_m} \mid o_t^{i_m})}{\pi_{\theta_k^{i_m}}^{i_m} (a_t^{i_m} \mid o_t^{i_m})}, 1 \pm \epsilon \right) M_F^{i_{1:m}}(s_t, \boldsymbol{a}_t) \right).
            \]
            \State Compute $M_F^{i_{1:m+1}}(\mathbf{s}, \vec{a}) = \frac{\pi_{\theta_{k+1}^{i_m}}^{i_m}(a^{i_m} \mid o^{i_m})}{\pi_{\theta_{k}^{i_m}}^{i_m}(a^{i_m} \mid o^{i_m})} M_F^{i_{1:m}}(\mathbf{s}, \vec{a})$. \textcolor{teal}{//Unless $m = n$.}
        \EndFor
        \State Update V-value networks by following formula:
        \[
            \phi_{k+1}^i = \arg\min_{\phi^i} \frac{1}{BT} \sum_{b=1}^{B} \sum_{t=0}^{T} \left( V_{\phi^i}(s_t) - \hat{R}_t^i \right)^2
        \]
    \EndFor
\end{algorithmic}
\end{algorithm}

\section{Experiments}
\subsection{Environments' parameters}

In this work, we based our implementation on the environment provided by \href{https://github.com/cooperativex/SocialJax/tree/main}{SocialJax}. 

\begin{table}[H]
\centering
\caption{CleanUp Environment Parameters}
\label{tab:cleanup_params}
\begin{tabular}{ll}
\toprule
\textbf{Parameter} & \textbf{Value / Description} \\
\midrule
Grid Size & $28 \times 19$ \\
Trajectory Length & $500$ steps \\
Max Regrowth Apple Rate & 0.05 \\
Maximum Level of Pollution & $40\%$ \\
Delay Start of Dirt Spawning & 50 \\
Dirt Spawn Probability & 0.5 \\
\bottomrule
\end{tabular}
\end{table}

\begin{table}[H]
\centering
\caption{Harvest Environment Parameters}
\label{tab:harvest_params}
\begin{tabular}{ll}
\toprule
\textbf{Parameter} & \textbf{Value} \\
\midrule
Grid Size & $22 \times 16$ \\
Trajectory Length & $500$\\
Regrowth Apple Rate & Proportional to density \\
\bottomrule
\end{tabular}
\end{table}

\subsection{Hyper-parameters of the different algorithms}
We implement HATRPO and HAPPO from scratch using the pseudo code provided by HATRL \cite{hatrl}. We recover the code of Fair MAPPO from FairMARL\cite{moi} (\href{https://github.com/AkuBrains/altruistic-fair-MARL/}{https://github.com/AkuBrains/altruistic-fair-MARL/}).
In this work, actor and critic networks use similar architecture : CNN followed by MLP. We use the \textbf{tanh} activation layer for hidden layers and the \textbf{softplus} activation function for the final layer of critics to ensure positiveness of the state value estimation.

\begin{table}[H]
\centering
\caption{Hyperparameters for the Adam Optimizer.}
\label{tab:adam_params}
\begin{tabular}{lll}
\toprule
\textbf{Parameter} & \textbf{Default Value} \\
\midrule
Exponential Decay ($\beta_1$) & $0.9$ \\
Exponential Decay ($\beta_2$) & $0.999$ \\
Epsilon ($\epsilon$) & $10^{-8}$\\
Weight Decay ($\lambda$) & $0$  \\
\bottomrule
\end{tabular}
\end{table}

\begin{table}[H]
\centering
\caption{Hyperparameters HATRPO's algorithms (\textbf{Harvest}).}
\label{tab:harvest_hatrpo}
\begin{tabular}{lllll}
\toprule
\textbf{Parameter} & \textbf{HATRPO} & \textbf{0.5-FHATRPO} & \textbf{1-FHATRPO} &\textbf{1.5-FHATRPO} \\
\midrule
$\nu$ & / & $0.1$ & $0.1$ & $0.1$ \\
KL ($\delta$)  & $0.01$ & $0.01$ & $0.01$ & $0.01$ \\
critic learning rate  & $0.0001$ & $0.0001$ & $0.0001$ & $0.0001$ \\
accept ratio & $0.1$ & $0.1$ & $0.1$ & $0.1$ \\
CG iteration & $15$ & $15$  & $15$  & $15$ \\
\bottomrule
\end{tabular}
\end{table}

\begin{table}[H]
\centering
\caption{Hyperparameters HAPPO's algorithms (\textbf{Harvest}).}
\label{tab:harvest_happo}
\begin{tabular}{lllll}
\toprule
\textbf{Parameter} & \textbf{HAPPO} & \textbf{0.5-FHAPPO} & \textbf{1-FHAPPO} &\textbf{1.5-FHAPPO} \\
\midrule
$\nu$ & / & $0.1$ & $0.1$ & $0.1$ \\
Clipping $\epsilon$& $0.05$ & $0.05$ & $0.05$ & $0.05$ \\
actor learning rate  & $0.0003$ & $0.0003$ & $0.0005$ & $0.0005$ \\
critic learning rate  & $0.0001$ & $0.0001$ & $0.0001$ & $0.0001$ \\
Minibatch size & $1000$ & $1000$ & $1000$ & $1000$ \\
\bottomrule
\end{tabular}
\end{table}

\begin{table}[H]
\centering
\caption{Hyperparameters HATRPO's algorithms (\textbf{CleanUp}).}
\label{tab:cleanup_hatrpo}
\begin{tabular}{lllll}
\toprule
\textbf{Parameter} & \textbf{HATRPO} & \textbf{0.5-FHATRPO} & \textbf{1-FHATRPO} &\textbf{1.5-FHATRPO} \\
\midrule
$\nu$ & / & $0.1$ & $0.1$ & $0.1$ \\
KL ($\delta$)  & $0.005$ & $0.005$ & $0.005$ & $0.0005$ \\
critic learning rate  & $0.0001$ & $0.0001$ & $0.0001$ & $0.0001$ \\
accept ratio & $0.1$ & $0.1$ & $0.1$ & $0.1$ \\
CG iteration & $10$ & $10$  & $10$  & $15$ \\
\bottomrule
\end{tabular}
\end{table}

\begin{table}[H]
\centering
\caption{Hyperparameters HAPPO's algorithms (\textbf{CleanUp}).}
\label{tab:cleanup_happo}
\begin{tabular}{lllll}
\toprule
\textbf{Parameter} & \textbf{HAPPO} & \textbf{0.5-FHAPPO} & \textbf{1-FHAPPO} &\textbf{1.5-FHAPPO} \\
\midrule
$\nu$ & / & $0.1$ & $0.1$ & $0.1$ \\
Clipping $\epsilon$& $0.1$ & $0.2$ & $0.1$ & $0.1$ \\
actor learning rate  & $0.0005$ & $0.0005$ & $0.0005$ & $0.0005$ \\
critic learning rate  & $0.0001$ & $0.0001$ & $0.0001$ & $0.0001$ \\
Minibatch size & $1000$ & $1000$ & $1000$ & $1000$ \\
\bottomrule
\end{tabular}
\end{table}

\begin{table}[h!]
\centering
\caption{Hyperparameters for FMAPPO algorithm.}
\label{tab:fmappo}
\begin{tabular}{lll}
\toprule
\textbf{Parameter} & \textbf{CleanUp} & \textbf{Harvest}  \\
\midrule
altruism level ($\alpha$) & $1$ & $1$ \\
Clipping $\epsilon$& $0.1$ & $0.1$ \\
actor learning rate  & $0.0005$ & $0.0005$ \\
critic learning rate  & $0.0005$ & $0.0005$ \\
Minibatch size & $1250$ & $1000$ \\
\bottomrule
\end{tabular}
\end{table}

\vspace{3cm}
\subsection{Additional metrics}
\label{app:add_metrics}

\textbf{CleanUp}

In the CleanUp environment, agents must clean the river and maintain pollution levels below a certain threshold to allow apples to spawn. Consequently, agents must sacrifice short-term rewards for the long-term benefit of cleaning the river. Furthermore, agents can utilize the zap action to teleport zapped agents to a random spawning location. We study this cooperation using two additional metrics: the Total Zap Action (TZA) and the Total Clean Action (TCA). Figure \ref{fig:add_cleanup} presents these additional results. The TZA tends to decrease throughout the training process across all algorithms, stabilizing between $50$ and $100$ actions per trajectory, while the TCA also decreases and stabilizes at approximately $200$ actions per trajectory.

\def\makezapcleanup#1/#2/#3{
    \addplot[name path=#1-upper, draw=none, forget plot] 
        table [col sep=comma, x=TRAINING TIMESTEP, y=#1_MAX] {cleanup_zap.csv};
        
    \addplot[name path=#1-lower, draw=none, forget plot] 
        table [col sep=comma, x=TRAINING TIMESTEP, y=#1_MIN] {cleanup_zap.csv};
        
    \addplot[fill=#2!20, opacity=0.6, forget plot] 
        fill between[of=#1-upper and #1-lower];
        
    \addplot[thick, #2, mark=none] 
        table [col sep=comma, x=TRAINING TIMESTEP, y=#1_MEAN] {cleanup_zap.csv};
        
    \addlegendentry{#3}
}

\def\makecleancleanup#1/#2/#3{
    \addplot[name path=#1-upper, draw=none, forget plot] 
        table [col sep=comma, x=TRAINING TIMESTEP, y=#1_MAX] {cleanup_clean.csv};
        
    \addplot[name path=#1-lower, draw=none, forget plot] 
        table [col sep=comma, x=TRAINING TIMESTEP, y=#1_MIN] {cleanup_clean.csv};
        
    \addplot[fill=#2!20, opacity=0.6, forget plot] 
        fill between[of=#1-upper and #1-lower];
        
    \addplot[thick, #2, mark=none] 
        table [col sep=comma, x=TRAINING TIMESTEP, y=#1_MEAN] {cleanup_clean.csv};
        
    \addlegendentry{#3}
}

\begin{figure}[h!]
    \centering
    
    \resizebox{\textwidth}{!}{
        
        \begin{tikzpicture}
            \begin{groupplot}[
                group style={
                    group size=3 by 2,
                    horizontal sep=1cm, 
                    vertical sep=1.7cm      
                },
                try min ticks=5,
                legend pos=north west,
                legend cell align=left,
                legend image code/.code={
                    \draw[mark repeat=2,mark phase=2, #1] 
                        plot coordinates {(0cm,0cm) (0.15cm,0cm) (0.3cm,0cm)};
                },
                title style={font=\scriptsize},         
                legend style={font=\fontsize{4}{4}\selectfont, row sep=-3pt, fill opacity=0.7,},   
                label style={font=\tiny}, 
                tick label style={font=\fontsize{4}{4}\selectfont},
                tick style={font=\fontsize{1}{4}\selectfont}, 
                width=6cm,  
                height=5cm,
                grid=major
            ]

                \nextgroupplot[title style={text width=3cm, align=center}, title={Total zap HATRPO/FHATRPO}, xlabel={timestep}, ylabel={Zap}]
                \pgfplotsinvokeforeach{HATRPO/red/{HATRPO}, FHATRPO_0.5/violet/{$\alpha$=0.5}, FHATRPO_1/cyan/{$\alpha$=1}, FHATRPO_1.5/magenta/{$\alpha$=1.5}}{
                    \makezapcleanup#1
                }
                
                \nextgroupplot[title style={text width=3cm, align=center}, title={Total zap consumed HAPPO/FHAPPO}, xlabel={timestep}, ylabel={Zap}]
                \pgfplotsinvokeforeach{HAPPO/blue/{HAPPO}, FHAPPO_0.5/orange/{$\alpha$=0.5}, FHAPPO_1/purple/{$\alpha$=1}, FHAPPO_1.5/teal/{$\alpha$=1.5}}{
                    \makezapcleanup#1
                }
                
                \nextgroupplot[title style={text width=3cm, align=center}, title={Total zap consumed for $\alpha=1$}, xlabel={timestep}, ylabel={Zap}]
                \pgfplotsinvokeforeach{FMAPPO/brown/{FMAPPO}, FHATRPO_1/cyan/{FHATRPO}, FHAPPO_1/purple/{FHAPPO}}{
                    \makezapcleanup#1
                }
                
                \nextgroupplot[title style={text width=3cm, align=center}, title={Total Clean actions HATRPO/FHATRPO}, xlabel={timestep}, ylabel={Clean}]
                \pgfplotsinvokeforeach{HATRPO/red/{HATRPO}, FHATRPO_0.5/violet/{$\alpha$=0.5}, FHATRPO_1/cyan/{$\alpha$=1}, FHATRPO_1.5/magenta/{$\alpha$=1.5}}{
                    \makecleancleanup#1
                }
                
                \nextgroupplot[title style={text width=3cm, align=center}, title={Total Clean actions HAPPO/FHAPPO}, xlabel={timestep}, ylabel={Clean}]
                \pgfplotsinvokeforeach{HAPPO/blue/{HAPPO}, FHAPPO_0.5/orange/{$\alpha$=0.5}, FHAPPO_1/purple/{$\alpha$=1}, FHAPPO_1.5/teal/{$\alpha$=1.5}}{
                    \makecleancleanup#1
                }
                
                \nextgroupplot[title style={text width=3cm, align=center}, title={Total Clean actions for $\alpha=1$}, xlabel={timestep}, ylabel={Clean}]
                \pgfplotsinvokeforeach{FMAPPO/brown/{FMAPPO}, FHATRPO_1/cyan/{FHATRPO}, FHAPPO_1/purple/{FHAPPO}}{
                    \makecleancleanup#1
                }
                
            \end{groupplot}
        \end{tikzpicture}
        
    } 
    
    \caption{Additional results on CleanUp. Each line is obtained by averaging its actual value on a rolling window of size 100 and its shaded area corresponds to its minimum and maximum on the same rolling window.}
    \label{fig:add_cleanup}
\end{figure}

\textbf{Harvest}

In the Harvest environment, agents must harvest as many apples as possible while maintaining a sufficient apple population, as apples can only regenerate near existing ones. Similar to the CleanUp environment, agents can utilize the zap action to teleport zapped agents to a random spawning location. We can analyze this cooperation using two additional metrics: the Total Zap Action (TZA) and the Time to Depletion (TD). Figure \ref{fig:add_harvest} summarizes these additional results. The TZA values appear to decrease and approach zero for all algorithms except HAPPO and $0.5\text{-FHAPPO}$. Meanwhile, the TD exhibits an initial drop at the beginning of training—as agents learn to harvest apples efficiently—before increasing and plateauing at $500$, indicating that agents successfully maintain a stable level of apples on the board until the end of each trajectory.

\def\makezapharvest#1/#2/#3{
    \addplot[name path=#1-upper, draw=none, forget plot] 
        table [col sep=comma, x=TRAINING TIMESTEP, y=#1_MAX] {harvest_zap.csv};
        
    \addplot[name path=#1-lower, draw=none, forget plot] 
        table [col sep=comma, x=TRAINING TIMESTEP, y=#1_MIN] {harvest_zap.csv};
        
    \addplot[fill=#2!20, opacity=0.6, forget plot] 
        fill between[of=#1-upper and #1-lower];
        
    \addplot[thick, #2, mark=none] 
        table [col sep=comma, x=TRAINING TIMESTEP, y=#1_MEAN] {harvest_zap.csv};
        
    \addlegendentry{#3}
}

\def\makehitzero#1/#2/#3{
    \addplot[name path=#1-upper, draw=none, forget plot] 
        table [col sep=comma, x=TRAINING TIMESTEP, y=#1_MAX] {harvest_hit_zero.csv};
        
    \addplot[name path=#1-lower, draw=none, forget plot] 
        table [col sep=comma, x=TRAINING TIMESTEP, y=#1_MIN] {harvest_hit_zero.csv};
        
    \addplot[fill=#2!20, opacity=0.6, forget plot] 
        fill between[of=#1-upper and #1-lower];
        
    \addplot[thick, #2, mark=none] 
        table [col sep=comma, x=TRAINING TIMESTEP, y=#1_MEAN] {harvest_hit_zero.csv};
        
    \addlegendentry{#3}
}

\begin{figure}[h!]
    \centering
    
    \resizebox{\textwidth}{!}{
        
        \begin{tikzpicture}
            \begin{groupplot}[
                group style={
                    group size=3 by 2,
                    horizontal sep=1cm, 
                    vertical sep=1.7cm      
                },
                try min ticks=5,
                legend pos=north west,
                legend cell align=left,
                legend image code/.code={
                    \draw[mark repeat=2,mark phase=2, #1] 
                        plot coordinates {(0cm,0cm) (0.15cm,0cm) (0.3cm,0cm)};
                },
                title style={font=\scriptsize},         
                legend style={font=\fontsize{4}{4}\selectfont, row sep=-3pt, fill opacity=0.7,},   
                label style={font=\tiny}, 
                tick label style={font=\fontsize{4}{4}\selectfont},
                tick style={font=\fontsize{1}{4}\selectfont}, 
                width=6cm,  
                height=5cm,
                grid=major
            ]

                \nextgroupplot[title style={text width=3cm, align=center}, title={Total zap actions HATRPO/FHATRPO}, xlabel={timestep}, ylabel={Zap}]
                \pgfplotsinvokeforeach{HATRPO/red/{HATRPO}, FHATRPO_0.5/violet/{$\alpha$=0.5}, FHATRPO_1/cyan/{$\alpha$=1}, FHATRPO_1.5/magenta/{$\alpha$=1.5}}{
                    \makezapharvest#1
                }
                
                \nextgroupplot[title style={text width=3cm, align=center}, title={Total zap actions HAPPO/FHAPPO}, xlabel={timestep}, ylabel={Zap}]
                \pgfplotsinvokeforeach{HAPPO/blue/{HAPPO}, FHAPPO_0.5/orange/{$\alpha$=0.5}, FHAPPO_1/purple/{$\alpha$=1}, FHAPPO_1.5/teal/{$\alpha$=1.5}}{
                    \makezapharvest#1
                }
                
                \nextgroupplot[title style={text width=3cm, align=center}, title={Total zap actions for $\alpha=1$}, xlabel={timestep}, ylabel={Zap}]
                \pgfplotsinvokeforeach{FMAPPO/brown/{FMAPPO}, FHATRPO_1/cyan/{FHATRPO}, FHAPPO_1/purple/{FHAPPO}}{
                    \makezapharvest#1
                }
                
                \nextgroupplot[title style={text width=3cm, align=center}, title={Time to Depletion HATRPO/FHATRPO}, xlabel={timestep}, ylabel={Time to Depletion}]
                \pgfplotsinvokeforeach{HATRPO/red/{HATRPO}, FHATRPO_0.5/violet/{$\alpha$=0.5}, FHATRPO_1/cyan/{$\alpha$=1}, FHATRPO_1.5/magenta/{$\alpha$=1.5}}{
                    \makehitzero#1
                }
                
                \nextgroupplot[title style={text width=3cm, align=center}, title={Time to Depletion HAPPO/FHAPPO}, xlabel={timestep}, ylabel={Time to Depletion}]
                \pgfplotsinvokeforeach{HAPPO/blue/{HAPPO}, FHAPPO_0.5/orange/{$\alpha$=0.5}, FHAPPO_1/purple/{$\alpha$=1}, FHAPPO_1.5/teal/{$\alpha$=1.5}}{
                    \makehitzero#1
                }
                
                \nextgroupplot[title style={text width=3cm, align=center}, title={Time to Depletion for $\alpha=1$}, xlabel={timestep}, ylabel={Time to Depletion}]
                \pgfplotsinvokeforeach{FMAPPO/brown/{FMAPPO}, FHATRPO_1/cyan/{FHATRPO}, FHAPPO_1/purple/{FHAPPO}}{
                    \makehitzero#1
                }
                
            \end{groupplot}
        \end{tikzpicture}
        
    } 
    
    \caption{Additional results on Common Harvest. Each line is obtained by averaging its actual value on a rolling window of size 100 and its shaded area corresponds to its minimum and maximum on the same rolling window.}
    \label{fig:add_harvest}
\end{figure}


\end{document}